\theoremstyle{definition}
\newtheorem{definition}{Definition}
\newtheorem{example}{Example}
\theoremstyle{remark}
\newtheorem{remark}{Remark}
\theoremstyle{plain}
\newtheorem{theorem}{Theorem}
\newtheorem{proposition}{Proposition}
\newcommand\ci{\perp\!\!\!\perp}
\DeclareMathOperator*{\argmax}{argmax}
\newcommand{\X}{X}
\newcommand{\Y}{Y}
\newcommand{\Z}{Z}
\newcommand{\x}{x}
\newcommand{\y}{y}
\newcommand{\z}{z}
\newcommand{\xs}{\text{X}}
\newcommand{\ys}{\text{Y}}
\renewcommand{\d}{\mathrm{d}}
\newcommand{\Prior}{\pi_{\X}}
\newcommand{\Like}{\pi_{\Y|\X}}
\newcommand{\Post}{\pi_{\X|\Y}}
\newcommand{\Joint}{\pi_{\X,\Y}}
\newcommand{\ApproxPost}{\widetilde{\pi}_{\X|\Y}} 
\newcommand{\OptPost}{\pi^*_{\X|\Y}}
\newcommand{\DataMarg}{\pi_{\Y}}
\newcommand{\ApproxDataMarg}{\pi^*_{\Y}}
\newcommand{\Cobs}{\Gamma_{\mathrm{obs}}}
\newcommand{\Cpr}{\Gamma_{\mathrm{pr}}}
\newcommand{\BasisX}{U}
\newcommand{\BasisY}{V}
\newcommand{\KLDiv}{D_{\textrm{KL}}}
\newcommand{\R}{\mathbb{R}}
\newcommand{\E}{\mathbb{E}}
\DeclareMathOperator{\Tr}{Trace}
\newcommand{\Id}{\textrm{I}}
\newcommand{\CCA}{\textrm{CCA}}
\newcommand{\PCA}{\textrm{PCA}}
\newcommand{\Cov}{\mathbb{C}\mathrm{ov}}
\begin{document}

\begin{frontmatter}
\title{Gradient-based data and parameter dimension reduction for Bayesian models: an information theoretic perspective}
\runtitle{Information theoretic data and parameter dimension reduction}

\begin{aug}
\author{\fnms{Ricardo} \snm{Baptista}\thanksref{addr1}\ead[label=e1]{rsb@mit.edu}},
\author{\fnms{Youssef} \snm{Marzouk}\thanksref{addr1}\ead[label=e2]{ymarz@mit.edu}}
\and
\author{\fnms{Olivier} \snm{Zahm}\thanksref{addr2}\ead[label=e3]{olivier.zahm@inria.fr}}

\runauthor{R. Baptista et al.}

\address[addr1]{Massachusetts Institute of Technology\\
       Cambridge, MA 02139-4301, USA
    \printead{e1},
    \printead*{e2}
}

\address[addr2]{Universit\'e Grenoble Alpes, Inria, CNRS, Grenoble INP, LJK \\
    38000 Grenoble, France
    \printead{e3}
}

\end{aug}

\begin{abstract}
We consider the problem of reducing the dimensions of parameters and data in non-Gaussian Bayesian inference problems. Our goal is to identify an ``informed'' subspace of the parameters and an ``informative'' subspace of the data so that a high-dimensional inference problem can be approximately reformulated in low-to-moderate dimensions, thereby 
improving the computational efficiency of many inference techniques.
To do so, we exploit gradient evaluations of the log-likelihood function.
Furthermore, we use an information-theoretic analysis to derive a bound on the posterior error due to parameter and data dimension reduction. This bound relies on logarithmic Sobolev inequalities, and it reveals the appropriate dimensions of the reduced variables. 
We compare our method with classical dimension reduction techniques, such as principal component analysis and canonical correlation analysis, on applications ranging from mechanics to image processing. 
\end{abstract}

\begin{keyword}[class=MSC]
\kwd[Primary ]{62F15} %
\kwd[; secondary ]{39B62} %
\end{keyword}

\begin{keyword}
\kwd{Bayesian inference}
\kwd{gradient-based dimension reduction}
\kwd{logarithmic Sobolev inequalities}
\kwd{conditional mutual information}
\kwd{low-dimensional subspaces}
\kwd{coordinate selection}
\end{keyword}

\end{frontmatter}

\section{Introduction} \label{sec:introduction}
Many statistical problems throughout science and engineering involve inferring parameters $\X$ from observations $\Y$, where both $\X$ and $\Y$ are high-dimensional vectors. These vectors often arise from discretizations of infinite-dimensional signals, as for example in full waveform inversion or medical imaging. In the Bayesian setting, $\X$ and $\Y$ are modeled as random variables, and the goal of computation in Bayesian inference is generally to characterize the posterior distribution, whose density\footnote{Throughout this paper, we consider probability distributions that are absolutely continuous, and thus have densities, with respect to the Lebesgue measure. For simplicity, we will thus use similar notation for distributions and densities unless otherwise indicated.}
is given by Bayes rule as
\begin{equation*}
\Post(\x|\y) \propto \Like(\y|\x) \Prior(\x).
\end{equation*}
Here, $\Prior$ is the prior density of $\X$, $\Like$ is the conditional density of the data $\Y$, and $\x \mapsto \Like(\y \vert \x)$ is the likelihood function for any realized value of the data $\y$. While many sampling-based algorithms have been developed for Bayesian inference, their computational costs typically scale poorly with increasing dimensions of $\X$ and $\Y$~\citep{roberts2001optimal,agapiou2017importance, mangoubi2019nonconvex, chen2020fast}, especially for distributions that do not satisfy strong log-concavity assumptions. Similarly,  %
the costs of variational Bayesian methods~\citep{rezende2015variational, blei2017variational, detommaso2018stein} can scale poorly with dimension, particularly when accurate characterizations of posterior structure are desired.

Dimension reduction has received increasing attention as a way of reducing the computational cost of inference procedures. 
On the one hand, many recently proposed methods seek to reduce the dimension of the parameter $\X$. For instance, in~\citet{cui2014likelihood, zahm2018certified,  constantine2016accelerating, chen2020projected}, the gradient of the log-likelihood function is used---through a variety of different constructions---to identify a so-called \emph{likelihood informed subspace}; this subspace captures parameter directions where the data are most informative relative to the prior.
Projecting $\X$ onto this lower-dimensional subspace can realize
immense computational savings when applying MCMC to complex 
high-dimensional posterior distributions~\citep{cui2016dimension,izmailov2020subspace}. Similar projections have proven useful in variational inference; for instance, \citet{brennan2020greedy} uses these projections to focus the expressiveness of transport maps or normalizing flows on the informed subspace, yielding better posterior approximations.

On the other hand, reducing the dimension of $\Y$ is critical to performing 
inference in settings with high-dimensional data, such as spectra and time-series signals~\citep{ma2013magnetic}. For instance, approximate Bayesian computation (ABC) generates posterior samples by comparing simulated and observed data~\citep{sisson2018handbook}, which becomes increasingly difficult in high dimensions. Recent methods based on conditional density estimation~\citep{bishop1994mixture, papamakarios2016fast} and transportation of measure~\citep{radev2020bayesflow, spantini2019coupling, cui2021conditional} 
simulate from conditional densities by learning functions of both the parameters and the data. Reducing the data dimension in this setting 
can yield substantial computational savings.

Previous work in ABC reduces the data dimension by seeking low-dimensional summary statistics designed to retain information about the parameters~\citep{fearnhead2012constructing}. On the other hand, for conjugate linear--Gaussian models, \citet{giraldi2018optimal} find maximally informative 
\emph{subspaces} of the data, of any given dimension, by solving an eigenvalue problem depending on the likelihood and on the prior covariance.
\citet{trippe2019lr} seek low-dimensional projections of the data for generalized linear models, and these projections are endowed with error guarantees under certain conditions (e.g., strongly log-concave posteriors).
Optimal experimental design can also be seen as a way of reducing the data dimension, by sub-selecting the most important \emph{components} of the random vector $\Y$ \citep{krause2008near,wu2020fast,jagalur2021batch}. It is important to note that all of these dimension reduction methods are applied before the data are realized, and hence do not depend on the observed value of $\Y$. These data summaries or subspaces can thus be re-used for multiple instances of $\Y$. Such approaches differ fundamentally from, e.g., Bayesian coresets~\citep{campbell2019automated}, which summarize a given \emph{realization} of $\Y$ via a smaller weighted subset of the data (assuming, moreover, that elements of $\Y$ are conditionally independent given $X$).

In this work, our goal is to concurrently reduce the dimensions of the parameter and of the data using gradients of the log-likelihood $\nabla_\X \nabla_\Y \log\Like(y|x)\in\R^{m\times d}$, where $d$ and $m$ are the dimensions of $\X$ and $\Y$, respectively. To do so, we compute the eigenvectors of the diagnostic matrices
\begin{align}
    H_\X &= \int \big(\nabla_\X \nabla_\Y \log\Like\big)^T\big(\nabla_\X \nabla_\Y \log\Like\big) \d\Joint  \label{eq:HXfirst} \\
    H_\Y &= \int \big(\nabla_\X \nabla_\Y \log\Like\big) \big(\nabla_\X \nabla_\Y \log\Like\big)^T \d\Joint, \label{eq:HYfirst}
\end{align}
and define the informed parameters $\X_r$ and the informative data $\Y_s$ as projections of $\X$ and $\Y$ onto the first $r\ll d$ and $s\ll m$ components of their respective eigenbases. 
Our definitions of $H_\X$ and $H_\Y$ follow from an information theoretic analysis of the approximation error due to simultaneous dimension reduction of $\X$ and $\Y$.  
Specifically, we bound the expected Kullback--Leibler (KL) divergence from the approximate posterior to the exact posterior as follows:
\begin{equation*}
    \mathbb{E}_\Y \left[\KLDiv(\Post||\OptPost) \right] \leq \overline{C}(\Joint)^2 \left(\sum_{i > r} \lambda_i(H_\X) + \sum_{j > s} \lambda_j(H_\Y) \right),
\end{equation*}
where $\OptPost(x|y) \propto \pi_{\Y_s|\X_r}(\y_s|\x_r)\Prior(\x)$ is the posterior approximation defined using the marginal conditional distribution $\pi_{\Y_s|\X_r}$ of the reduced parameter and reduced data.
In the expression above, $\lambda_i(\cdot)$ denotes the $i$th largest eigenvalue of a matrix and $\overline{C}(\cdot)$ 
is the subspace logarithmic Sobolev constant of a probability density. This constant will be defined and discussed extensively later in the paper.
The derivation of this error bound relies on 
a result in~\citet{baptista2021learning} establishing gradient-based bounds on conditional mutual information.
Fast decay of the eigenvalues $\lambda_i(H_\X)$ and $\lambda_j(H_\Y)$ above ensures small error for low dimensions $r,s$.

For Gaussian error models of the form $\Like(y|x)\propto\exp(-\frac12\|G(x)-y\|^2_2)$, the diagnostic matrices simplify to
$$
 H_\X =\int \nabla G^T\nabla G \d\pi_{\X} 
 \qquad\text{and}\qquad
 H_\Y =\int \nabla G \nabla G^T \d\pi_{\X} ,
$$
where $\nabla G(x)\in\R^{m\times d}$ is the Jacobian of the (nonlinear) forward model $G\colon\R^d\rightarrow\R^m$. 
In this particular case, $H_\X$ is the same diagnostic matrix as one introduced in earlier work \citep{cui2020data,cui2021conditional}. The general form of $H_X$ proposed in \eqref{eq:HXfirst}, however, is much more broadly applicable.
Furthermore, a gradient-based method for data reduction---particularly one offering error guarantees in the general setting tackled here---has not, to the best of our knowledge, been previously proposed. 
The information theoretic error analysis of simultaneous data and parameter reduction that we develop in this paper is new as well.
We also show how this approach generalizes classic dimension reduction techniques, such as canonical correlation analysis.

The remainder of this paper is organized as follows. In Section~\ref{sec:DimensionReduction}, we describe the relation between posterior approximation error and gradients of the log-likelihood function, and we propose several methods to identify optimal variable projections. In
Section~\ref{sec:InformationTheory}, we interpret the approximation error using conditional mutual information. In Section~\ref{sec:GaussianLik}, we specialize our results to Gaussian error models and
discuss connections with related work on linear forward models. %
In Section~\ref{sec:OtherDimRedMethods} we compare our approach to other classic dimension reduction strategies. Section~\ref{sec:Algorithms} describes several inference algorithms that can exploit this joint dimension reduction of parameters and data. Lastly, Section~\ref{sec:NumericalExp} presents numerical experiments for inference problems involving partial differential equations, high-dimensional imaging, and stochastic differential equations.

\section{Reducing parameter and data dimensions} \label{sec:DimensionReduction}
Our joint parameter--data dimension reduction method relies on the detection of conditional independence between blocks of variables. Given two unitary matrices $\BasisX \in \R^{d \times d}$ and $\BasisY \in \R^{m \times m}$, partitioned as column blocks $\BasisX = [\BasisX_r, \BasisX_\perp]$ and $\BasisY = [\BasisY_s, \BasisY_\perp]$ with $\BasisX_r\in\R^{d\times r}$ and $\BasisY_s\in\R^{m\times s}$, we decompose $\X$ as
\begin{equation}\label{eq:Xdecomp}
 \X = \BasisX_r \X_r + \BasisX_\perp \X_{\perp} \qquad\text{where } \left\{\begin{array}{l} \X_r = \BasisX_r^{T}\X \\ \X_\perp = \BasisX_\perp^{T}\X \end{array} \right. ,
\end{equation}
and $\Y$ as
\begin{equation}\label{eq:Ydecomp}
 \Y = \BasisY_s \Y_s ~ + \BasisY_\perp \Y_{\perp} ~\qquad\text{where } \left\{\begin{array}{l} \Y_s = \BasisY_s^{T}\Y \\ \Y_\perp = \BasisY_\perp^{T}\Y \end{array} \right..
\end{equation}
In this decomposition, if $\X_\perp$ is independent of the data $\Y$ after conditioning on $\X_r$---that is, if $\X_\perp \ci \Y \vert \X_r$---then $\X_\perp$ is interpreted as the un-informed component of the parameter.
In the same way, if $\Y_\perp$ is independent of $\X$ after conditioning on $\Y_s$---i.e., $\Y_\perp \ci \X | \Y_s$---then $\Y_\perp$ is interpreted as the non-informative component of the data. 
Under these two conditional independence properties\footnote{In fact, $\X_\perp \ci \Y \vert \X_r$ and $\Y_\perp \ci \X_r | \Y_s$ are sufficient to write the factorization in \eqref{eq:OptPost}.}, the joint probability density function of $\X$ and $\Y$ factorizes as $\Joint(\x,\y) = \pi_{\X_\perp|\X_r}(\x_\perp|\x_r)\pi_{\X_r,\Y_s}(\x_r,\y_s) \pi_{\Y_\perp|\Y_s}(\y_\perp|\y_s)$, so that the posterior satisfies $\Post=\OptPost$ with
\begin{align}
 \OptPost(x|y) 
 &:= \pi_{\X_r|\Y_s}(\x_r|\y_s)  \pi_{\X_\perp|\X_r}(\x_\perp|\x_r). \label{eq:OptPost}  
\end{align}
In other words, the inference problem of characterizing $\X|\Y$ can be transformed into a lower-dimensional inference problem that involves characterizing $\X_r|\Y_s$.
In practice, however, the conditional independence criteria $\X_\perp \ci \Y \vert \X_r$ and $\Y_\perp \ci \X | \Y_s$ might not be exactly satisfied, and so $\Post\neq\OptPost$ in general.
In this case, our goal is to identify the unitary matrices $\BasisX$ and $\BasisY$ and to select the smallest possible (in a sense to be clarified later) effective dimensions $r\leq d$ and $s\leq m$ so that the KL divergence from $\OptPost$ to $\Post$ is controlled in expectation over the data; that is
\begin{equation} \label{eq:PostErrorConstraint}
    \mathbb{E}_{\Y}\left[\KLDiv(\Post(\cdot|\Y)||\OptPost(\cdot|\Y))\right] \leq \epsilon  ,
\end{equation}
for some prescribed tolerance $\epsilon > 0$.
The following proposition shows that, given the parameter and data decompositions \eqref{eq:Xdecomp} and \eqref{eq:Ydecomp}, the posterior approximation $\OptPost$ in \eqref{eq:OptPost} is optimal for the expected KL loss.
\begin{proposition} \label{prop:optimalPosteriorApprox}
 Let $(\X,\Y)\sim\Joint$ be decomposed as in \eqref{eq:Xdecomp} and \eqref{eq:Ydecomp}. Then the posterior approximation $\OptPost$ defined in \eqref{eq:OptPost} satisfies
 \begin{equation} \label{eq:KLDiv_OptPost}
  \mathbb{E}\left[\KLDiv(\Post(\cdot|\Y)||\OptPost(\cdot|\Y))\right] \leq \mathbb{E}\left[\KLDiv(\Post(\cdot|\Y)||\ApproxPost(\cdot|\Y))\right] ,
 \end{equation}
 for any posterior approximation of the form $\ApproxPost(\x|\y)=f_1(\x_r,\y_s)f_2(\x_\perp,\x_r)$ with non-negative functions $f_1,f_2$.
\end{proposition}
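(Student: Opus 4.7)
My plan is to first renormalize the ansatz $\ApproxPost(x|y)=f_1(x_r,y_s)f_2(x_\perp,x_r)$ into a product of genuine conditional densities, then split the expected KL into a term that is constant in the free parameters and two terms, each of which is separately minimized by the claimed choice.

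Concretely, since $\ApproxPost(\cdot|y)$ must integrate to $1$ in $x$, set $F_2(x_r)=\int f_2(x_\perp,x_r)\d x_\perp$ and define
\begin{equation*}
 h_1(x_r|y_s) := f_1(x_r,y_s) F_2(x_r), \qquad h_2(x_\perp|x_r) := \frac{f_2(x_\perp,x_r)}{F_2(x_r)},
\end{equation*}
so that $\ApproxPost(x|y)=h_1(x_r|y_s)\,h_2(x_\perp|x_r)$, where $h_1(\cdot|y_s)$ is a probability density on $\R^r$ (for each $y_s$) and $h_2(\cdot|x_r)$ is a probability density on $\R^{d-r}$ (for each $x_r$). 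This rewriting is purely cosmetic but crucial, because it turns the ``non-negative functions $f_1,f_2$'' into conditional densities that can be compared to $\pi_{\X_r|\Y_s}$ and $\pi_{\X_\perp|\X_r}$ via Gibbs' inequality.

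Next, I would write out
\begin{align*}
 \mathbb{E}_\Y[\KLDiv(\Post(\cdot|\Y)\|\ApproxPost(\cdot|\Y))]
 &= \int \Joint(x,y)\log\pi_{\X|\Y}(x|y)\,\d x\,\d y \\
 &\quad - \int \pi_{\X_r,\Y_s}(x_r,y_s)\log h_1(x_r|y_s)\,\d x_r\,\d y_s \\
 &\quad - \int \pi_{\X_r,\X_\perp}(x_r,x_\perp)\log h_2(x_\perp|x_r)\,\d x_r\,\d x_\perp,
\end{align*}
using that $\int \Joint(x,y)\log h_1(x_r|y_s)\,\d x\,\d y$ only depends on the $(\X_r,\Y_s)$-marginal of $\Joint$, and similarly for the $h_2$ term (which drops $y$ entirely). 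The first integral is independent of $(h_1,h_2)$, so minimizing the expected KL amounts to separately maximizing the two remaining integrals.

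For each of these two integrals I would apply Gibbs' inequality fiberwise. For the $h_1$ term, conditioning on $y_s$ gives
\begin{equation*}
 \int \pi_{\X_r|\Y_s}(x_r|y_s)\log\frac{\pi_{\X_r|\Y_s}(x_r|y_s)}{h_1(x_r|y_s)}\,\d x_r \geq 0,
\end{equation*}
with equality iff $h_1(\cdot|y_s)=\pi_{\X_r|\Y_s}(\cdot|y_s)$; integrating against $\pi_{\Y_s}$ shows the optimal $h_1$ is $\pi_{\X_r|\Y_s}$. The same argument applied conditionally on $x_r$ yields the optimal $h_2=\pi_{\X_\perp|\X_r}$. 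Substituting these back recovers exactly $\OptPost$ from \eqref{eq:OptPost}, which proves \eqref{eq:KLDiv_OptPost}.

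\textbf{Main obstacle.} The argument is essentially bookkeeping; the only subtlety is handling measure-zero sets where $F_2(x_r)$ might vanish (contributing $0$ to $\Joint$ in any case and requiring a standard convention $0\log 0 = 0$). I do not expect any deeper difficulty.
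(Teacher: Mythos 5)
Your proposal is correct and follows essentially the same route as the paper's proof: both renormalize $f_1,f_2$ into genuine conditional densities via $F_2(x_r)=\int f_2(x_\perp,x_r)\,\d x_\perp$ (the paper's $f_0$, $\overline f_1$, $\overline f_2$ are exactly your $F_2$, $h_1$, $h_2$) and then reduce the claim to the non-negativity of two conditional KL divergences. The only cosmetic difference is that the paper computes the difference $\mathbb{E}[\KLDiv(\Post\|\ApproxPost)]-\mathbb{E}[\KLDiv(\Post\|\OptPost)]$ directly and identifies it as a sum of two expected KL terms, whereas you expand the cross-entropy and apply Gibbs' inequality fiberwise; these are the same argument.
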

\begin{proof}
    See Appendix~\ref{app:proofs}.
\end{proof}
Thus, once the matrices $U,V$ are identified and the effective dimensions $r,s$ are determined, the optimal posterior approximation \eqref{eq:OptPost} is accessible via 
$$
 \OptPost(x|y) \propto \pi_{\Y_s|\X_r}(\y_s|\x_r)  \pi_{\X}(\x),
$$
where the reduced likelihood $\pi_{\Y_s|\X_r}(\y_s|\x_r)$ is accessible by marginalizing the likelihood function $\Like(\y|\x)$ over $\y_\perp$ and $\x_\perp$ using the prior weight, i.e.,
\begin{align}
    \pi_{\Y_s|\X_r}(\y_s|\x_r) 
    &= \int_{\R^{m-s}} \int_{\R^{d-r}} \Like(\BasisY_s\y_s+\BasisY_\perp\y_\perp|\BasisX_r\x_r+\BasisX_\perp\x_\perp) \pi_{\X_\perp|\X_r}(\x_\perp|\x_r) \d \x_\perp \d \y_\perp \label{eq:ApproxLikelihood}\\
    &= \frac{1}{\pi_{\X_r}(\x_r)}\int_{\R^{m-s}} \int_{\R^{d-r}} \Joint(\BasisX_r\x_r+\BasisX_\perp\X_\perp,\BasisY_s\y_s+\BasisY_\perp\y_\perp) \d \x_\perp \d \y_\perp \nonumber .
\end{align}
We note that with $s=m$ (i.e., no data reduction), the reduced likelihood $\pi_{\Y_s|\X_r}=\pi_{\Y|\X_r}$ coincides with that used in~\citet{zahm2018certified, zahm2020gradient} when reducing the dimension of the parameter.

Next, in Section~\ref{subsec:UpperBound}, we provide a tractable upper bound for the posterior approximation error that depends explicitly on the decompositions of $\X$ and $\Y$. Section~\ref{subsec:ConstructBasis} then provides two methods for identifying low-dimensional subspaces that minimize this upper bound, and Section~\ref{sec:SelectingTheDimensions} presents a procedure for selecting reduced dimensions $r,s$ that satisfy the constraint in~\eqref{eq:PostErrorConstraint}.

\subsection{Gradient-based bound on expected posterior approximation error} \label{subsec:UpperBound}

In this section we present our main result, which consists in a gradient-based bound on the expected KL divergence~\eqref{eq:PostErrorConstraint}. This bound will guide the construction of the matrices $U$ and $V$. In the following, $\|\cdot\|$ denotes the canonical norm of the Euclidean space.

\begin{definition}[Logarithmic Sobolev inequality]\label{def:LSI} 
 A random variable $\Z$ with density $\pi_\Z$ on $\R^p$ satisfies the \emph{logarithmic Sobolev inequality} if there exists a constant $C<\infty$ such that
 \begin{equation}\label{eq:LSI}
  \int h(\z)\log\left(\frac{h(\z)}{\int h \d\pi_\Z}\right) \pi_\Z(\z)\d\z
  \leq \frac{C}{2} \int \|\nabla h(\z)\|^2 h(\z) \,\pi_\Z(\z)\d\z ,
 \end{equation}
 holds for any smooth function $h\colon\R^p\rightarrow\R_{\geq0}$. The smallest constant $C=C(\pi_Z)$ such that \eqref{eq:LSI} holds is called the \emph{logarithmic Sobolev constant} of $\Z$.
\end{definition}

\begin{definition}[Subspace logarithmic Sobolev inequality]\label{def:condLSI} 
 A random variable $\Z$ with density $\pi_\Z$ on $\R^p$ satisfies the \emph{subspace logarithmic Sobolev inequality} if there exists a constant $\overline{C}<\infty$ such that for any unitary matrix $W\in\R^{p\times p}$ and for any block decomposition $W=[W_t,W_\perp]$ with $W_t\in\R^{p\times t}, t\leq p$, and for any $\z_\perp\in\R^{p-t}$, the conditional random vector $\Z_t|\Z_\perp=\z_\perp$ with $\Z_t=W_t^T\Z$ and $\Z_\perp=W_\perp^T\Z$
 satisfies the logarithmic Sobolev inequality with
 \begin{equation}\label{eq:condLSI}
  C(\pi_{\Z_t|\Z_\perp=\z_\perp})\leq \overline{C}.
 \end{equation}
 The smallest constant $\overline{C}=\overline{C}(\pi_\Z)$ such that \eqref{eq:condLSI} holds is called the \emph{subspace logarithmic Sobolev constant} of $\Z$.
\end{definition}

\begin{theorem} \label{thm:ExpKLBound}
 Let $(\X,\Y)$ be a random vector in $\R^{d}\times\R^m$ which satisfies the subspace logarithmic Sobolev inequality with constant $\overline{C}(\Joint) < \infty$. Then for any unitary matrices $\BasisX = [\BasisX_r, \BasisX_\perp]\in\R^{d\times d}$ and $\BasisY = [\BasisY_s, \BasisY_\perp]\in\R^{m\times m}$ we have
 \begin{equation} \label{eq:KL_UpperBound}
 \mathbb{E}\left[\KLDiv(\Post(\cdot,Y)||\OptPost(\cdot,Y))\right] \leq \overline{C}(\Joint)^2 \left(\Tr(\BasisX_\perp^{T} H_\X \BasisX_\perp) + \Tr(\BasisY_\perp^{T} H_\Y \BasisY_\perp) \right) .
\end{equation}
 Here, $\OptPost$ is as in \eqref{eq:OptPost} and the matrices $H_\X \in \R^{d \times d}$ and $H_\Y \in \R^{m \times m}$ are given by
\begin{align}
H_\X &= \int \big(\nabla_\X \nabla_\Y \log\Like(y|x)\big)^T\big(\nabla_\X \nabla_\Y \log\Like(\y|\x)\big) \Joint(\x,\y)\d\x\d\y \label{eq:DiagnosticMatrixX} \\
H_\Y &= \int \big(\nabla_\X \nabla_\Y \log\Like(y|x)\big)\big(\nabla_\X \nabla_\Y \log\Like(\y|\x)\big)^{T} \Joint(\x,\y)\d\x\d\y, \label{eq:DiagnosticMatrixY}
\end{align}
where the matrix $\nabla_\X \nabla_\Y \log\Like(y|x)\in\R^{m\times d}$ is defined by
$$
 \Big(\nabla_\X \nabla_\Y \log\Like(y|x)\Big)_{i,j}
 = \partial_{x_j}\partial_{y_i} \log\Like(y|x).
$$

\end{theorem}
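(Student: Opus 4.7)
The plan is to split the expected KL divergence into two conditional mutual information (CMI) terms---one measuring the failure of $\X_\perp \ci \Y \mid \X_r$ and one measuring the failure of $\Y_\perp \ci \X \mid \Y_s$---and then to bound each CMI by a trace of the corresponding projected diagnostic matrix via a two-stage log-Sobolev/Poincar\'e argument. This second half is essentially the gradient-based CMI bound of \citet{baptista2021learning}, which I would invoke (or re-derive) here.

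\emph{Step 1 (decomposition into CMI).} Combining the chain-rule factorization $\Post(\x|\y) = \pi_{\X_r|\Y}(\x_r|\y)\,\pi_{\X_\perp|\X_r,\Y}(\x_\perp|\x_r,\y)$ with the form \eqref{eq:OptPost} of $\OptPost$ expresses $\log(\Post/\OptPost)$ as a sum of two log-ratios. Taking expectation under $\Joint$---and noting that $\mathbb{E}_\Y[\KLDiv(\Post(\cdot|\Y)\|\OptPost(\cdot|\Y))] = \mathbb{E}_{\X,\Y}[\log(\Post/\OptPost)]$---each term is recognized as a conditional mutual information, giving the key identity
\begin{equation*}
 \mathbb{E}_\Y\bigl[\KLDiv\bigl(\Post(\cdot|\Y)\,\big\|\,\OptPost(\cdot|\Y)\bigr)\bigr] = I(\X_\perp;\Y\mid\X_r) + I(\X_r;\Y_\perp\mid\Y_s).
\end{equation*}

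\emph{Step 2 (gradient-based bound on one CMI).} Focus on $I(\X_\perp;\Y\mid\X_r) = \mathbb{E}_{\X_r,\Y}\bigl[\KLDiv(\pi_{\X_\perp|\X_r,\Y}\|\pi_{\X_\perp|\X_r})\bigr]$; the other term will be handled symmetrically. Definition~\ref{def:condLSI}, together with the standard fact that LSI is preserved under marginalization, guarantees that $\pi_{\X_\perp|\X_r}$ satisfies an LSI with constant at most $\overline{C}(\Joint)$. Applying this LSI to the density ratio $\pi_{\X_\perp|\X_r,\Y}/\pi_{\X_\perp|\X_r}$, whose $\x_\perp$-log-gradient equals $\BasisX_\perp^{T}\nabla_\X\log\Like(\Y|\X)$, gives the first-stage bound
\begin{equation*}
 I(\X_\perp;\Y\mid\X_r) \leq \tfrac{\overline{C}(\Joint)}{2} \int \bigl\|\BasisX_\perp^{T}\nabla_\X\log\Like(\Y|\X)\bigr\|^2 \d\Joint.
\end{equation*}
For fixed $\X$, the integrand $\y \mapsto \BasisX_\perp^{T}\nabla_\X\log\Like(\y|\X)$ is mean-zero under $\pi_{\Y|\X}$ (by differentiating the identity $\int\Like(\y|\x)\,\d\y \equiv 1$ in $\x_\perp$). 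Hence the Poincar\'e inequality implied by the LSI of $\pi_{\Y|\X}$ promotes the single gradient to the mixed gradient,
\begin{equation*}
 \int \bigl\|\BasisX_\perp^{T}\nabla_\X\log\Like(\Y|\X)\bigr\|^2 \d\Joint \;\lesssim\; \overline{C}(\Joint) \int \bigl\|(\nabla_\X\nabla_\Y\log\Like)\BasisX_\perp\bigr\|_F^{2} \d\Joint,
\end{equation*}
and this last integral equals $\Tr(\BasisX_\perp^{T} H_\X \BasisX_\perp)$ by definition \eqref{eq:DiagnosticMatrixX} of $H_\X$.

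\emph{Step 3 (symmetric term and sum).} By the symmetry of $\nabla_\X\nabla_\Y\log\Like$ in its two arguments, the identical two-stage argument---with the roles of $(\X,\X_\perp)$ and $(\Y,\Y_\perp)$ exchanged---bounds $I(\X_r;\Y_\perp\mid\Y_s)$ by a multiple of $\Tr(\BasisY_\perp^{T} H_\Y \BasisY_\perp)$. Summing the two contributions and tracking the constants yields the claimed bound \eqref{eq:KL_UpperBound}. The main technical obstacle I anticipate is verifying that every conditional and marginal distribution appearing in the proof (namely $\pi_{\X_\perp|\X_r}$ and $\pi_{\Y|\X}$, plus their symmetric counterparts) inherits an LSI constant of at most $\overline{C}(\Joint)$: this is exactly what Definition~\ref{def:condLSI} is designed to provide, once one also invokes the preservation of LSI under marginalization of an orthogonal subset of variables, and the (standard) implication from LSI to Poincar\'e with a comparable constant.
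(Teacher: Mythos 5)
Your proposal is correct and shares the paper's overall architecture (decompose the expected KL into conditional mutual informations, then bound each CMI by a projected trace of $H_\X$ or $H_\Y$ via log-Sobolev and Poincar\'e inequalities), but it differs in both halves in ways worth noting. First, your Step 1 identity $\mathbb{E}[\KLDiv] = I(\X_\perp;\Y|\X_r) + I(\X_r;\Y_\perp|\Y_s)$ is the exact chain-rule form and is equivalent to the paper's three-term identity in Proposition~\ref{prop:PosteriorErr_MI}; where the paper then drops the nonnegative term $I(\X_\perp;\Y_\perp|\X_r,\Y_s)$, you need the matching one-line monotonicity $I(\X_r;\Y_\perp|\Y_s)\leq I(\X;\Y_\perp|\Y_s)$ before your ``symmetric argument'' applies --- this should be made explicit, because applying the two-stage argument \emph{directly} to $I(\X_r;\Y_\perp|\Y_s)$ would produce mixed derivatives of the marginal $\log\pi_{\X_r,\Y}$, which are not projections of $\nabla_\X\nabla_\Y\log\Like$, and would not yield $\Tr(\BasisY_\perp^T H_\Y \BasisY_\perp)$. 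Second, your CMI bound takes a genuinely different route from the paper's Proposition~\ref{prop:CMI_bound}: the paper works with the symmetrized ratio $h=\pi_{\X,\Y}/(\pi_\X\pi_\Y)$ under the product measure $\pi_\X\otimes\pi_\Y$ (invoking the fact that products of LSI measures satisfy LSI) and applies Poincar\'e in both blocks, picking up $\overline{C}^2$ per term; you instead apply the LSI of the conditional $\pi_{\X_\perp|\X_r}$ to the ratio $\pi_{\X_\perp|\X_r,\Y}/\pi_{\X_\perp|\X_r}$ --- whose $\x_\perp$-log-gradient is indeed $\BasisX_\perp^T\nabla_\X\log\Like$ since the prior contribution cancels --- and then promote to the mixed derivative via the Poincar\'e inequality of $\pi_{\Y|\X}$, using the correct observation that $\nabla_\X\log\Like(\cdot|\x)$ is mean-zero under $\pi_{\Y|\X=\x}$. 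Your route avoids the product-measure LSI lemma, yields the slightly sharper constant $\overline{C}^2/2$ per term, and correctly isolates the one hypothesis that needs checking (that $\pi_{\X_\perp|\X_r}$ and $\pi_{\Y|\X}$ inherit the constant $\overline{C}(\Joint)$, which follows from Definition~\ref{def:condLSI} plus preservation of LSI under marginalization); what it gives up is the paper's reusable standalone statement bounding $I(\X;\Y|\Z)$ for arbitrary triples.
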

\begin{proof}
 See Section~\ref{sec:InformationTheory}.
\end{proof}

Throughout this paper, we will use the bound~\eqref{eq:KL_UpperBound} by finding structured unitary matrices $\BasisX,\BasisY$ that minimize the right-hand side of \eqref{eq:KL_UpperBound}. Due to their central role, the matrices $H_\X$ and $H_\Y $ are called the \emph{diagnostic matrices}.

Before going further, let us comment on the assumption $\overline{C}(\Joint)<\infty$. As shown in \citet{zahm2018certified}, a sufficient condition for a distribution $\pi_\Z$ to satisfy the subspace log-Sobolev inequality is that it has convex support and that its density be of the form $\pi_\Z(\z)=\exp(-V(\z)-\Psi(\z))$, where $V$ is a smooth convex function such that $\nabla^2 V(\z) \succeq \rho I$ for some $\rho>0$, and where $\Psi$ is a function with bounded oscillation such that $\kappa=\sup\Psi-\inf\Psi <\infty$. Then, from the Bakry--\'{E}mery theorem \citep{bakry1985diffusions} and the Holley--Stroock perturbation lemma \citep{holley1986logarithmic}, we obtain $\overline{C}(\pi_\Z) \leq \exp(\kappa)/\rho$. 
As shown in the following example, this condition is (trivially) satisfied when the joint distribution $\Joint$ is Gaussian. 
We refer the reader to~\citet{zahm2018certified} for additional examples of distributions that satisfy the subspace log-Sobolev inequality. In these general cases, the constant $\overline{C}(\Joint)$ will not be available or computable in practice. Yet we can still exploit the bound~\eqref{eq:KL_UpperBound} without having access to $\overline{C}(\Joint)$. %

\begin{example}[Gaussian joint density] \label{ex:gaussianJoint}
 Let $\Joint$ be the joint density
 \begin{equation}\label{eq:gaussianJoint}
  \Joint(\x,\y) \propto \exp\left( -\frac12\|\y - G\x\|^2 -\frac12\| x \|^2 \right),
 \end{equation}
 where $G \in\R^{m\times d}$.
 This corresponds to a Bayesian inverse problem with standard normal prior, linear forward model $\x\mapsto G\x$, and standard normal observational error. Given that $\Joint(\x,\y)\propto\exp(-V(\x,\y))$ with the quadratic potential
 $V(\x,\y)=\frac12(\begin{smallmatrix}\x\\\y\end{smallmatrix})^T\Sigma^{-1}
 (\begin{smallmatrix}\x\\\y\end{smallmatrix})$ where
 $$
  \Sigma \coloneqq \Cov\begin{pmatrix}\X\\\Y\end{pmatrix} 
  = \begin{bmatrix} \Id_d +G^TG & -G^T \\ -G &  \Id_m \end{bmatrix}^{-1}
  = \begin{bmatrix} \Id_d & G^T \\ G & GG^T + \Id_m \end{bmatrix},
 $$
 we deduce (see \citet{zahm2018certified}) that $\overline{C}(\Joint)$ is bounded by $\lambda_{\max}(\Sigma)$, i.e., the largest eigenvalue of the joint covariance matrix $\Sigma$. 
 As shown in Appendix~\ref{app:calculations}, $\lambda_{\max}(\Sigma)$ can be computed explicitly, so that we obtain
 \begin{equation} \label{eq:MaxEig_Gaussian}
  \overline{C}(\Joint) \leq 
  \frac{1}{2} \left(2 + \sigma_{\max}(G)^2 + \sigma_{\max}(G)\sqrt{\sigma_{\max}(G)^2 + 4} \right),
 \end{equation}
 where $\sigma_{\max}(G)$ is the maximum singular value of $G$.
 Furthermore, the diagnostic matrices $H_\X$ and $H_\Y $ in \eqref{eq:DiagnosticMatrixX} and \eqref{eq:DiagnosticMatrixY} are given by $H_\X = G^T G$ and $H_\Y = G G^T$.
 
\end{example}

Next we show that the joint distribution $\Joint$ arising in certain \emph{nonlinear} inverse problems can also satisfy the subspace logarithmic Sobolev inequality.

\begin{example}[Nonlinear inverse problem with Gaussian noise and Gaussian prior]
Let the joint density of $\X$ and $\Y$ be
 $$
 \Joint(\x,\y) \propto \exp\left(-\frac{1}{2}\|\y - G(\x) \|^2 -\frac12\|x\|^2\right),
 $$
 where $G \colon \mathbb{R}^{d} \rightarrow \mathbb{R}^{m}$ is a nonlinear forward model.
 Denoting the Jacobian of the forward model by $\nabla G(\x)\in\R^{m\times d}$, we can write 
 \begin{align*}
  -\nabla^2 \log\Joint(\x,\y) 
  &=
  \begin{bmatrix}
    -\nabla_{X}\nabla_X \log\Joint(\x,\y) & -\nabla_Y \nabla_X \log\Joint(\x,\y) \\
    -\nabla_X \nabla_Y \log\Joint(\x,\y) & -\nabla_{Y}\nabla_Y \log\Joint(\x,\y) 
  \end{bmatrix} \\
  &= 
  \begin{bmatrix}
   \Id_d + \nabla G(\x)^T  \nabla G(\x) & -\nabla G(\x)^T\\
   - \nabla G(\x)  & \Id_m 
  \end{bmatrix} 
  -
  \begin{bmatrix}
   A(\x,\y)&  0\\ 0 & 0
  \end{bmatrix},
 \end{align*}
 where the matrix $A(\x,\y)\in\R^{d\times d}$ is given by $(A(\x,\y))_{i,j} = \sum_{k=1}^n \partial_{i,j}^2 G_k(x) (\y-G(\x))_k.$
 As in the previous example, we have
 \begin{align*}
  \lambda_{\min}\left(\begin{bmatrix}
   \Id_d + \nabla G(\x)^T  \nabla G(\x) & -\nabla G(\x)^T\\
   - \nabla G(\x)  & \Id_m 
  \end{bmatrix}\right)
  = \lambda(\x)^{-1} ,
 \end{align*}
 where $\lambda(\x)=\frac{1}{2} (2 + \sigma_{\max}(\nabla G(\x))^2 + \sigma_{\max}(\nabla G(\x))\sqrt{\sigma_{\max}(\nabla G(\x))^2 + 4} )$ so that
 $$
 -\nabla^2 \log\Joint(\x,\y)  \succeq (\lambda(\x)^{-1} - \lambda_{\max}(A(\x,\y)) ) 
 \begin{bmatrix}
   \Id_d&  0\\ 0 & \Id_m
  \end{bmatrix}.
 $$
 Therefore, if there exists a constant $C<\infty$ such that $\lambda(\x)^{-1} - \lambda_{\max}(A(\x,\y)) \geq 1/C$ uniformly over $\x,\y$, then $\Joint$ satisfies the subspace log-Sobolev inequality with $\overline C(\Joint)\leq C$.
 Furthermore, since $\nabla_\X \nabla_\Y \log\Like(\y|\x) = \nabla G(x) \in \R^{m \times d}$, the diagnostic matrices are given by
 \begin{align*}
    H_{\X} &=  \int  \nabla G(x)^{T} \nabla G(x) \Prior(x)\d\x  \\
    H_{\Y} &=  \int  \nabla G(x)  \nabla G(x)^T  \Prior(x)\d\x .
 \end{align*}
\end{example}

\subsection{Constructing $\BasisX,\BasisY$ by minimizing the upper bound} \label{subsec:ConstructBasis}

In this section, we assume the reduced dimensions $r$ and $s$ are prescribed. (A discussion of how to select $r,s$ is postponed to Section \ref{sec:SelectingTheDimensions}.)
We propose two different approaches to build the unitary matrices $\BasisX=[\BasisX_r,\BasisX_\perp]$ and $\BasisY=[\BasisY_s,\BasisY_\perp]$.
The first, referred to as \emph{optimal rotation}, consists in minimizing the upper bound~\eqref{eq:KL_UpperBound} by solving
\begin{equation} \label{eq:BasisOpt}
\min_{\BasisX_\perp, \BasisY_\perp } \Tr(\BasisX_\perp^T H_\X \BasisX_\perp) + \Tr(\BasisY_\perp^T H_\Y \BasisY_\perp),
\end{equation}
subject to $U_\perp^T U_\perp = \Id_{d-r}$ and $V_\perp^T \BasisY_\perp = \Id_{m-s}$.
The second approach, referred to as \emph{optimal permutation}, consists in solving \eqref{eq:BasisOpt} with the additional constraint that $\BasisX,\BasisY$ are permutation matrices. That way, $\X_r=\BasisX_r^T \X$ and $\Y_s=\BasisY_s^T\Y$ contain a subset of coordinates of $\X$ and $\Y$ and hence the dimension reduction corresponds to a coordinate selection. %

In both approaches, the optimal solutions $\BasisX,\BasisY$ are independent of the reduced dimensions $r$ and $s$. More specifically, there exist matrices $\BasisX,\BasisY$ (independent of $r,s$) such that the solution to~\eqref{eq:BasisOpt} can be extracted from the last columns of $\BasisX,\BasisY$ for any $r,s$.

\subsubsection{Optimal rotation} \label{subsec:OptRotation} We recall Corollary 4.3.39 in~\citet{horn2012matrix} for the variational characterization of eigenvalues of Hermitian matrices:

\begin{proposition} \label{eq:traceMin} Let $H \in \R^{p \times p}$ be a symmetric positive definite matrix with eigenpairs $(\lambda_i,w_i) \in \R_{>0} \times \R^p$, meaning $Hw_i = \lambda_i w_i$, where $\lambda_i \geq \lambda_{i+1}$ and $\|w_i\|_2 = 1$ for all $i$. Then, for any $t < p$ we have
\begin{equation*}
    \min_{\substack{W_\perp \in \R^{p \times (p-t)}\\W_\perp^T W_\perp = \Id_{p-t}}} \Tr(W_\perp^T H W_\perp) = \sum_{i=t+1}^p \lambda_i,
\end{equation*}
where the solution is given by $W_\perp = [w_{t+1},\dots,w_{p}]$. 
\end{proposition}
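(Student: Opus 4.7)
The statement is a standard trace-minimization identity (essentially the Ky Fan minimum principle); I will prove it by reducing to a small linear program on the spectral side. The plan is to diagonalize $H$, rewrite the objective as a weighted sum of its eigenvalues with coefficients determined by $W_\perp$, verify that these coefficients lie in a simple polytope, and then observe that, because $\lambda_1\geq\cdots\geq\lambda_p$, the minimum over this polytope is attained by placing all weight on the smallest eigenvalues.

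First I would write the spectral decomposition $H=\sum_{i=1}^p \lambda_i w_i w_i^T$ using the given orthonormal eigenbasis $\{w_i\}$. Setting $c_i(W_\perp):=\|W_\perp^T w_i\|_2^2$, linearity of trace gives
\begin{equation*}
\Tr(W_\perp^T H W_\perp) \;=\; \sum_{i=1}^p \lambda_i\, c_i(W_\perp).
\end{equation*}
Two constraints on the vector $(c_1,\dots,c_p)$ will then follow directly from $W_\perp^T W_\perp=\Id_{p-t}$. For the sum, I compute
\begin{equation*}
\sum_{i=1}^p c_i(W_\perp) \;=\; \Tr\!\Bigl(W_\perp W_\perp^T\!\sum_{i=1}^p w_i w_i^T\Bigr) \;=\; \Tr(W_\perp^T W_\perp) \;=\; p-t,
\end{equation*}
using $\sum_i w_i w_i^T=\Id_p$. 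For the box constraint, I note that $W_\perp W_\perp^T$ is an orthogonal projector of rank $p-t$, so $c_i(W_\perp)=w_i^T W_\perp W_\perp^T w_i\in[0,1]$.

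Next I would solve the relaxed linear program: minimize $\sum_i\lambda_i c_i$ over $c\in[0,1]^p$ with $\sum_i c_i=p-t$. Since the weights $\lambda_i$ are nonnegative and non-increasing, the minimum of this LP is $\sum_{i=t+1}^p\lambda_i$, attained uniquely (up to ties in the eigenvalues) by $c_{t+1}=\cdots=c_p=1$ and $c_1=\cdots=c_t=0$. This provides the lower bound. To close the argument, I would check that the candidate $W_\perp^\star=[w_{t+1},\dots,w_p]$ is feasible (its columns are orthonormal by construction) and realizes these coefficients, since $(W_\perp^\star)^T w_i$ equals $e_{i-t}$ for $i>t$ and $0$ otherwise. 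Plugging in yields $\Tr((W_\perp^\star)^T H W_\perp^\star)=\sum_{i=t+1}^p\lambda_i$, matching the lower bound and completing the proof.

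The only mildly non-trivial step is the box constraint $c_i\leq1$; without it the LP would collapse to $(p-t)\lambda_p$. Everything else is bookkeeping with the spectral decomposition, so the proof is essentially a three-line calculation once the LP reformulation is in place.
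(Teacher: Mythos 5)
Your proof is correct and complete. Note, however, that the paper does not actually prove this proposition: it is stated as a recollection of Corollary 4.3.39 in Horn and Johnson's \emph{Matrix Analysis} and used as a black box, so there is no internal argument to compare against. Your LP reduction is the standard self-contained route to this fact (essentially the Ky Fan minimum principle via a Schur--Horn-type relaxation), and every step checks out: the identity $\Tr(W_\perp^T H W_\perp)=\sum_i \lambda_i c_i$ with $c_i=\|W_\perp^T w_i\|_2^2$, the constraints $\sum_i c_i = p-t$ and $c_i\in[0,1]$ (the latter because $W_\perp W_\perp^T$ is an orthogonal projector), and the feasibility of $W_\perp^\star=[w_{t+1},\dots,w_p]$. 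You are right that the box constraint $c_i\leq 1$ is the one load-bearing observation; if you want to make the LP step airtight rather than appeal to "the minimum of an LP over this polytope," a one-line certificate is
\begin{equation*}
\sum_{i=1}^p \lambda_i c_i - \sum_{i=t+1}^p \lambda_i
= \sum_{i\leq t}\lambda_i c_i - \sum_{i>t}\lambda_i(1-c_i)
\geq \lambda_{t+1}\Bigl(\sum_{i\leq t}c_i - \sum_{i>t}(1-c_i)\Bigr) = 0,
\end{equation*}
using $\lambda_i\geq\lambda_{t+1}$ for $i\leq t$, $\lambda_i\leq\lambda_{t+1}$ for $i>t$, and $\sum_i c_i=p-t$. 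With that line included, your argument is a clean, elementary proof that could replace the citation; it buys self-containedness at the cost of a few lines, and as a bonus it makes transparent exactly when the minimizer is unique (only up to ties among eigenvalues and rotations within eigenspaces).
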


Let $(\lambda_i(H_\X),u_i)$ and $(\lambda_i(H_\Y),v_i)$ denote the $i$-th largest eigenpairs of $H_\X$ and $H_\Y$, respectively. 
Then, Proposition~\ref{eq:traceMin} ensures that for any $r,s$,
\begin{align*}
    \BasisX_\perp &= [u_{r+1},\hdots,u_d] \\
    \BasisY_\perp &= [v_{s+1},\hdots,v_m] ,
\end{align*}
is the optimal solution to \eqref{eq:BasisOpt}. This choice yields the optimal bound
\begin{equation} \label{eq:PostErr_Subspaces}
    \mathbb{E} \left[\KLDiv(\Post(\cdot|\Y)||\OptPost(\cdot|\Y)) \right] \leq \overline{C}(\Joint)^2 \left(\sum_{i=r+1}^d \lambda_i(H_\X) + \sum_{i=s+1}^m \lambda_i(H_\Y) \right).
\end{equation}
The eigenvectors $u_i,v_i$ can be precomputed without knowing $r$ and $s$. The advantage of this construction is that to increase $r$ or $s$, one only needs to compute the additional eigenvectors.

\begin{remark}
    In practice, it is sufficient to compute the matrices $\BasisX_r=[u_1,\hdots,u_r]$ and $\BasisY_s=[v_1,\hdots,v_s]$ to reduce the parameter and data dimensions; see Section~\ref{sec:Algorithms}. The (possibly much larger) matrices $\BasisX_\perp$ and $\BasisY_\perp$ are never assembled in practice.
\end{remark}

\subsubsection{Optimal permutation} \label{subsec:OptPermutation} 
We now constrain $U$ and $V$ to be permutation matrices so that $\BasisX\X = (\X_{\sigma_\X(1)},\hdots,\X_{\sigma_\X(d)})$ and $\BasisY\Y = (\Y_{\sigma_\Y(1)},\hdots,\Y_{\sigma_\Y(m)})$ where $\sigma_\X$ and $\sigma_\Y$ are permutations of $\{1,\hdots,d\}$ and $\{1,\hdots,m\}$, respectively. Then \eqref{eq:BasisOpt} becomes
\begin{equation} \label{eq:VariableSelection}
\min_{\sigma_\X,\sigma_\Y} \left(\sum_{i=r+1}^{d} (H_\X)_{\sigma_\X(i),\sigma_{\X}(i)} + \sum_{i=s+1}^{m} (H_\Y)_{\sigma_\Y(i),\sigma_{\Y}(i)}\right).
\end{equation}
The optimal permutations $\sigma_\X$ and $\sigma_\Y$ are those which sort the \emph{diagonal} terms of $H_\X$ and $H_\Y$ in decreasing order; that is
\begin{align*}
 (H_\X)_{\sigma_\X(i),\sigma_{\X}(i)} &\geq (H_\X)_{\sigma_\X(i+1),\sigma_{\X}(i+1)} \\
 (H_\Y)_{\sigma_\Y(i),\sigma_{\Y}(i)} &\geq (H_\Y)_{\sigma_\Y(i+1),\sigma_{\Y}(i+1)}.
\end{align*}
This choice yields the upper bound
\begin{equation}  \label{eq:PostErr_Coordinate}
    \mathbb{E} \left[\KLDiv(\Post(\cdot|\Y)||\OptPost(\cdot|\Y)) \right] \leq \overline{C}(\Joint)^2 \left(\sum_{i=r+1}^{d} (H_\X)_{\sigma_\X(i),\sigma_{\X}(i)} + \sum_{i=s+1}^{m} (H_\Y)_{\sigma_\Y(i),\sigma_{\Y}(i)} \right). %
\end{equation}
Let us note that because permutation matrices are unitary matrices, the bound in~\eqref{eq:PostErr_Coordinate} is larger than or equal to the optimal bound in~\eqref{eq:PostErr_Subspaces}. Thus the optimal permutation approach might be less efficient than the optimal rotation approach, but it preserves the interpretability of the reduced components.

\subsection{Selecting the reduced dimensions}\label{sec:SelectingTheDimensions}

We now discuss the problem of selecting the reduced dimensions. We propose to select $r$ and $s$ by minimizing the computational cost of exploring the reduced posterior $\OptPost$ under the constraint that $\OptPost$ is sufficiently accurate.

Let $c(r,s)\geq0$ be a function that reflects the computational cost and dimension dependence/scaling of solving the reduced Bayesian inference problem with posterior $\OptPost$ as in~\eqref{eq:OptPost}. 
The choice of $c(r,s)$ strongly depends on the inference method (e.g., different MCMC algorithms, variational inference, conditional transport maps~\citep{chewi2021optimal, papamakarios2016fast,cui2021conditional}). %
For instance, we may have $c(r,s) = \alpha_\X r + \alpha_\Y s$ or $c(r,s) = \alpha_\X r^2 + \alpha_\Y s^2$ for some weights $\alpha_\X,\alpha_\Y \geq 0$ which prescribe the relative cost of truncating the parameter or data dimensions.
Given a prescribed tolerance $\epsilon$, the ideal way to select $r,s$ is to solve
\begin{align}\label{eq:FindingReducedDim_ideal}
    \min_{r,s} &\; c(r,s)
    \qquad\text{s.t.} \qquad \mathbb{E}\left[\KLDiv(\Post(\cdot|\Y)||\OptPost(\cdot|\Y))\right] \leq \epsilon. 
\end{align}
Because the expected KL divergence is not accessible in practice, we rather select $r,s$ by solving
\begin{align} \label{eq:FindingReducedDim}
    \min_{r,s} &\; c(r,s)
    \qquad\text{s.t.} \qquad  B(r,s) \leq \epsilon' ,
\end{align}
where $B(r,s)$ is defined by either
\begin{align*}
 B(r,s)&=\sum_{i=r+1}^d \lambda_i(H_\X) + \sum_{i=s+1}^m \lambda_i(H_\Y), \\
 \text{or}\qquad B(r,s)&=\sum_{i=r+1}^{d} (H_\X)_{\sigma_\X(i),\sigma_{\X}(i)} + \sum_{i=s+1}^{m} (H_\Y)_{\sigma_\Y(i),\sigma_{\Y}(i)} ,
\end{align*}
depending on whether one uses the optimal rotation approach (Section~\ref{subsec:OptRotation}) or the optimal permutation approach (Section~\ref{subsec:OptPermutation}) to build $\BasisX,\BasisY$.
Given that 
$$
 \mathbb{E}\left[\KLDiv(\Post(\cdot|\Y)||\OptPost(\cdot|\Y))\right]   \leq \overline{C}(\Joint) B(r,s),
$$
the solution to~\eqref{eq:FindingReducedDim} with $\epsilon'=\epsilon/\overline{C}(\Joint)$ provides a feasible approximate solution to \eqref{eq:FindingReducedDim_ideal}. 
In the case where the log-Sobolev constant is not known, we propose to select $r,s$ by solving~\eqref{eq:FindingReducedDim} with $\epsilon'=\epsilon$. While the resulting solution may not satisfy $\mathbb{E}[\KLDiv(\Post(\cdot|\Y)||\OptPost(\cdot|\Y))] \leq \epsilon$, it still provides a good heuristic for selecting the reduced dimensions, as illustrated in Section~\ref{sec:NumericalExp}. Note that solving~\eqref{eq:FindingReducedDim} for different tolerances yields a mapping $\epsilon' \mapsto (r,s)$, which yields the same locus of optimal reduced dimensions $r,s$ as the solution to~\eqref{eq:FindingReducedDim_ideal}.

\begin{remark} When $c(r,s) = \alpha_\X r + \alpha_\Y s$, the optimization problem in~\eqref{eq:FindingReducedDim} can be formulated as a $0-1$ knapsack problem, which is known to be NP-complete~\citep{kellerer2004introduction}. Given that $(r,s)$ is only two-dimensional, we can often enumerate all combinations of reduced dimensions to find the optimal solutions. An alternative approximate solution, which does not require enumeration, is to split the constraint and to select $(r,s)$ individually based on a weighted error tolerance for the parameters and data. For example, %
we can identify $(r,s)$ by finding the smallest integers that meet the constraints
$$ \sum_{i=1}^{r} \lambda_i(H_\X) \leq \frac{\alpha_\X}{\alpha_\X + \alpha_\Y}\epsilon , \quad\quad \sum_{i=1}^{s} \lambda_i(H_\Y) \leq \frac{\alpha_\Y}{\alpha_\X + \alpha_\Y}\epsilon.$$
The setting $\alpha_\X = \alpha_\Y$ corresponds to choosing the reduced dimensions so that the errors from both reductions are balanced.
\end{remark}

\section{Information theory and proof of Theorem~\ref{thm:ExpKLBound}} \label{sec:InformationTheory}

In this section we relate the posterior approximation error to information-theoretic quantities that measure conditional independence. We then show how to bound these quantities to derive the upper bound in Theorem~\ref{thm:ExpKLBound}.

We begin by defining mutual information and conditional mutual information, which are two well known measures of the strength of dependence between random variables.

\begin{definition} Let $\X$ and $\Y$ be two random variables with joint density $\pi_{\X,\Y}$. The mutual information between $\X$ and $\Y$ is given by
$$I(\X;\Y) \coloneqq \int \pi_{\X,\Y}(\x,\y) \log \left(\frac{\pi_{\X,\Y}(\x,\y)}{\pi_{\X}(\x)\pi_{\Y}(\y)} \right) \d\x\d\y,$$
where $\pi_{\Y}(\y) = \int \pi_{\X,\Y}(\x,\y)\d\x$ and $\pi_{\X}(\x) = \int \pi_{\X,\Y}(\x,\y)\d\y$. %
\end{definition}
The mutual information is equivalently expressed as the KL divergence from the product of the marginal densities to the joint probability density function, i.e., $I(\X;\Y) = \KLDiv(\pi_{\X,\Y}||\pi_{\X}\pi_{\Y})$. The mutual information measures the dependence of $\X$ and $\Y$. In particular, $I(\X;\Y) = 0$ if and only if $\pi_{\X,\Y} = \pi_{\X}\pi_{\Y}$, meaning that $\X$ and $\Y$ are independent. 
 
\begin{definition} The conditional mutual information between random variables $\X$ and $\Y$ given a third random variable $\Z$ with joint density $\pi_{\X,\Y,\Z}$ is given by
$$I(\X;\Y|\Z) \coloneqq \int \pi_{\X,\Y,\Z}(\x,\y,\z) \log \left(\frac{\pi_{\X,\Y|\Z}(\x,\y|\z)}{\pi_{\X|\Z}(\x|\z)\pi_{\Y|\Z}(\y|\z)} \right) \d\x\d\y\d\z,$$ %
where $\pi_{\X|\Z}(\x|\z) = \int \pi_{\X,\Y|\Z}(\x,\y|\z)\d\y$ and $\pi_{\Y|\Z}(\y|\z) = \int \pi_{\X,\Y|\Z}(\x,\y|\z)\d\x$. %
\end{definition}

Analogously, the conditional mutual information $I(\X;\Y|\Z)$ %
is defined as the KL divergence from the product $\pi_{\Y|\Z}\pi_{\X|\Z}$ to the conditional density $\pi_{\X,\Y|\Z}$ in expectation over $\Z$, meaning $I(\X;\Y|\Z)= \mathbb{E}[\KLDiv(\pi_{\X,\Y|\Z}(\cdot,\cdot|\Z)||\pi_{\X|\Z}(\cdot|\Z)\pi_{\Y|\Z}(\cdot|\Z))]$.  %
The conditional mutual information serves as a measure of conditional independence between random variables, i.e., $I(\X;\Y|\Z) = 0$ if and only if $\X \ci \Y | \Z$.

The following proposition shows that the expected KL divergence from the optimal posterior approximation (given some decomposition \eqref{eq:Xdecomp}--\eqref{eq:Ydecomp}) to the true posterior is related to a difference between (conditional) mutual informations.
\begin{proposition} \label{prop:PosteriorErr_MI}
Let $\Post$ be the distribution of $\X|\Y$ and $\OptPost$ be the optimal posterior approximation in~\eqref{eq:OptPost} with $r$-dimensional informed parameters and $s$-dimensional informative data. %
Then we have
\begin{align} 
    \mathbb{E} \left[\KLDiv(\Post(\cdot|\Y)||\OptPost(\cdot|\Y)) \right] 
    &= I(\X;\Y) - I(\X_r;\Y_s)\label{eq:MIdiff_JointDimReduction}  \\
    &= I(\X_\perp;\Y|\X_r) + I(\X;\Y_\perp|\Y_s) - I(\X_\perp;\Y_\perp|\X_r,\Y_s) .\label{eq:ExpKL_MIconnection}
\end{align}
\end{proposition}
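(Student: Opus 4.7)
The plan is to prove the two equalities in sequence, with the first by a direct algebraic manipulation of densities and the second by two applications of the mutual information chain rule.

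For \eqref{eq:MIdiff_JointDimReduction}, I would first rewrite $\OptPost$ using that $\pi_{\X_r|\Y_s}=\pi_{\X_r,\Y_s}/\pi_{\Y_s}$ and, crucially, that $\pi_{\X_\perp|\X_r}=\pi_{\X}/\pi_{\X_r}$ (since the pair $(\X_r,\X_\perp)$ is a unitary change of variables of $\X$, so their joint density is $\pi_\X$ evaluated at the reconstructed point). This gives $\OptPost(\x|\y)=\pi_{\X_r,\Y_s}(\x_r,\y_s)\,\pi_\X(\x)/\bigl(\pi_{Y_s}(\y_s)\,\pi_{\X_r}(\x_r)\bigr)$. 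Dividing by $\Post(\x|\y)=\Joint(\x,\y)/\pi_\Y(\y)$ and taking $\log$ under $\Joint$, the resulting integrand splits cleanly into two pieces: $\log(\Joint/(\pi_\X\pi_\Y))$ minus $\log(\pi_{\X_r,\Y_s}/(\pi_{\X_r}\pi_{\Y_s}))$. Integrating term by term and recognizing each integral as a mutual information yields exactly $I(\X;\Y)-I(\X_r;\Y_s)$.

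For \eqref{eq:ExpKL_MIconnection}, I would simply iterate the chain rule for mutual information. First decompose
\begin{equation*}
I(\X;\Y) = I(\X_r;\Y) + I(\X_\perp;\Y\mid \X_r) = I(\X_r;\Y_s) + I(\X_r;\Y_\perp\mid \Y_s) + I(\X_\perp;\Y\mid \X_r),
\end{equation*}
which already gives $I(\X;\Y)-I(\X_r;\Y_s) = I(\X_\perp;\Y\mid \X_r) + I(\X_r;\Y_\perp\mid \Y_s)$. Then apply the chain rule again in the form $I(\X;\Y_\perp\mid \Y_s) = I(\X_r;\Y_\perp\mid \Y_s) + I(\X_\perp;\Y_\perp\mid \X_r,\Y_s)$ to eliminate $I(\X_r;\Y_\perp\mid \Y_s)$ in favor of the three terms on the right-hand side of \eqref{eq:ExpKL_MIconnection}.

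Neither step looks genuinely hard; the only mild obstacle is keeping track of which variables are being conditioned on and ensuring the density identities involving the unitary decompositions $\X=U_r\X_r+U_\perp\X_\perp$ and $\Y=V_s\Y_s+V_\perp\Y_\perp$ are used correctly (in particular that $\pi_\X$ equals the joint density of $(\X_r,\X_\perp)$ under the change of coordinates with unit Jacobian, so that the marginalization identities go through without extra determinant factors). Everything else is bookkeeping.
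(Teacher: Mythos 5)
Your proposal is correct and follows essentially the same route as the paper: a direct density computation showing the expected KL divergence equals $I(\X;\Y)-I(\X_r;\Y_s)$, followed by the chain rule for mutual information to obtain the three-term form. The only cosmetic difference is that you rewrite $\OptPost$ via $\pi_{\X_r|\Y_s}=\pi_{\X_r,\Y_s}/\pi_{\Y_s}$ directly, whereas the paper starts from $\OptPost\propto\pi_{\Y_s|\X_r}\pi_\X$ and separately verifies that its normalizing constant equals $\pi_{\Y_s}$; your explicit chain-rule decompositions are in fact more detailed than the paper's one-line invocation.
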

\begin{proof}
    See Appendix \ref{app:proofs}.
\end{proof}
\begin{remark}
The right-hand sides of~\eqref{eq:MIdiff_JointDimReduction} and of \eqref{eq:ExpKL_MIconnection} simplify if we only consider reducing either the dimension of the parameter or the data solely. For instance, if the data are not reduced, i.e., $\Y_s = \Y$, we have 
$\mathbb{E} [\KLDiv(\Post(\cdot|\Y)||\OptPost(\cdot|\Y)) ] = I(\X_\perp;\Y|\X_r) .$
Analogously, if the parameter is not reduced, i.e., $\X_r = \X$, we have 
$\mathbb{E} [\KLDiv(\Post(\cdot|\Y)||\OptPost(\cdot|\Y)) ] = I(\X;Y_\perp|\Y_s) .$
\end{remark}

\begin{remark}
An important property of mutual information is that it is invariant to invertible marginal transformations of the variables. 
For instance, by applying the linear transformations $\overline{X} = AX$ and $\overline{Y} = BX$ for some invertible matrices $A \in \R^{d \times d}$ and $B \in \R^{m \times m}$, we have $I(\X;\Y) = I(\overline{\X};\overline{\Y})$ but also $I(\X_\perp;\Y|\X_r) = I(\X_\perp;\overline{\Y}|\X_r)$ and $I(\X;\Y_\perp|\Y_s) = I(\overline{\X};\Y_\perp|\Y_s)$.
\end{remark}

While the (conditional) mutual information is tractable to compute for Gaussians and certain classes of  parametric distributions, it does not admit a closed-form expression for arbitrary non-Gaussian distributions. For a density that satisfies the subspace log-Sobolev inequality in~\eqref{eq:condLSI}, the following proposition provides an upper bound for the conditional mutual information based on the integrated mixed partial derivatives of the log-density. %

\begin{proposition} \label{prop:CMI_bound}
Let $\pi_{\X,\Y,\Z}$ be the joint density of random variables $(\X,\Y,\Z)$, satisfying the subspace logarithmic Sobolev inequality with constant $\overline{C}(\pi_{\X,\Y,\Z})$. %
Then, the conditional mutual information is upper bounded by
\begin{equation} \label{eq:CMI_UpperBound}
I(\X;\Y|\Z) \leq \overline{C}(\pi_{\X,\Y,\Z})^2 \mathbb{E} \| \nabla_\X \nabla_\Y \log \pi_{\X,\Y,\Z}(\X,\Y,\Z) \|_{F}^2,
\end{equation}
where $\|\cdot\|_{F}$ denotes the Frobenius norm.
\end{proposition}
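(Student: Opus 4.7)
The plan is to disintegrate with respect to $\Z$, bound $\KLDiv(\pi_{\X,\Y|\Z=\z}\,\|\,\pi_{\X|\Z=\z}\pi_{\Y|\Z=\z})$ pointwise in $\z$, and then integrate against $\pi_\Z$. For a fixed $\z$-slice I will combine two classical inequalities: the log-Sobolev inequality to turn the KL divergence into a Fisher-information (first-order) quantity, and then a Poincar\'e inequality applied in the complementary variable to turn that first-order quantity into the mixed second-order quantity that appears in the statement.

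The first step rests on showing that the product $\pi_{\X|\Z=\z}\otimes\pi_{\Y|\Z=\z}$ itself satisfies an LSI with constant at most $\overline{C}(\pi_{\X,\Y,\Z})$. This follows from three facts implied by the subspace LSI hypothesis: (i)~the block decomposition isolating $\Z$ yields LSI for $\pi_{\X,\Y|\Z=\z}$ with constant $\leq\overline{C}$; (ii)~LSI is inherited by marginals by testing on functions depending only on $\x$ or only on $\y$, so both factors inherit the same constant; and (iii)~LSI tensorizes under products with the maximum of the factor constants. Applying the product LSI to the density $h=\pi_{\X,\Y|\Z=\z}/(\pi_{\X|\Z=\z}\pi_{\Y|\Z=\z})$ relative to the product reference then yields
\begin{equation*}
 \KLDiv\bigl(\pi_{\X,\Y|\Z=\z} \,\|\, \pi_{\X|\Z=\z}\pi_{\Y|\Z=\z}\bigr) \leq \tfrac{\overline{C}}{2}\int \bigl(\|\nabla_\X\log h\|^2 + \|\nabla_\Y\log h\|^2\bigr)\,\d\pi_{\X,\Y|\Z=\z}.
\end{equation*}

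The second step bounds each of these first-order terms by the mixed second derivative. The key observation is a centering identity: differentiating $\int\pi_{\X,\Y|\Z=\z}(\x,\y)\,\d\x = \pi_{\Y|\Z=\z}(\y)$ in $\y$ and using Bayes' rule gives $\mathbb{E}_{\X|\Y=\y,\Z=\z}[\nabla_\Y\log h(\X,\y)] = 0$, and symmetrically in the other variable. By the subspace LSI, $\pi_{\X|\Y=\y,\Z=\z}$ and $\pi_{\Y|\X=\x,\Z=\z}$ inherit LSI with constant $\leq\overline{C}$, hence Poincar\'e with the same constant. Applying Poincar\'e componentwise to the vector-valued map $\x\mapsto\nabla_\Y\log h(\x,\y)$ bounds its squared expectation by $\overline{C}\,\mathbb{E}\|\nabla_\X\nabla_\Y\log h\|_F^2$, and likewise for the $\nabla_\X\log h$ term. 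Finally, the separable pieces $\log\pi_{\X|\Z=\z}$, $\log\pi_{\Y|\Z=\z}$ and the normalization $\log\pi_\Z(\z)$ all vanish under the mixed derivative, so $\nabla_\X\nabla_\Y\log h = \nabla_\X\nabla_\Y\log\pi_{\X,\Y,\Z}$. Summing the two contributions cancels the factor $1/2$, produces the prefactor $\overline{C}^2$, and taking an outer expectation in $\Z$ concludes the proof.

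The main obstacle is bookkeeping of the various LSI and Poincar\'e constants: the subspace LSI hypothesis directly supplies constants only for \emph{conditionals}, whereas the HSI-type first step requires LSI for the \emph{product of marginals}. Verifying that this product inherits the bound $\overline{C}$ (via marginalization plus tensorization) is the non-trivial piece, together with the centering identity, which is what makes the Poincar\'e step deliver $\nabla_\X\nabla_\Y\log h$ with no mean-subtraction remainder.
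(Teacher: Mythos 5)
Your proposal is correct and follows essentially the same route as the paper's proof: an LSI for the product of (conditional) marginals---obtained exactly via the marginalization-plus-tensorization argument you describe, which the paper attributes to \citet[Theorem 4.4]{guionnet2003lectures}---turns the KL divergence into a Fisher-information term, and your centering identity is precisely the paper's observation that $\nabla_\X\log\pi_{\X}$ and $\nabla_\Y\log\pi_{\Y}$ are conditional expectations of the joint scores, which sets up the same componentwise Poincar\'e step. The only cosmetic difference is that you disintegrate over $\Z$ at the outset, whereas the paper proves the unconditional case first and then applies it slice-wise; the constants and the cancellation of the factor $1/2$ work out identically.
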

\begin{proof}
 The proof follows closely from the proof of Theorem 2 in~\citet{baptista2021learning}. It is given in Appendix~\ref{app:proofs}.
\end{proof}

Collecting the results in Propositions~\ref{prop:PosteriorErr_MI} and~\ref{prop:CMI_bound}, we now give the proof of Theorem~\ref{thm:ExpKLBound}. \\

\begin{proof}[\textbf{Proof of Theorem~\ref{thm:ExpKLBound}}]\label{proof:ExpKLBound}
    Because the conditional mutual information is positive, Proposition~\ref{prop:PosteriorErr_MI} lets us write
    \begin{equation*} \label{eq:CMI_joint_upperbound}
     \mathbb{E} \left[\KLDiv(\Post(\cdot|\Y)||\OptPost(\cdot|\Y)) \right] \leq %
     I(\X_\perp;\Y|\X_r) + I(\X;\Y_\perp|\Y_s).
    \end{equation*}
    By Proposition~\ref{prop:CMI_bound}, each term on the right-hand side above is upper bound by the expectation of mixed partial derivatives of $\Joint$ as 
    \begin{align*}
     \mathbb{E} & \left[\KLDiv(\Post(\cdot|\Y)||\OptPost(\cdot|\Y)) \right] \\
     &\leq 
     \overline{C}(\Joint)^2 \left(\mathbb{E}\|\nabla_{\X_\perp}\nabla_{\Y} \log \Joint(\X,\Y)\|_F^2 + \mathbb{E}\|\nabla_{\X}\nabla_{\Y_\perp} \log \Joint(\X,\Y)\|_F^2 \right)\\
     &= \overline{C}(\Joint)^2 \left(\mathbb{E}\|\nabla_\X \nabla_\Y \log \Joint(\X,\Y) \BasisX_\perp \|_F^2 + \mathbb{E}\|\BasisY_\perp^T \nabla_\X \nabla_\Y \log \Joint(\X,\Y) \|_F^2 \right) \\
     &= \overline{C}(\Joint)^2 \left(\mathbb{E}\|\nabla_\X \nabla_\Y \log \Like(\Y|\X) \BasisX_\perp \|_F^2 + \mathbb{E}\|\BasisY_\perp^T \nabla_\X \nabla_\Y \log \Like(\Y|\X) \|_F^2 \right).
    \end{align*}
    Expanding the Frobenius norm using the trace, we arrive at equation~\eqref{eq:KL_UpperBound}.
\end{proof}

\section{Gaussian error models} \label{sec:GaussianLik}
In this section, we consider the data-generating process $ \Y = G(\X) + \varepsilon$, where $G \colon \mathbb{R}^{d} \rightarrow \mathbb{R}^{m}$ is a (nonlinear) forward model and $\varepsilon \sim \mathcal{N}(0,\Cobs)$ is a Gaussian observational error which is independent of $X$.
This situation corresponds to a  likelihood function $\Like(\y|\x) \propto \exp(-\frac{1}{2}\|\y - G(\x) \|_{\Cobs^{-1}}^2 )$ and a joint density of the form of
\begin{equation}\label{eq:gaussianLikelihood}
 \Joint(\x,\y) \propto \exp\left(-\frac{1}{2}\|\y - G(\x) \|_{\Cobs^{-1}}^2 \right) \Prior(\x) ,
\end{equation}
where $\Prior$ is any prior density. 
Without further assumptions, the subspace log-Sobolev constant $\overline{C}(\Joint)$ remains unknown.

\subsection{Whitening} \label{sec:WhiteningGaussianLik}

Next, we propose a change of variables for $\X$ and $\Y$ which can be interpreted as a preconditioning of the dimension reduction procedure. Notice that with a change of variables $\overline{X} = A\X$ and $\overline{Y} = B\Y$, the left-hand side of \eqref{eq:KL_UpperBound} remains unchanged (see Section~\ref{sec:InformationTheory}) while the right-hand side is modified in several ways through the subspace log-Sobolev constant and the diagnostic matrices. There is freedom in this choice.
Finding a change of variables which minimizes the right-hand side of \eqref{eq:KL_UpperBound}---i.e., which yields the \emph{tightest} upper bound on the posterior approximation error---is a difficult task, mostly because the subspace log-Sobolev constant $\overline{C}(\pi_{\X_A,\Y_B})$ is not readily available. 
Instead, we propose a heuristic which consists of whitening the parameter and the data as follows:
\begin{equation}\label{eq:whitening}
 \overline\X =\Cpr^{-1/2}\X
 \quad\text{and}\quad
 \overline\Y = \Cobs^{-1/2}\Y,
\end{equation}
where $\Cpr \coloneqq \Cov(X)$ %
is the prior covariance, assuming it exists.
Then, we reduce the dimensions of $\overline\X$ and $\overline\Y$ using the corresponding diagnostic matrices $H_{\overline\X}$ and $H_{\overline\Y}$ which, using \eqref{eq:gaussianLikelihood} and \eqref{eq:whitening}, are given by
\begin{align}
    H_{\overline\X} &= \Cpr^{1/2} \left( \int  \nabla G(x)^{T} \Cobs^{-1} \nabla G(x) \Prior(x)\d\x \right)\Cpr^{1/2} \label{eq:Gaussian_Hx} \\
    H_{\overline\Y} &= \Cobs^{-1/2} \left( \int  \nabla G(x) \Cpr \nabla G(x)^T  \Prior(x)\d\x \right)\Cobs^{-1/2}. \label{eq:Gaussian_Hy} 
\end{align}
Denoting by $\overline\BasisX_r=[\overline{u}_1,\hdots,\overline{u}_r]$ and $\overline\BasisY_s=[\overline{v}_1,\hdots,\overline{v}_s]$ the matrices containing the first eigenvectors of $H_{\overline{X}}$ and $H_{\overline{Y}}$, respectively, the reduced parameter and the reduced data are $\X_r=\overline\BasisX_r^T \overline\X$ and $\Y_s=\overline\BasisY_s^T \overline\Y$ which, using \eqref{eq:whitening}, are given by
\begin{align}
 \X_r = \BasisX_r^T \X &\qquad\text{where}\qquad \BasisX_r = \Cpr^{-1/2}\overline\BasisX_r  \label{eq:Xdecomp_Precond}\\
 \Y_s = \BasisY_s^T \Y &\qquad\text{where}\qquad \BasisY_s = \Cobs^{-1/2}\overline\BasisY_s .\label{eq:Ydecomp_Precond}
\end{align}
With the above definition, the matrices $\BasisX_r$ and $\BasisY_s$ have no longer orthogonal columns in the Euclidean sense, but they satisfy $\BasisX_r^T\Cpr\BasisX_r = \Id_r$ and $\BasisY_s^T \Cobs \BasisY_s = \Id_s$.
The error bound \eqref{eq:KL_UpperBound} thus becomes
\begin{equation} \label{eq:KL_UpperBound_whiten}
 \mathbb{E}\left[\KLDiv(\Post(\cdot,Y)||\OptPost(\cdot,Y))\right] \leq \overline{C}(\pi_{\overline{X},\overline{Y}})^2 \left(\Tr(\overline{\BasisX}_\perp^{T} H_{\overline{X}} \overline{\BasisX}_\perp) + \Tr(\overline{\BasisY}_\perp^{T} H_{\overline{\Y}} \overline{\BasisY}_\perp) \right) .
\end{equation}

\begin{remark}[Generalized eigenvalue problems]\label{rmk:GeneralizedEigendecomp}
 Let 
\begin{align*}
 \mathcal{H}_{\overline{X}} = \int  \nabla G(x)^{T} \Cobs^{-1} \nabla G(x) \Prior(x)\d\x 
 \quad\text{and}\quad
 \mathcal{H}_{\overline{Y}} = \int  \nabla G(x) \Cpr \nabla G(x)^T  \Prior(x)\d\x ,
\end{align*}
so that $H_{\overline\X} = \Cpr^{1/2} \mathcal{H}_{\overline{X}} \Cpr^{1/2} $ and $H_{\overline\Y} =  \Cobs^{-1/2}\mathcal{H}_{\overline{Y}} \Cobs^{-1/2}$. 
With the change of variables $u_i = \Cpr^{-1/2}\overline{u}_i$ and $v_i = \Cobs^{-1/2}\overline{v}_i$, 
the simple eigenvalue problems $H_{\overline\X}\overline{u}_i = \lambda_i(H_{\overline\X})\overline{u}_i$ and $H_{\overline\Y}\overline{v}_i = \lambda_i(H_{\overline\Y})\overline{v}_i$ are equivalent to the \emph{generalized} eigenvalue problems
\begin{align}
 \mathcal{H}_{\overline{X}} w_i &= \lambda_i(\mathcal{H}_{\overline{X}},\Cpr^{-1}) \Cpr^{-1} w_i, \qquad u_i =\Cpr^{-1}w_i , \label{eq:Gaussian_eig_param} \\
 \mathcal{H}_{\overline{Y}} v_i &= \lambda_i(\mathcal{H}_{\overline{Y}},\Cobs) \Cobs v_i , \label{eq:Gaussian_eig_data}
\end{align}
where $\lambda_i(\mathcal{H}_{\overline{X}},\Cpr^{-1}) = \lambda_i(H_{\overline\X})$ and $\lambda_i(\mathcal{H}_{\overline{Y}},\Cobs) = \lambda_i(H_{\overline\Y})$\footnote{We use the notation $\lambda_i(A,B)$ to denote the generalized eigenvalues of the matrix pencil $(A,B)$.}.  
 We note that $\mathcal{H}_{\overline{X}}$ is the same diagnostic matrix introduced in \citet[Section 4]{cui2020data} in the setting of Gaussian error models. $\mathcal{H}_{\overline{X}}$ is also similar to the diagnostic proposed in~\citet{cui2014likelihood} for finding the likelihood-informed subspace, with the key difference being that $\mathcal{H}_{\overline{X}}$ integrates over the prior distribution instead of the posterior. 
\end{remark}

\begin{remark}
Instead of the decompositions \eqref{eq:Xdecomp} and \eqref{eq:Ydecomp}, the proposed change of variables \eqref{eq:whitening} yields a decomposition of $\X$ and $\Y$ of the form
\begin{align*}
 \X &= \Cpr^{1/2}\left(\overline\BasisX_r \X_r  + \overline\BasisX_\perp X_{\perp}\right) &\text{where \ \ } \left\{\begin{array}{l} \X_r = \overline\BasisX_r^{T}\Cpr^{-1/2}\X \\ \X_\perp = \overline\BasisX_\perp^{T}\Cpr^{-1/2}\X \end{array} \right. , \\
 \Y &= \Cobs^{1/2}\left(\overline\BasisY_s \Y_s  + \overline\BasisY_\perp \Y_{\perp}\right) &\text{where \ \ } \left\{\begin{array}{l} \Y_s = \overline\BasisY_s^{T}\Cobs^{-1/2}\Y \\ \Y_\perp = \overline\BasisY_\perp^{T}\Cobs^{-1/2}\Y \end{array} \right. .
\end{align*}

\end{remark}

\subsection{Linear--Gaussian setting} \label{subsec:LinearGaussian}
Now we consider the case where the forward model is linear, i.e., $\x\mapsto G\x$ where $G \in \R^{m \times d}$ is a matrix.
In this case, the diagnostic matrices ${H}_{\overline{X}}$ and ${H}_{\overline{Y}}$ are written as
\begin{align*}
  H_{\overline{X}} &= (\Cpr^{1/2}G^T\Cobs^{-1/2})(\Cobs^{-1/2}G\Cpr^{1/2}) \\
  H_{\overline{Y}} &= (\Cobs^{-1/2}G\Cpr^{1/2})(\Cpr^{1/2}G^T\Cobs^{-1/2}) .
\end{align*}
The eigendecompositions of $H_{\overline{X}}$ and $H_{\overline{Y}}$ can be obtained by computing the singular value decomposition (SVD) of the so-called ``whitened forward model'': 
\begin{equation} \label{eq:WhitenedForwardModel}
 \Cpr^{1/2}G^T\Cobs^{-1/2} = \sum_{i=1}^{\min\{d,m\}} \sigma_i \overline u_i \overline v_i^T.
\end{equation}
In particular, the non-zero eigenvalues of the parameter-space and data-space diagnostic matrices are now the same, i.e.,  $\lambda_i(H_{\overline{X}}) = \lambda_i(H_{\overline{Y}}) = \sigma_i^2$ for all $i\leq\min\{d,m\}$; any remaining eigenvalues, i.e., for $\min\{d,m\} < i \leq \max\{d,m\}$ are zero. %

The eigendecompositions of $H_{\overline{X}}$ and $H_{\overline{Y}}$ (or, equivalently, the generalized eigendecompositions \eqref{eq:Gaussian_eig_param} and \eqref{eq:Gaussian_eig_data} of $\mathcal{H}_{\overline{X}}$ and $\mathcal{H}_{\overline{Y}}$, as in Remark \ref{rmk:GeneralizedEigendecomp}) have been used to reduce the parameter and data dimensions in linear--Gaussian inverse problems. \citet{spantini2015optimal} solve~\eqref{eq:Gaussian_eig_param} to approximate the posterior covariance as a low-rank update of the prior covariance. These eigenvectors also yield a projector for the parameter which matches the one derived above. Furthermore, Algorithm 1 in~\citet{spantini2015optimal} solves both eigenvalue problems, \eqref{eq:Gaussian_eig_param} and \eqref{eq:Gaussian_eig_data}, to derive an approximation to the posterior mean (which minimizes a Bayes risk with weighted squared error loss) as a linear projection of the data $\Y$. 
\citet{giraldi2018optimal} show the equivalence between the solution to~\eqref{eq:Gaussian_eig_data} and finding the vectors that solve $\max_{\BasisY_s} I(\BasisY_s^T\Y,\X)$, which Proposition~\ref{prop:PosteriorErr_MI} then shows is equivalent to minimizing the expected KL divergence from the true posterior, with reduced data. To minimize this expected KL divergence for linear inverse problems over the column vectors $\BasisY_s$, \citet{giraldi2018optimal} use Riemannian optimization algorithms on a Grassmannian manifold; they extend their approach to nonlinear forward models simply by using a Laplace approximation of the posterior.
Lastly, \citet{jagalur2021batch} derive mutual information bounds for coordinate selection of data in linear--Gaussian problems. These bounds are used to develop various greedy algorithms, with guarantees for cardinality-constrained optimization.

\subsection{Gap in the linear--Gaussian setting}

We analyze now the gap 
in \eqref{eq:KL_UpperBound_whiten} for a linear--Gaussian likelihood model with a Gaussian prior. 
We denote by $\sigma_i$ the $i$-th largest singular value of the whitened forward model $\Cpr^{1/2} G^T \Cobs^{-1/2}$ in~\eqref{eq:WhitenedForwardModel}. 
Using the closed-form expression for the mutual information of Gaussian variables (see Appendix~\ref{app:calculations}), %
we have
\begin{equation} \label{eq:MIGaussian}
\mathbb{E}\left[\KLDiv(\Post(\cdot|\Y)||\OptPost(\cdot|\Y))\right]
\overset{\eqref{eq:MIdiff_JointDimReduction}}{=}I(\X,\Y) - I(\X_r,\Y_s) = \frac{1}{2} \sum_{i > \min\{r,s\}}^{\min\{d,m\}} \log(1 + \sigma_i^2).
\end{equation}
In comparison, the upper bound in~\eqref{eq:KL_UpperBound_whiten} evaluated at the optimal rotation $U_\perp$ and $V_\perp$ is given by 
\begin{equation} \label{eq:UpperBoundGaussian}
    \overline{C}(\pi_{\overline{X},\overline{Y}})^2 \left(\sum_{i\,>\,r}^{\min\{d,m\}} \sigma_i^2 + \sum_{i\,>\,s}^{\min\{d,m\}} \sigma_i^2 \right),
\end{equation}
where the subspace log-Sobolev constant $\overline{C}(\pi_{\overline{X},\overline{Y}})$ can be bounded in terms of $\sigma_1$, as shown in Example~\ref{ex:gaussianJoint}.
Using a first-order Taylor expansion of $\log(1 + \sigma_i^2)$ as $\sigma_i\rightarrow 0$, the ratio between~\eqref{eq:MIGaussian} and~\eqref{eq:UpperBoundGaussian} satisfies
\begin{align}
    \frac{\mathbb{E}\left[\KLDiv(\Post(\cdot|\Y)||\OptPost(\cdot|\Y))\right]}{\overline{C}(\pi_{\overline{X},\overline{Y}})^2 \left(\sum_{i\,>\,r}^{\min\{d,m\}} \sigma_i^2 + \sum_{i\,>\,s}^{\min\{d,m\}} \sigma_i^2 \right)} 
    &\,= \frac{1}{2 \overline{C}(\pi_{\overline{X},\overline{Y}})^2}  \frac{\sum_{i > \min\{r,s\}}^{\min\{d,m\}}  \sigma_i^2 + \mathcal{O}(\sigma_i^4) }{  \left(\sum_{i\,>\,r}^{\min\{d,m\}} \sigma_i^2 + \sum_{i\,>\,s}^{\min\{d,m\}} \sigma_i^2 \right) } \label{eq:GapUpperBound} \\
    &\overset{r=s}{=} \frac{1}{4 \overline{C}(\pi_{\overline{X},\overline{Y}})^2} ( 1 + \mathcal{O}(\sigma_{r}^2) ) \label{eq:GapUpperBound_EqualReducedDim}.
\end{align}
In the limit of $\sigma_{r}\rightarrow0$, the above ratio converges to the constant $1/(4 \overline{C}(\pi_{\overline{X},\overline{Y}})^2)$. Thus, the expected KL divergence and its bound go to zero at the same rate. 
Let us remark that if either $r = d$ or $s = m$, i.e., when only the parameter or the data are reduced but not both, the ratio in~\eqref{eq:GapUpperBound} goes to $1/(2\overline{C}(\pi_{\overline{X},\overline{Y}})^2)$.

As a numerical illustration, we consider the linear inverse problem introduced in~\citet[Example 1]{spantini2015optimal} with identity forward model $G = I_d$ and $m = d = 50$. The prior covariance is constructed as $\Cpr = W D W^T$, where $W$ is one realization of a random unitary matrix drawn uniformly from the unitary group\footnote{This is typically done by computing the QR factorization of a random matrix with standard Gaussian entries.} 
and $D$ is a diagonal matrix with $D_{ii} = \lambda_0/i^\vartheta + \tau$, $\lambda_0 = 1$, $\vartheta = 2$ and $\tau = 10^{-6}$. We follow the same procedure to realize the observation noise covariance $\Cobs$ with $\lambda_0 = 500$ and $\vartheta = 1$.

Figure~\ref{fig:linear_joint_CMI} plots the expected KL divergence (of the approximate posterior from the exact posterior) for the optimal parameter and data projectors at any given pair of reduced dimensions $(r,s)$. This quantity is computed using  analytical expressions for the mutual information of Gaussian random vectors, as above.
Figure~\ref{fig:linear_joint_CMI_upperbound} plots the corresponding value of the upper bound in \eqref{eq:KL_UpperBound_whiten},
up to the unknown log-Sobolev constant, evaluated at the optimal projectors.
Figure~\ref{fig:linear_joint_gap} then plots the ratio between this upper bound and the posterior approximation error, confirming the analytical results derived above in that the ratio approaches $1/4$ for $r=s$ and a maximum value of $1/2$ for $r=d$ or $s=m$. 
For any tolerance level $\epsilon$, we can (in Figures~\ref{fig:linear_joint_CMI} and \ref{fig:linear_joint_CMI_upperbound}) observe the Pareto front of reduced dimensions that yield the same approximation error. The dashed lines in Figure~\ref{fig:linear_joint_CMI_upperbound} highlight reduced dimensions that solve~\eqref{eq:FindingReducedDim} for a linear cost function $c(r,s) = \alpha_\X r + \alpha_\Y s$ with different weights $\alpha_X \in \{0.2,0.5,0.8\}$ and $\alpha_\Y = 1 - \alpha_\X$, for five different values of the tolerance. We see that these choices for the weight trade off the cost of keeping the parameters versus the data. %

\begin{figure}[!ht]
     \centering
     \begin{subfigure}[t]{0.32\textwidth}
         \centering
         \includegraphics[width=\textwidth]{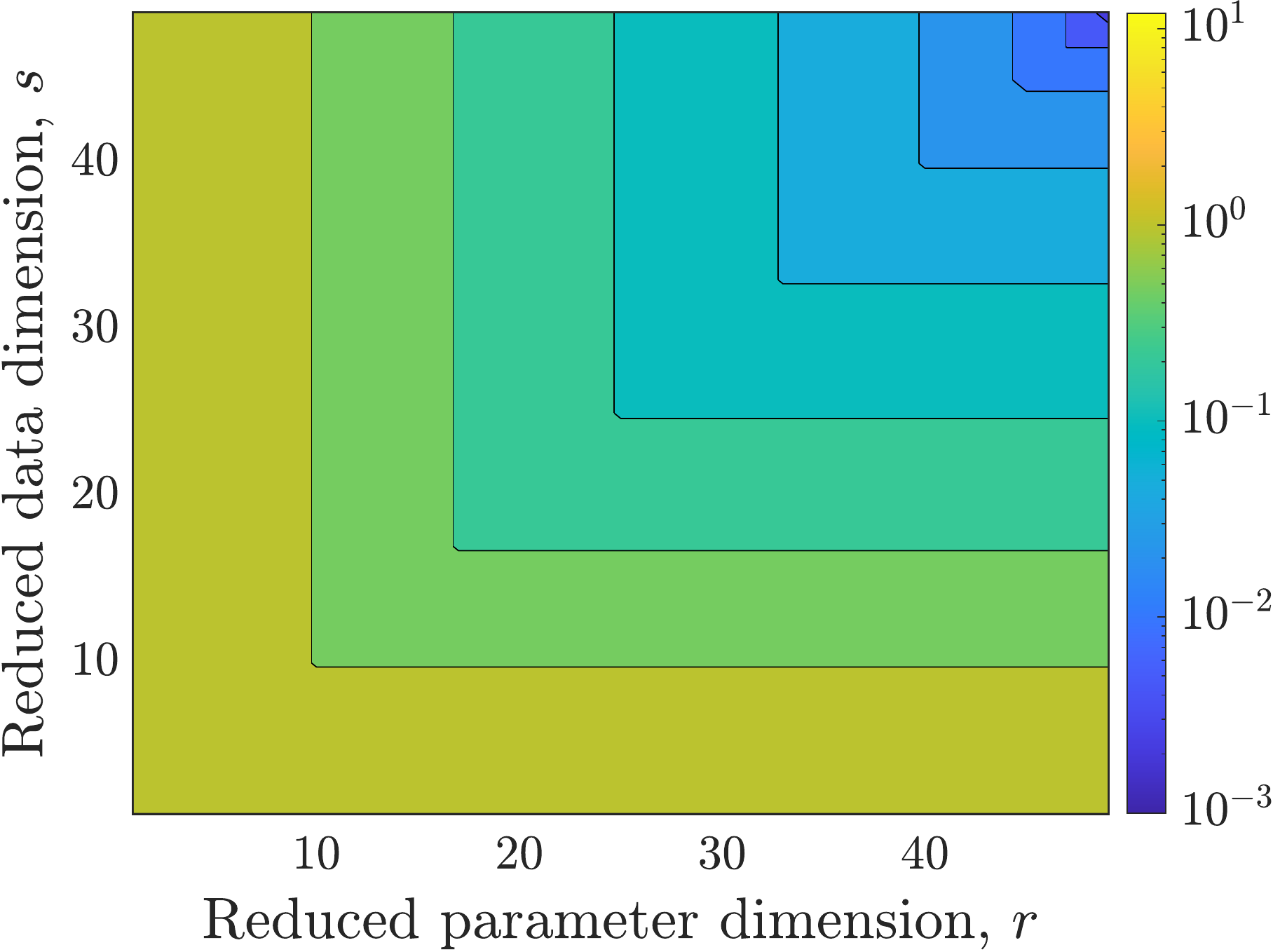}
         \caption{}
         \label{fig:linear_joint_CMI}
     \end{subfigure}
     \hfill
     \begin{subfigure}[t]{0.32\textwidth}
         \centering
         \includegraphics[width=\textwidth]{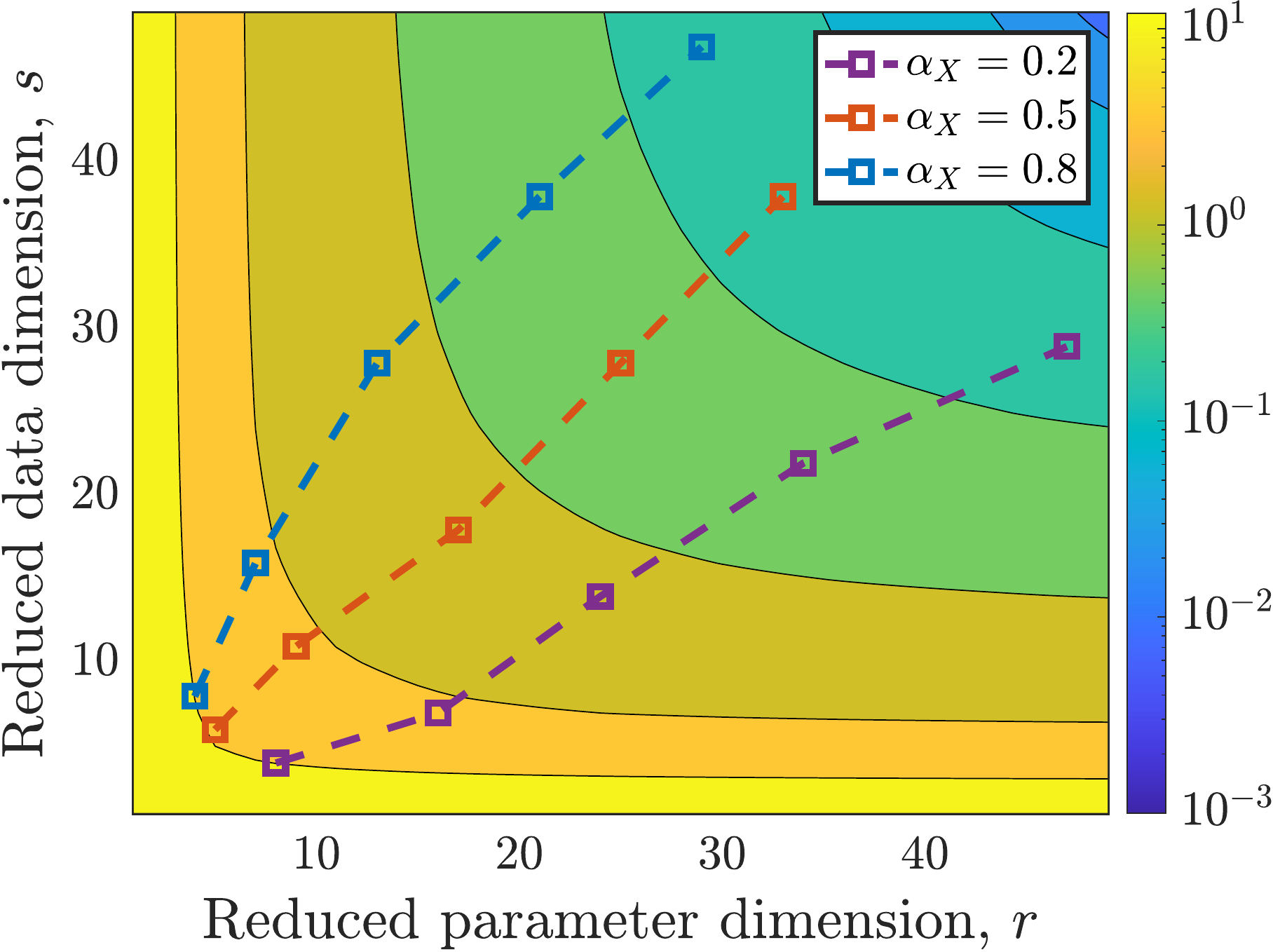}
         \caption{}
         \label{fig:linear_joint_CMI_upperbound}
     \end{subfigure}
     \hfill
     \begin{subfigure}[t]{0.32\textwidth}
         \centering
         \includegraphics[width=\textwidth]{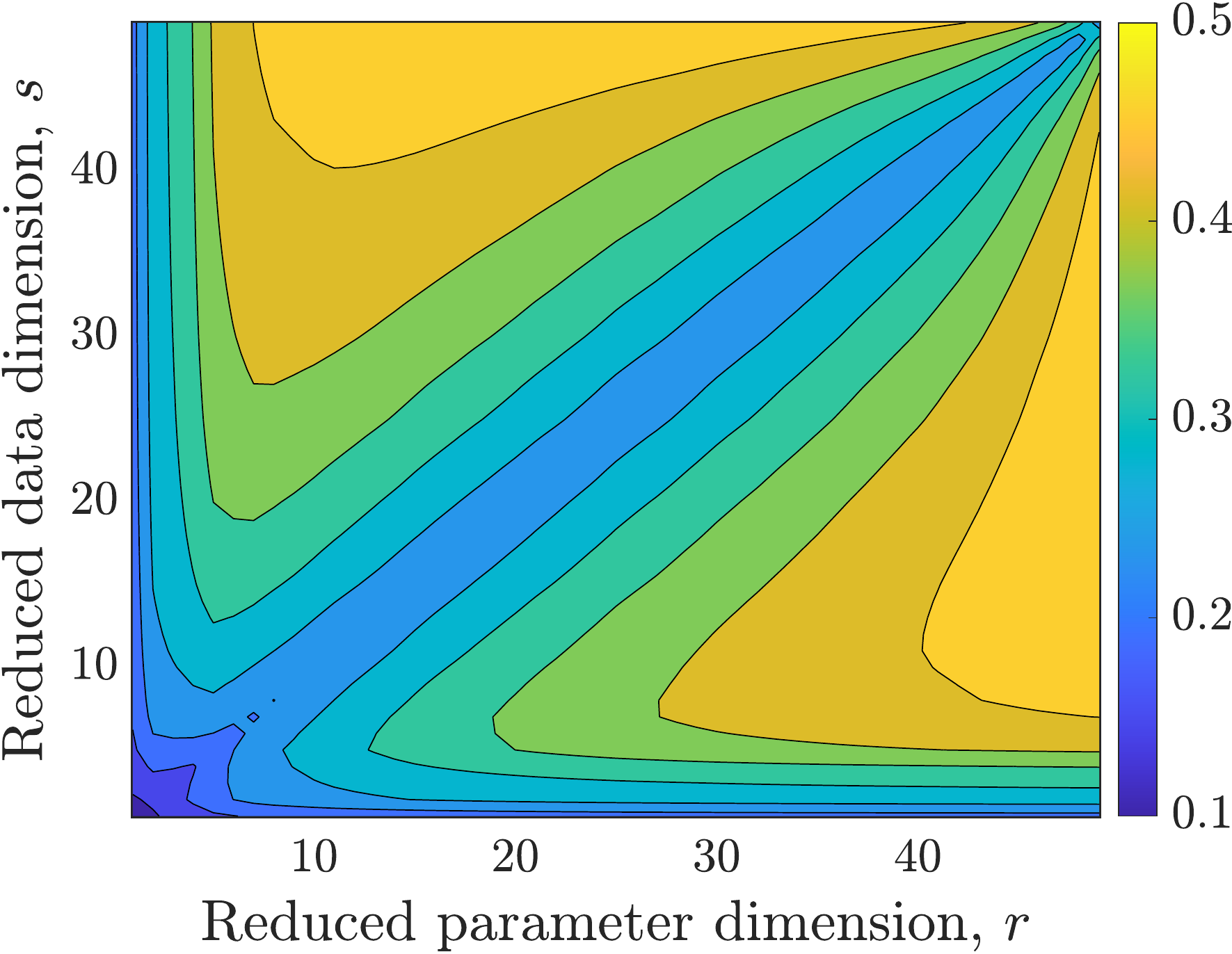}
         \caption{}
         \label{fig:linear_joint_gap}
     \end{subfigure}
     \caption{(a) Posterior approximation error $\mathbb{E}_{\Y}[\KLDiv(\Post(\cdot|\Y)||\OptPost(\cdot|\Y))]$ as a function of reduced dimensions $(r,s)$; (b) Tractable part of the upper bound for the posterior approximation error; (c) Gap in the upper bound, up to the constant $\overline{C}(\pi_{\overline{X},\overline{Y}})^2$, approaches $1/4$ for $r = s$ and $1/2$ for $r = d$ or $s = m$.} %
\end{figure}

\section{Comparisons to PCA and CCA} \label{sec:OtherDimRedMethods}

Two popular methods for linear dimension reduction are principal component analysis (PCA) and canonical correlation analysis (CCA).

PCA consists in reducing the dimension of a mean-zero random vector $\X$ by minimizing the $L^2$ error $\E[ \| \X - \BasisX_r \BasisX_r^T \X \|^2 ]$ over matrices $\BasisX_r\in\R^{d\times r}$ with orthogonal columns~\citep{hotelling1933analysis, jolliffe2002principal}. The solution is $\BasisX_r = [u_1^\PCA, \dots, u_r^\PCA]$ where $u_i^\PCA$ are the leading eigenvectors of the covariance matrix $\Cov(\X)$. That is,
\begin{equation}
 \Cov(\X) u_i^\PCA = \lambda_i (\Cov(\X)) u_i^\PCA.   
\end{equation}
The same procedure can be applied to reduce the dimension of $\Y$, which yields $\BasisY_s = [v_1,\hdots,v_s]$ where $v_i^\PCA$ are the leading eigenvectors of the covariance matrix $\Cov(\Y)$. That is,
\begin{equation}
\Cov(\Y) v_i^\PCA = \lambda_i(\Cov(\Y)) v_i^\PCA.    
\end{equation}
There are two main drawbacks of using this dimension reduction method for Bayesian inference problems. 
The first is that PCA is an \emph{unsupervised} dimension reduction method. That is, the directions identified by PCA are meant to reconstruct $\X$ and $\Y$ marginally, but it does not account for the dependence between $\X$ and $\Y$.
Second, an accurate low-dimensional PCA approximation depends on fast decay of the eigenvalues of the covariances $\Cov(\X)$ and $\Cov(\Y)$. In many inference problems, however, we can have low-dimensional structure without having sharp decay in the spectra of these covariances; cf.\ Example~\ref{ex:gaussianJoint} where $\Cov(\X)=\Id_d$ and $\Cov(\Y) = \Id_m + GG^T$. 

Alternatively, CCA seeks linear combinations of $\X$ and $\Y$ that are maximally correlated~\citep{hotelling1992relations, hardoon2004canonical}. That is, CCA solves
\begin{equation} \label{eq:CCA_maxcorr}
    (\BasisX_r^{\CCA},\BasisY_r^{\CCA}) = 
    \argmax_{\substack{\BasisX_r^T \Cov(\X) \BasisX_r = \Id_r \\ \BasisY_r^T \Cov(\Y) \BasisY_r = \Id_r}} \Tr (U_r^T \Cov(\X,\Y) V_r),
\end{equation}
where $\Cov(\X,\Y)$ is the cross-covariance of $\X$ and $\Y$, and $r \leq \min\{d,m\}$.
The vectors $(\BasisX_r^{\CCA})^T\X$ and $(\BasisY_r^{\CCA})^T\Y$ are called the \emph{pairs of canonical variables}. %
It can be shown that $\BasisX_r,\BasisY_r$ can be found by solving the generalized eigenvalue problems
\begin{align}
    \Cov(\X,\Y)\Cov(\Y)^{-1}\Cov(\Y,\X)u_i^\CCA &= \rho_i \Cov(\X)u_i^\CCA \label{eq:CCAeigU} \\
    \Cov(\Y,\X)\Cov(\X)^{-1}\Cov(\X,\Y)v_i^\CCA &= \rho_i \Cov(\Y)v_i^\CCA, \label{eq:CCAeigV}
\end{align} 
where the eigenvectors are ordered based on a descending order for the eigenvalues $\rho_i^2 \in [-1,1]$.

The next proposition shows that, for linear--Gaussian likelihood models (and potentially a non-Gaussian prior), 
our dimension reduction approach using whitening (see Section~\ref{sec:WhiteningGaussianLik}) is the same as CCA. The proof of this result is provided in Appendix~\ref{app:proofs}. 
\begin{proposition} \label{prop:CCA_connection}
Let $\Y = G\X + \varepsilon$, where $G \in \R^{m \times d}$ and $\varepsilon $ is independent of $\X$, with mean $\mathbb{E}[\varepsilon] = 0$ and covariance $\Cov(\varepsilon) = \Cobs$. Then, the solution to~\eqref{eq:CCA_maxcorr} is given by
$$
 \BasisX_r^{CCA} = \Cpr^{-1/2}\overline\BasisX_r,
 \qquad \text{and}\qquad
 \BasisY_r^{CCA} = \Cobs^{-1/2}\overline\BasisY_r ,
$$
where $\overline\BasisX_r$ and $\overline\BasisY_r$ are the matrices containing the first $r$ eigenvectors of the diagnostic matrices $H_{\overline\X},H_{\overline\Y}$ defined in \eqref{eq:Gaussian_Hx} and~\eqref{eq:Gaussian_Hy}, respectively.
Furthermore, we have $\rho_i = \lambda_i/(1+\lambda_i)$ where $\lambda_i=\lambda_i(H_{\overline \X})=\lambda_i(H_{\overline \Y})$.
\end{proposition}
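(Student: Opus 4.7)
The plan is to rewrite the CCA generalized eigenproblem~\eqref{eq:CCAeigU}--\eqref{eq:CCAeigV} in terms of $G$, $\Cpr$, and $\Cobs$, then whiten and read off the eigenstructure from the SVD of the whitened forward model, matching it to the diagnostic matrices $H_{\overline\X}$ and $H_{\overline\Y}$.

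First I would compute the relevant covariances under the linear model $\Y = G\X + \varepsilon$ with $\varepsilon \ci \X$, $\E[\varepsilon] = 0$, $\Cov(\varepsilon) = \Cobs$:
\begin{equation*}
\Cov(\X) = \Cpr, \quad \Cov(\Y) = G \Cpr G^T + \Cobs, \quad \Cov(\X,\Y) = \Cpr G^T, \quad \Cov(\Y,\X) = G \Cpr.
\end{equation*}
I would then introduce the whitened forward operator $A := \Cobs^{-1/2} G \Cpr^{1/2}$, whose SVD, by \eqref{eq:WhitenedForwardModel}, is $A = \sum_i \sigma_i \overline v_i \overline u_i^T$. With this notation we have $H_{\overline \X} = A^T A$ and $H_{\overline \Y} = A A^T$, so $\lambda_i(H_{\overline \X}) = \lambda_i(H_{\overline \Y}) = \sigma_i^2$ and the $\overline u_i, \overline v_i$ are precisely the eigenvectors defined in Section~\ref{sec:WhiteningGaussianLik}. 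The key rewriting is
\begin{equation*}
\Cov(\Y) = \Cobs^{1/2}(A A^T + \Id_m)\Cobs^{1/2}, \qquad \Cov(\X,\Y) = \Cpr^{1/2} A^T \Cobs^{1/2}.
\end{equation*}

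Next I would substitute these into the CCA eigenproblem~\eqref{eq:CCAeigU}. Cancelling the $\Cobs^{\pm 1/2}$ factors in the product $\Cov(\X,\Y)\Cov(\Y)^{-1}\Cov(\Y,\X)$ yields $\Cpr^{1/2} A^T(A A^T + \Id_m)^{-1} A \Cpr^{1/2}$, so \eqref{eq:CCAeigU} becomes
\begin{equation*}
\Cpr^{1/2} A^T(A A^T + \Id_m)^{-1} A \Cpr^{1/2}\, u_i^\CCA = \rho_i\, \Cpr\, u_i^\CCA.
\end{equation*}
The change of variables $w_i = \Cpr^{1/2} u_i^\CCA$ and left-multiplication by $\Cpr^{-1/2}$ then reduces this to the standard problem $A^T(AA^T + \Id_m)^{-1} A w_i = \rho_i w_i$. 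Plugging in the SVD of $A$, one sees that the right singular vectors $\overline u_i$ diagonalise the left-hand side: $A \overline u_i = \sigma_i \overline v_i$, $(AA^T + \Id_m)^{-1}\overline v_i = (1 + \sigma_i^2)^{-1} \overline v_i$, and $A^T \overline v_i = \sigma_i \overline u_i$, which together give eigenvalue $\sigma_i^2/(1+\sigma_i^2)$. Hence $w_i = \overline u_i$, $\rho_i = \sigma_i^2/(1+\sigma_i^2) = \lambda_i/(1+\lambda_i)$, and therefore $u_i^\CCA = \Cpr^{-1/2}\overline u_i$, establishing the first identity. The second identity $\BasisY_r^\CCA = \Cobs^{-1/2}\overline\BasisY_r$ follows by the symmetric manipulation applied to~\eqref{eq:CCAeigV}, or equivalently by observing that $A^T(AA^T + \Id_m)^{-1} A$ and $(AA^T + \Id_m)^{-1} AA^T$ share the same non-zero spectrum with the appropriate singular vector correspondence.

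The manipulations are essentially bookkeeping, so there is no deep obstacle; the only mildly delicate point is handling the SVD when $d \neq m$, where one must check that vectors in the kernel of $A$ (respectively $A^T$) correspond to $\rho_i = 0$ and do not spoil the eigenvector identification. This is resolved by noting that the generalised eigenvectors associated with zero eigenvalues are not uniquely determined anyway, and the first $r$ nonzero canonical directions, ordered by $\rho_i$, are given exactly by $\overline u_1, \dots, \overline u_r$ (ordering preserved because $t \mapsto t/(1+t)$ is monotone on $[0,\infty)$).
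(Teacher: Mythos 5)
Your proof is correct and follows essentially the same route as the paper's: both reduce the CCA generalized eigenproblems \eqref{eq:CCAeigU}--\eqref{eq:CCAeigV} to the eigenproblems for $H_{\overline\X}$ and $H_{\overline\Y}$ via a matrix-inversion identity and conclude with the monotonicity of $t\mapsto t/(1+t)$. The only difference is cosmetic: you whiten first and diagonalize $A^T(AA^T+\Id_m)^{-1}A$ using the SVD of $A=\Cobs^{-1/2}G\Cpr^{1/2}$, whereas the paper applies the Sherman--Morrison--Woodbury identity in the original coordinates and then invokes the generalized eigenproblems of Remark~\ref{rmk:GeneralizedEigendecomp}; your explicit handling of the zero singular values is a small bonus the paper's proof omits.
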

The method proposed in the present paper can thus be seen as a generalization of CCA to nonlinear models. To do so, we use gradient information from the log-likelihood, 
whereas CCA uses only covariance information. %
We will show in Section~\ref{sec:NumericalExp} that our approach yields more accurate posterior approximations than CCA for the same reduced dimensions. Lastly, we note that CCA can only identify subspaces of the parameter and data of the \emph{same} dimension, i.e., $r = s$. In contrast, our proposed approach can trade off these two dimensions while meeting a desired error tolerance (see Section~\ref{sec:SelectingTheDimensions}). %

\section{Algorithms} \label{sec:Algorithms}
We now present algorithms to identify and exploit the low-dimensional subspaces for the informed parameters and informative data. 
Algorithm~\ref{alg:computeBasis} shows how to identify those subspaces using Monte Carlo estimation of the diagnostic matrices $H_\X$ and $H_\Y$. To do this, we assume we have access to the mixed partial derivatives of the log-likelihood function $\log\Like$, and that we can sample from the joint density $\Joint$. A sample $(\xs^i,\ys^i)$ from the joint density $\Joint$ is typically obtained by first sampling $\xs^i \sim \Prior$ and then sampling $\ys^i \sim \Like(\cdot|\xs^i)$. 

Once the matrices $\BasisX_r$ and $\BasisY_s$ are identified, sampling from the approximate posterior $\OptPost$ in~\eqref{eq:OptPost} requires samples from the reduced posterior $\pi_{\X_r|\Y_s}$ and from the conditional prior $\pi_{\X_\perp|\X_r}$. More specifically, given a realization of the data $\y$, we need to
\begin{enumerate}
    \item Project the data $\y_s=\BasisY_s^T \y$,
    \item Draw a sample from the reduced posterior $\widetilde\X_r^i\sim \pi_{\X_r|\Y_s=\y_s}$,
    \item Draw a sample from the conditional prior $\widetilde\X_\perp^i\sim \pi_{\X_\perp|\X_r=\widetilde\X_r^i}$,
    \item Assemble $\widetilde\X^i= \BasisX_r \widetilde\X_r^i + \BasisX_\perp \widetilde\X_\perp^i$.
\end{enumerate}
Thus, by construction we have $\widetilde\X^i\sim\OptPost$.
Step 2 of the above procedure (i.e., drawing samples $\widetilde\X_r^i\sim \pi_{\X_r|\Y_s=\y_s}$) is the key challenge, which requires using a dedicated inference algorithm. In the following subsections we propose two classes of inference algorithms. The first is based on evaluations of the likelihood function and the prior density, whereas the second only requires samples from the joint density $\Joint$.

\begin{algorithm}[!ht]
    \caption{Identify decomposition of parameter and data spaces \label{alg:computeBasis}}
    \DontPrintSemicolon
    \SetKwInOut{Input}{Input}\SetKwInOut{Output}{Output}
\BlankLine
    \Input{Prior density $\pi_{\X}$, Likelihood $\pi_{\Y|\X}$, Sample size $n$, Reduced dimensions $r,s$}
    \Output{Matrices with Orthonormal columns $\BasisX_r,\BasisY_s$}
    Draw $n$ i.i.d.\thinspace samples $\{\xs^{i}\}_{i=1}^{n} \sim \pi_{\X}$ and then $\{\ys^{i}\}_{i=1}^{n} \sim \pi_{\Y|\xs^{i}}$\;
    Compute $\nabla_{\X}\nabla_{\Y} \log \pi_{\Y|\X}(\ys^i|\xs^i)$ for $i=1,\dots,n$\;
    Assemble the Monte Carlo estimates 
    \begin{align*}
    \widehat H_{\X}&=\frac{1}{n}\sum_{i=1}^n \left(\nabla_{\X}\nabla_{\Y}\log \pi_{\Y|\X}(\ys^i|\xs^i)\right)^T \left(\nabla_{\X}\nabla_{\Y}\log \pi_{\Y|\X}(\ys^i|\xs^i)\right) \\
    \widehat H_{\Y}&=\frac{1}{n}\sum_{i=1}^n \left(\nabla_{\X}\nabla_{\Y}\log \pi_{\Y|\X}(\ys^i|\xs^i)\right) \left(\nabla_{\X}\nabla_{\Y}\log \pi_{\Y|\X}(\ys^i|\xs^i)\right)^T
    \end{align*}
    
    \textbf{Optimal rotation}: Solve eigenvalue problems $\widehat H_{\X}u_{i} = \lambda_i(\widehat H_{\X})u_{i}$, $\widehat H_{\Y}v_{i} = \lambda_i(\widehat H_{\Y})v_{i}$ for the eigenvectors corresponding to the $r,s$ leading eigenvalues, \textbf{or} \;
    \textbf{Optimal permutation}: Identify indices of largest diagonal entries in $\widehat H_\X$ and $\widehat H_\Y$ and set $u_1,\dots,u_r$ and $v_1,\dots,v_s$ to those canonical unit vectors\;
    Assemble $U_r = [u_1,\dots,u_r]$, $V_s = [v_1,\dots,v_s]$\;
\end{algorithm}

\subsection{Inference methods based on likelihood evaluations} \label{subsec:LikelihoodBasedMethods}

Markov chain Monte Carlo (MCMC) algorithms are popular methods for sampling from posterior distributions. They require the ability to evaluate the posterior density (up to a normalizing constant), and hence to evaluate the likelihood function (and in general also the prior density), to accept or reject a proposed move. In the present setting, the likelihood function is the reduced likelihood given by
\begin{equation} \label{eq:RedLikelihood}
    \pi_{\Y_s|\X_r}(\y_s|\x_r) = \int \pi_{\Y_s|\X}(\y_s|\BasisX_r\x_r + \BasisX_\perp\x_\perp)\pi_{\X_\perp|\X_r}(\x_\perp|\x_r) \d \x_\perp,
\end{equation}
where the data-marginalized likelihood $\pi_{\Y_s|\X}$ above %
is given by
\begin{equation}\label{eq:tmp385}
    \pi_{\Y_s|\X}(\y_s|\x) = \int \pi_{\Y|\X}(\BasisY_s\y_s + \BasisY_\perp\y_\perp|\x) \d\y_\perp.
\end{equation}
The next example shows that when the likelihood is Gaussian, one can analytically compute the integral in~\eqref{eq:tmp385} so that $\pi_{\Y_s|\X}$ is accessible in closed form. 
\begin{example} \label{ex:AnalyticalMargLikelihood}
For the (whitened) Gaussian-likelihood model in Section~\ref{sec:WhiteningGaussianLik},
we have the rotated data model 
\begin{align*}
    \Y_s = \BasisY_s^T \Cobs^{-1/2}\Y = \BasisY_s^T \Cobs^{-1/2} G(\x) + \BasisY_s^T \Cobs^{-1/2} \varepsilon\\
    \Y_\perp = \BasisY_\perp^T \Cobs^{-1/2}\Y = \BasisY_\perp^T \Cobs^{-1/2} G(\x) + \BasisY_\perp^T \Cobs^{-1/2} \varepsilon.
\end{align*}
Given that the observational noise components $V_s^T \Cobs^{-1/2} \varepsilon$ and $V_\perp^T \Cobs^{-1/2} \varepsilon$ are independent and have identity covariance, the data-marginalized likelihood is Gaussian with the form
$$\pi_{\Y_s|\X}(\y_s|\x) = (2\pi)^{-s/2}\exp \left(-\frac{1}{2}\|\y_s - \BasisY_s^T\Cobs^{-1/2}G(\x)\|_2^2 \right).$$
\end{example}

While the integral in \eqref{eq:tmp385} can be computed analytically, there is in general no closed form expression for the integral in~\eqref{eq:RedLikelihood}. Thus, the reduced likelihood $\pi_{\Y_s|\X_r}$ needs to be estimated numerically.
We consider here the Monte-Carlo estimator 
\begin{equation} \label{eq:MCestimator_RedLikelihood}
    \widehat{\pi}_{\Y_s|\X_r}(\y_s|\x_r) =\frac{1}{\ell} \sum_{i=1}^{\ell} \pi_{\Y_s|\X}(\y_s|\BasisX_r\x_r + \BasisX_\perp\xs_\perp^i),
    \quad \xs_\perp^i \sim \pi_{\X_\perp|\X_r}(\cdot|\x_r) .
\end{equation}
We refer to~\citet{cui2020data} for an intensive discussion on different sampling strategies, and on the impact of the sample size $\ell$ versus the truncated dimension $r$. As shown in~\citet{zahm2018certified,cui2021unified}, the variance of the estimator \eqref{eq:MCestimator_RedLikelihood} is low when the error bound $\sum_{i=r+1}^d \lambda_i(H_\X)$ is small. In practice, it is sufficient to use few samples (e.g., $\ell = 1$) or even deterministic approximations (e.g., by setting $\xs_\perp^i$ to the conditional prior mean). More interestingly, taking the perspective of pseudo-marginal MCMC \citep{andrieu2009pseudo}, it is shown in \citet{cui2020data} that redrawing fresh samples $\xs_\perp^i$ in \eqref{eq:MCestimator_RedLikelihood} at each MCMC iteration permits sampling from the \emph{exact} reduced posterior.

\subsection{Inference methods based on joint samples} \label{subsec:SampleBasedMethods}

Transportation of measure underpins another broad class of algorithms for generating conditional samples \citep{Marzouk2016, kovachki2020conditional}. These methods require having access to samples from the joint distribution $\Joint$ in order to construct an invertible map (e.g., using invertible neural networks as in~\citet{radev2020bayesflow} or polynomial expansions as in~\citet{baptista2020adaptive}) that transforms samples from the joint distribution to samples from the standard normal distribution. This map is then used to draw samples from the conditional distribution $\pi_{\X \vert \Y=y}$ for any value $y$, thereby amortizing the cost of inference for multiple realizations of the data. 

Recalling the overall scheme presented at the start of Section~\ref{sec:Algorithms}, we need to sample from the reduced posterior $\pi_{\X_r|\Y_s=\y_s}$ for some $y_s$. We begin by considering the reduced joint distribution $\pi_{X_r,Y_s}$. Samples from $\pi_{X_r,Y_s}$ are obtained by projecting samples $(X^i,Y^i)\sim\Joint$ as follows: $(X_r^i,Y_s^i)=(U_r^T X^i,V_s^T Y^i)$.
Using these samples, we then build a (block)-triangular map $S\colon\R^{r+s}\rightarrow\R^{r+s}$ such that
$$
 S(Y_s,X_r)=
 \left(\begin{array}{l}
  S^\mathcal{Y}(Y_s) \\ S^\mathcal{X}(Y_s,X_r)
 \end{array}\right)
 \sim\mathcal{N}(0,\Id_{r+s}) ,
$$
where both $y_s\mapsto S^\mathcal{Y}(y_s)$ and $x_r\mapsto S^\mathcal{X}(y_s,x_r)$ are invertible functions. Once $S$ is built, sampling from $\pi_{\X_r|\Y_s=y_s}$ requires solving the equation $S^\mathcal{X}(y_s,\widetilde\X_r) = Z_r$ for $\widetilde\X_r\in\R^r$, where $Z_r$ is a sample from $\mathcal{N}(0,\Id_r)$. By construction, we have $\widetilde\X_r\sim \pi_{X_r|Y_s=y_s}$; see \citet{Marzouk2016} for a proof. Furthermore, the map $S^\mathcal{X}$ enables evaluations of the conditional density via the change of variables formula, $\pi_{\X_r|\Y_s}(\x_r|\y_s) = \eta \circ S^\mathcal{X}(\y_s,\x_r) |\nabla_{\X_r} S^\mathcal{X}(\y_s,\x_r)|$ where $\eta$ denotes the density of the standard Gaussian distribution $\mathcal{N}(0,I_r)$. Let us remark that this procedure does not utilize the first map component $S^\mathcal{Y}$, and so it is unnecessary to construct it in practice. 

In this setting, reducing the dimensions of both the parameter and the data alleviates the computational burden of the map construction: $S^\mathcal{X}$ becomes a function of $s+r$ variables, rather than of $m+d$ variables. It is also worth noting that this reduced-dimensional approach to amortized inference is feasible only when the projection $V_s$ is independent of the data realization $y$, which is the case in our approach. %

\section{Numerical experiments} \label{sec:NumericalExp}
Code to reproduce the following numerical experiments is freely available at \url{www.github.com/baptistar/BayesianDimRed}.

\subsection{Linear elasticity inverse problem} \label{sec:Exp_EllipticPDE}

Our first numerical example is to infer the inhomogeneous Young's modulus of a (wrench-shaped) physical body $\mathcal{D}\subset\R^2$ given some measurements of the displacement on its boundary~\citep{lam2020multifidelity, smetana2020randomized}. This is a challenging inverse problem as both the Young's modulus and observed displacements are spatially distributed quantities and hence high-dimensional vectors after discretization; moreover, they are indirectly related via a partial differential equation that induces a nonlinear forward model.

Let $u\colon \mathcal{D} \rightarrow \R^2$ represent the displacement field given an external force $f$ applied on a subset of $\partial\mathcal{D}$. The displacement field $u$ satisfies the coupled elliptic PDE $\text{div}(K : \epsilon(u)) = 0$ everywhere on $\mathcal{D}$, where $\epsilon(u) = \frac{1}{2}(\nabla u + \nabla u^T)$ is the strain field and $K$ is the Hooke tensor, such that
\begin{equation}
    K : \epsilon(u) \coloneqq \frac{E}{1 + \nu}\epsilon(u) + \frac{\nu E}{1-\nu^2}\Tr(\epsilon(u))\Id_2.
\end{equation}
Here, $\nu = 0.3$ is Poisson's ratio and $E\colon \mathcal{D} \rightarrow \R_{>0}$ is the Young's modulus. The displacement field is also subject to a Dirichlet boundary condition, i.e., $u = 0$ on the right hand side of the wrench; see the dashed lines in Figure~\ref{fig:wrench_realizations_a}.
We model the Young's modulus field with a log-normal prior, i.e., $\log E \sim \mathcal{N}(0,C)$ where $C(s,s') = \varsigma^2 \exp(-\|s-s'\|_2^2/\ell^2)$ is a squared exponential covariance kernel on $\mathcal{D} \times \mathcal{D}$ with correlation length $\ell = 1$ and marginal variance $\varsigma^2 = 1$. %

To solve the PDE numerically, we apply the finite element method \citep{zienkiewicz2000finite}. We first discretize the domain using a mesh with $925$ elements and we approximate the stochastic field $\log(E)$ with a piecewise constant field whose values are gathered in a random vector $X$ of dimension $d = 925$. We denote by $u^h(X)$ the Galerkin projection of $u(X)$ onto the space of piecewise affine functions. 
We then extract the vertical displacements of $u^h(X)$ at the $m = 48$ nodes located along the line where the force is applied; see Figure~\ref{fig:wrench_realizations_b}. 
Denoting the corresponding (linear) extraction operator by $L$, the forward model $G\colon\R^{925}\rightarrow\R^{48}$ is written as $G(\X) = Lu^h(\X)$.
The data $\Y=G(\X)+\varepsilon$ are perturbed with a zero-mean Gaussian noise $\varepsilon$ which is independent of $X$. %
The covariance of $\varepsilon$ is defined as $\Cobs=LR^{-1}L^T$, where $R$ is the Riesz map associated with the $H^1(\mathcal{D)}$-norm such that $\|u^h\|_{R}^2 = \int_\mathcal{D} (u^h(s))^2+\|\nabla u^h(s)\|^2 \d s$.
This way, the norm $\|\cdot\|_{\Cobs^{-1}}$ corresponds to the standard trace norm on $H^{1/2}(\partial\mathcal{D})$; see~\citep[Chapter 5]{zahm2015model}. 
This example falls into the framework of Gaussian error models, discussed in Section~\ref{sec:GaussianLik}, with a Gaussian prior distribution.

\begin{figure}[!ht]
    \centering
    \begin{subfigure}[c]{0.45\textwidth}
     \centering
     \includegraphics[ width=\textwidth]{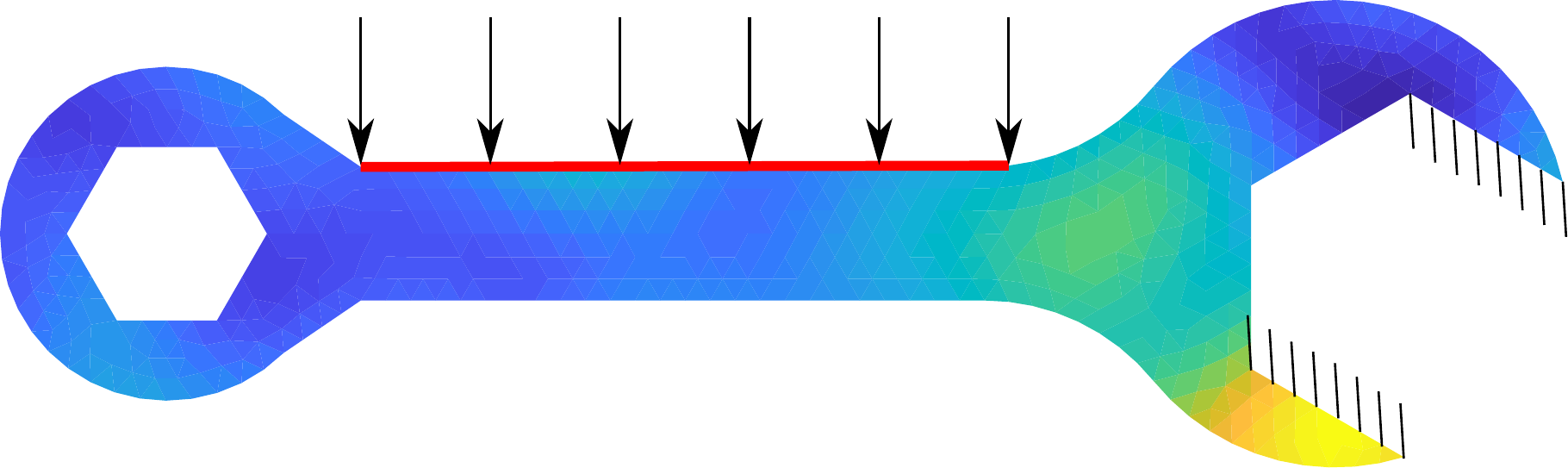}
     \caption{}
     \label{fig:wrench_realizations_a}
    \end{subfigure}
    \hspace{1cm}
    \begin{subfigure}[c]{0.45\textwidth}
     \centering
     \includegraphics[ width=\textwidth]{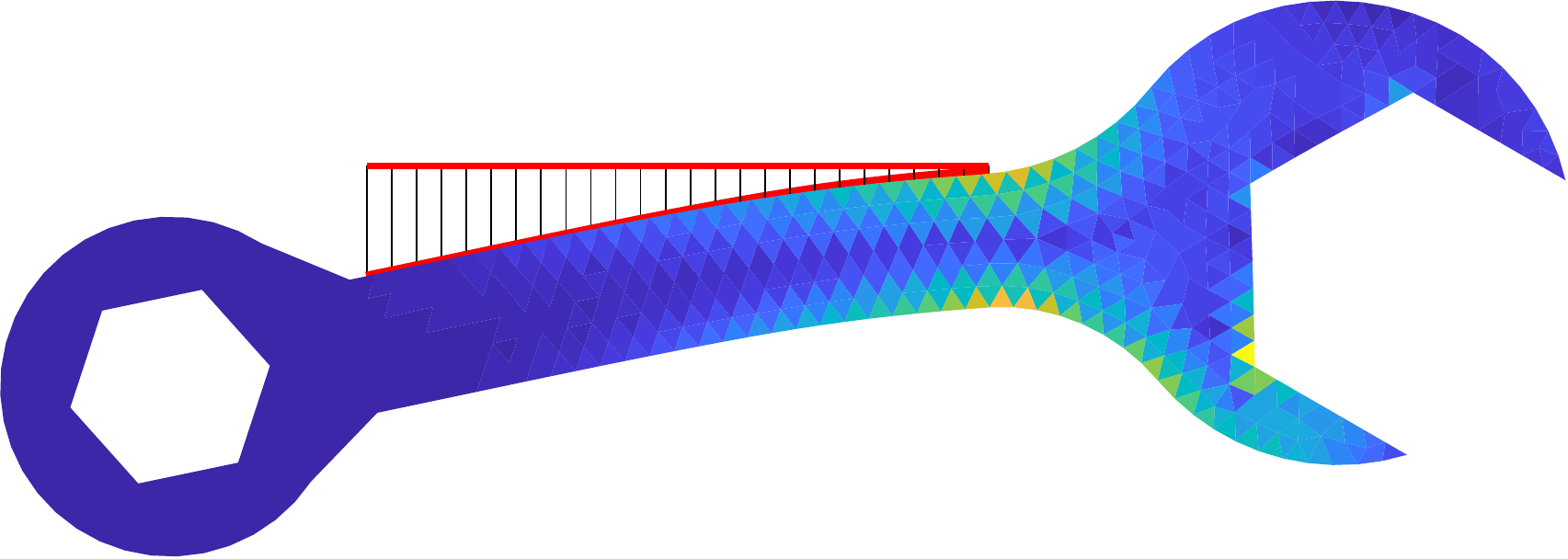}
     \caption{}
     \label{fig:wrench_realizations_b}
    \end{subfigure}
    \caption{Settings of the linear elasticity example: (a) A realization of the Young's modulus field, i.e., the parameter. The black arrows represent the force applied to the body, and the dashed lines represent the imposed boundary condition. (b) The von-Mises stress \citep{zienkiewicz2000finite} of the displacement field $u$ and the observed data given by the vertical displacement along the red line.}  \label{fig:wrench_realizations}
\end{figure} %

We compute 500 realizations of the gradients of the forward model and use these to estimate the matrices $H_{\overline{\X}}$ and $H_{\overline{\Y}}$ in~\eqref{eq:Gaussian_Hx} and~\eqref{eq:Gaussian_Hy}, using the whitening transformation; see Section \ref{sec:GaussianLik}. Figure~\ref{fig:wrench_param_eigs} plots sums of the trailing eigenvalues of $H_{\overline{\X}}$ and $H_{\overline{\Y}}$ for the parameter and data spaces, respectively; both are labeled as CMI in the plots (because our approach minimizes bounds for the conditional mutual information). The two sums of trailing eigenvalues correspond to the two terms in the upper bound for the expected KL divergence in~\eqref{eq:KL_UpperBound}. Fast decay of these eigenvalue sums indicates that linear dimension reduction can be used to accurately approximate the posterior distribution. We also evaluate the upper bound~\eqref{eq:KL_UpperBound} (up to the {same} unknown log-Sobolev constant) for parameter and data modes computed using either CCA or PCA. The approximation errors for subspaces computed using either of these strategies decay much more slowly than with our gradient-based dimension reduction approach. In this example, the number of computable CCA or PCA modes %
is also limited by the numerical rank of the covariance matrices of $\X$ and $\Y$.

\begin{figure}[!ht]
    \centering
    \begin{subfigure}[c]{0.45\textwidth}
        \includegraphics[width=\textwidth]{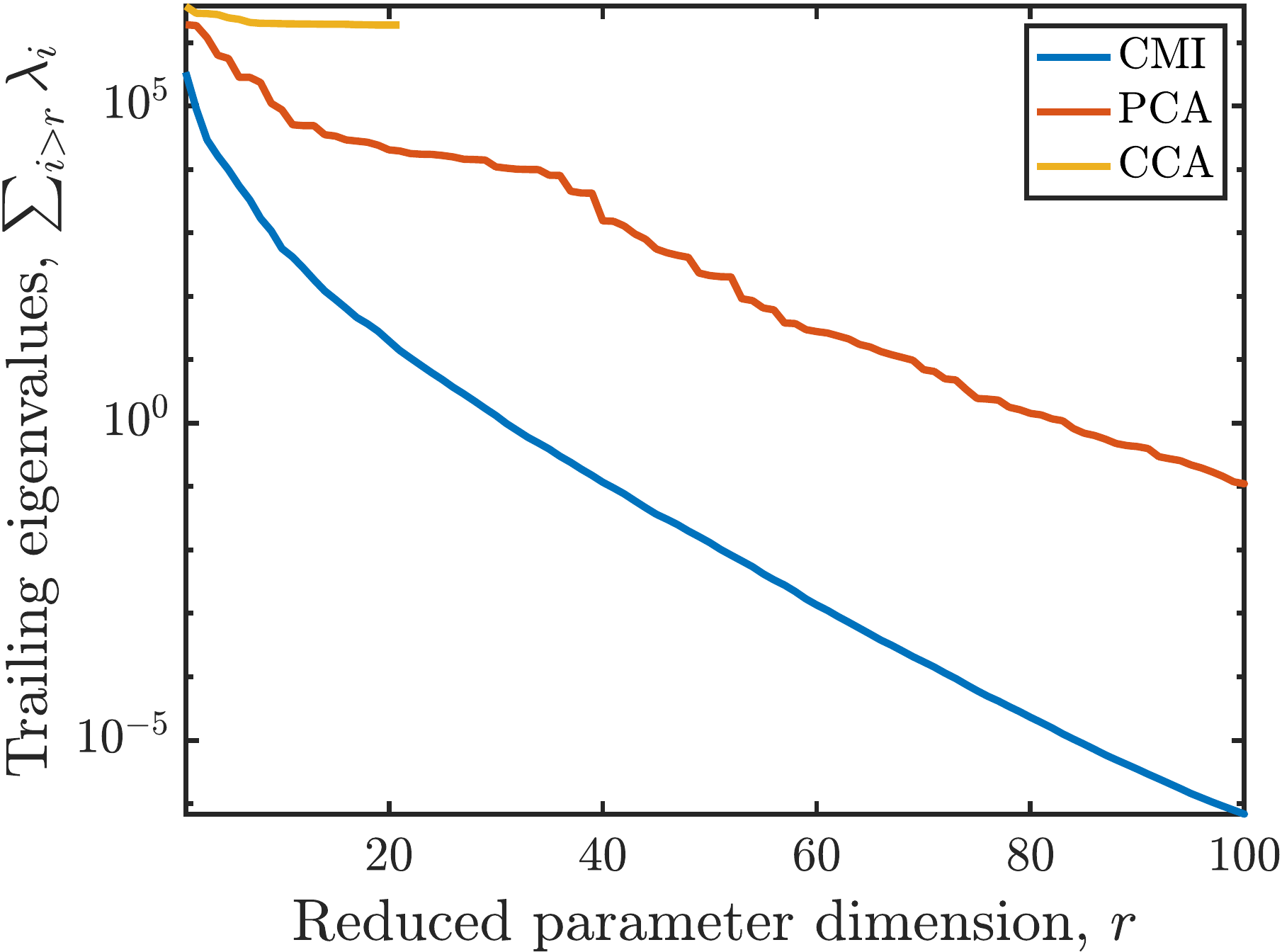}
        \caption{}
    \end{subfigure}
    \hspace{1cm}
    \begin{subfigure}[c]{0.45\textwidth}
        \includegraphics[width=\textwidth]{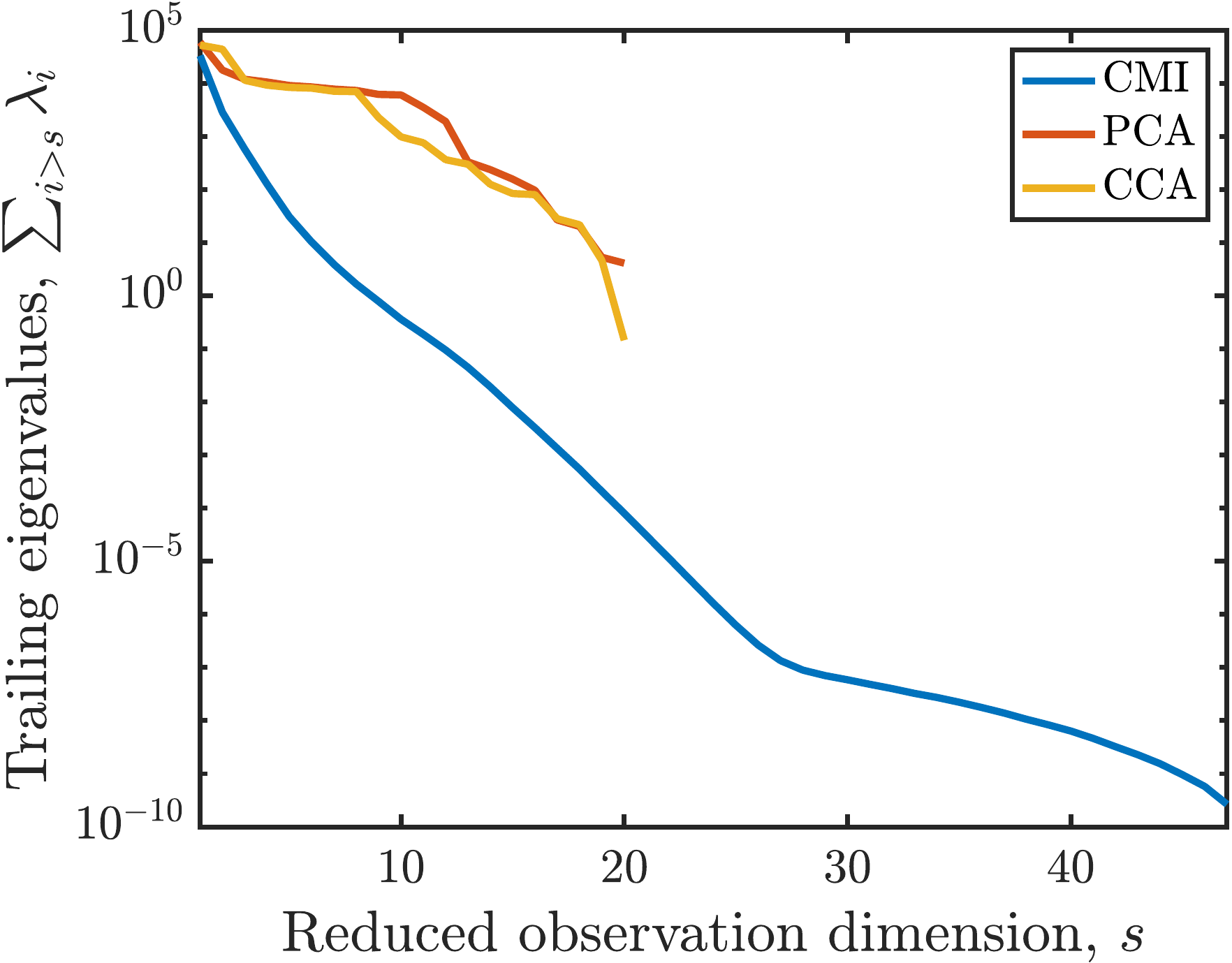}
        \caption{}
    \end{subfigure}
    \caption{Linear elasticity inverse problem: Upper bound of the expected KL divergence for three dimension reduction strategies, for increasing reduced dimensions $r$ or $s$: (a) parameter space reduction; (b) data space reduction.\label{fig:wrench_param_eigs}}
\end{figure}

Figure~\ref{fig:wrench_param_eigenvectors_CMI} plots the first three modes $(\Cpr^{1/2}\overline{u}_i)_{i=1}^3$ of the reduced parameter space, where $\overline{u}_i$ is an eigenvector of the diagnostic matrix $H_{\overline{X}}$. We observe that the informed part of the parameter is centered near the wrench's axis of rotation, where there is typically higher stress. In comparison, Figure~\ref{fig:wrench_param_eigenvectors_CCA} plots the first three parameter modes obtained using CCA, which display more global support. Analogously, Figure~\ref{fig:wrench_data_eigenvectors} plots the first five modes $(\Cobs^{1/2}\overline{v}_i)_{i=1}^5$ of the reduced data space, where $\overline{v}_i$ is an eigenvector of $H_{\overline{Y}}$. The first mode has a stronger dependence on the displacement at the left-most part of the wrench, which is also the point of highest vertical displacement. In comparison, we observe the first five modes obtained using CCA are more oscillatory, and hence capture higher-frequency components of the displacement field. %

\begin{figure}[!ht]
    \centering
    \begin{subfigure}[c]{0.45\textwidth}
        \includegraphics[trim={0 4.8cm 0 4.8cm}, clip,width=0.9\textwidth]{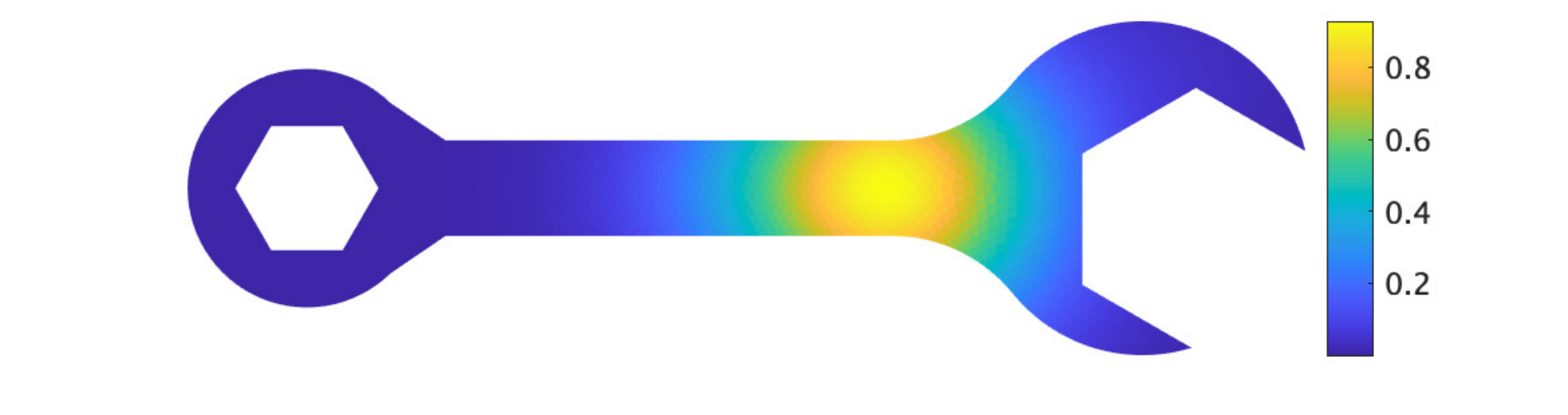}\\
        \includegraphics[trim={0 4.8cm 0 4.8cm}, clip,width=0.9\textwidth]{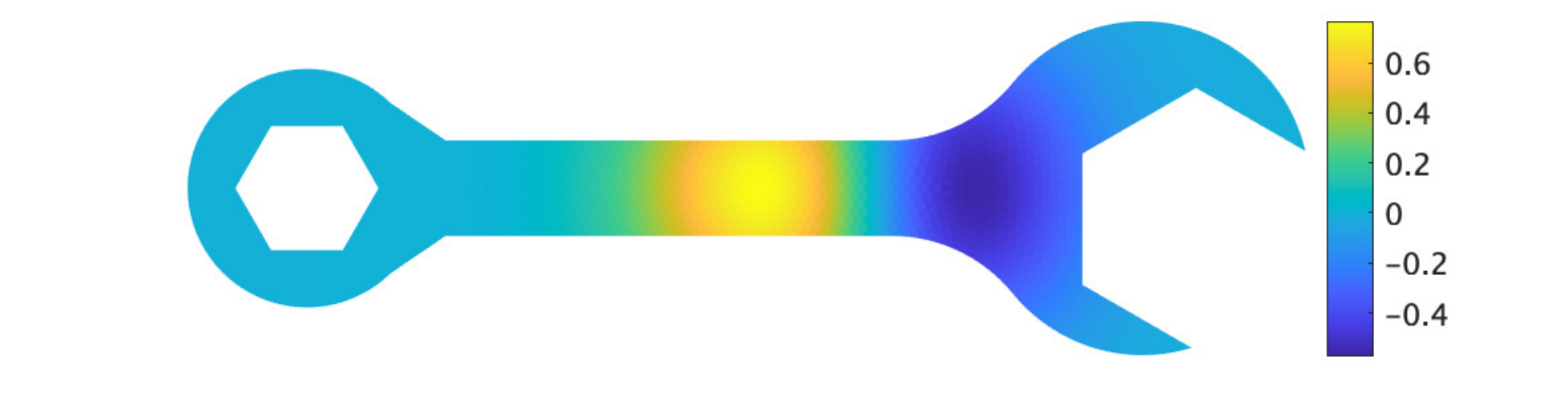}\\
        \includegraphics[trim={0 4.8cm 0 4.8cm}, clip,width=0.9\textwidth]{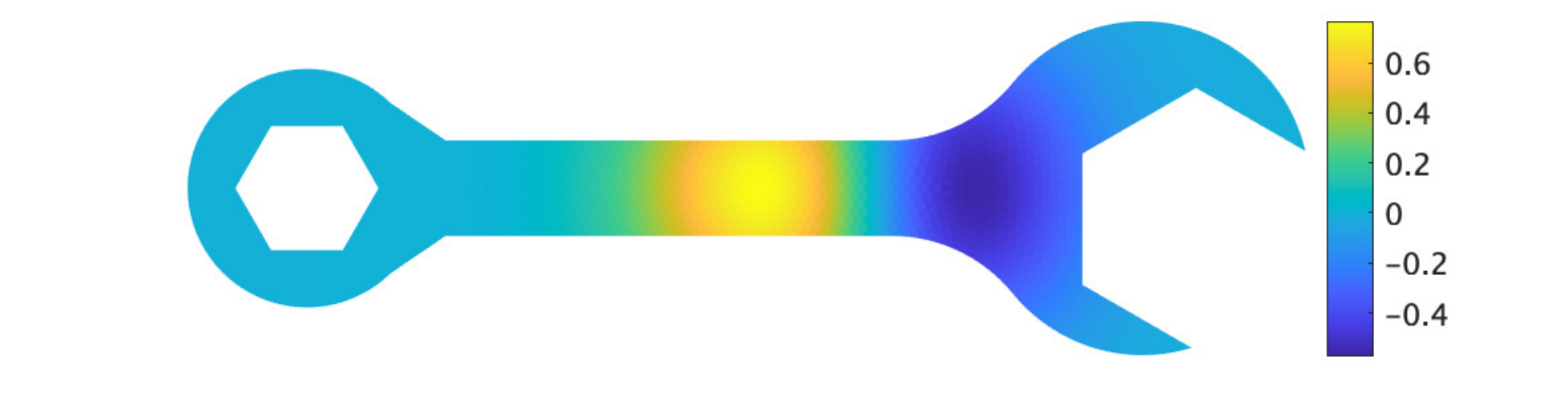}
        \caption{\label{fig:wrench_param_eigenvectors_CMI}}
    \end{subfigure}
    \hspace{1cm}
    \begin{subfigure}[c]{0.45\textwidth}
        \includegraphics[trim={0 4.8cm 0 4.8cm}, clip,width=0.9\textwidth]{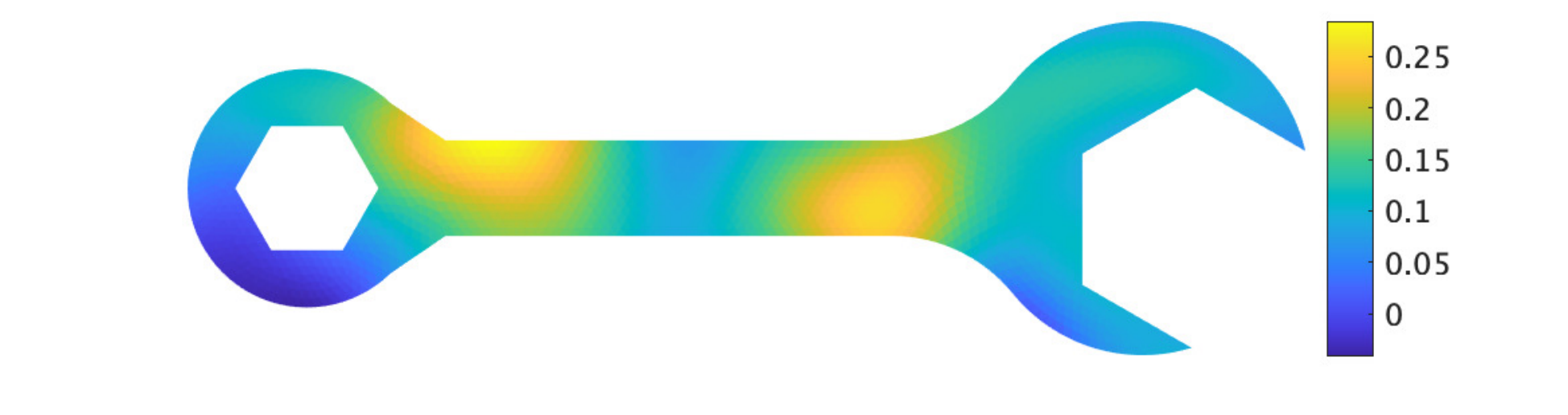}\\
        \includegraphics[trim={0 4.8cm 0 4.8cm}, clip,width=0.9\textwidth]{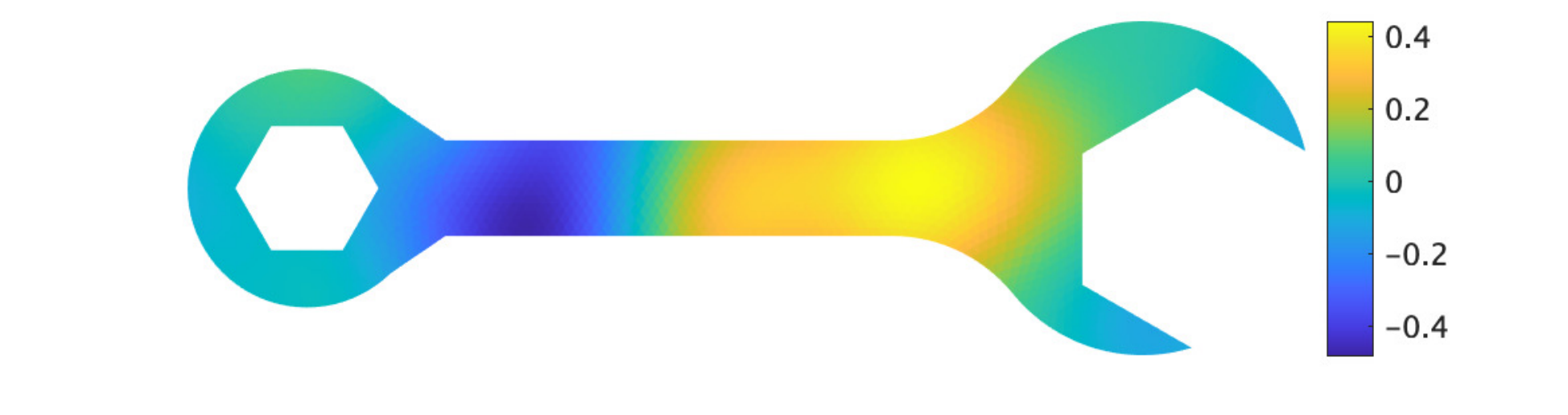}\\
        \includegraphics[trim={0 4.8cm 0 4.8cm}, clip,width=0.9\textwidth]{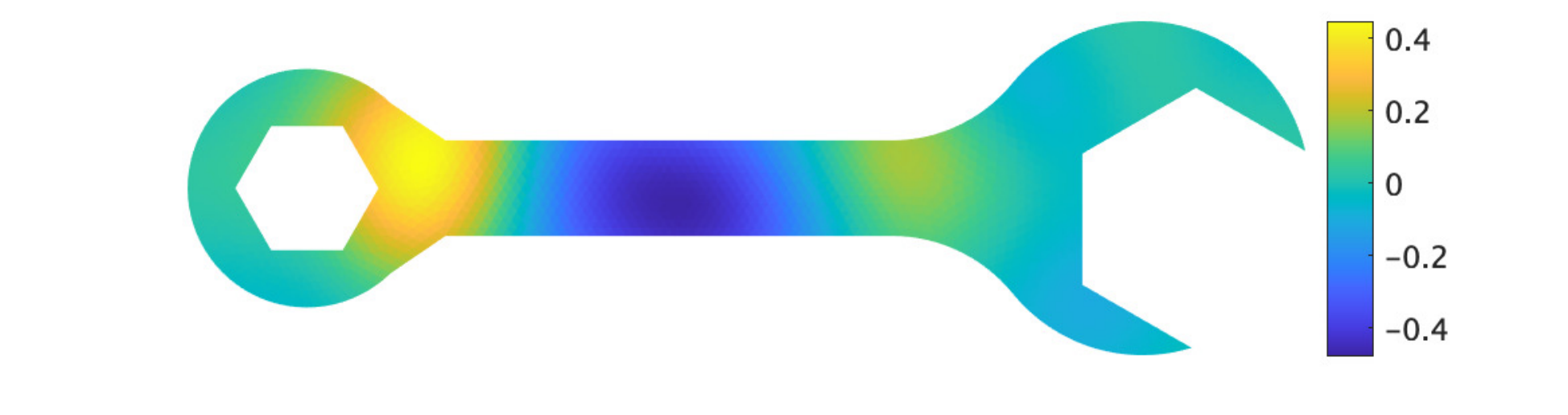}
        \caption{\label{fig:wrench_param_eigenvectors_CCA}}
    \end{subfigure}
    \caption{Linear elasticity inverse problem: The first three parameter-space modes obtained via (a) the gradient-based diagnostic matrix $H_{\overline{X}}$, and (b) canonical correlation analysis (CCA).}%
\end{figure}

\begin{figure}[!ht]
    \centering
    \begin{subfigure}[b]{0.45\textwidth}
        \includegraphics[width=\textwidth]{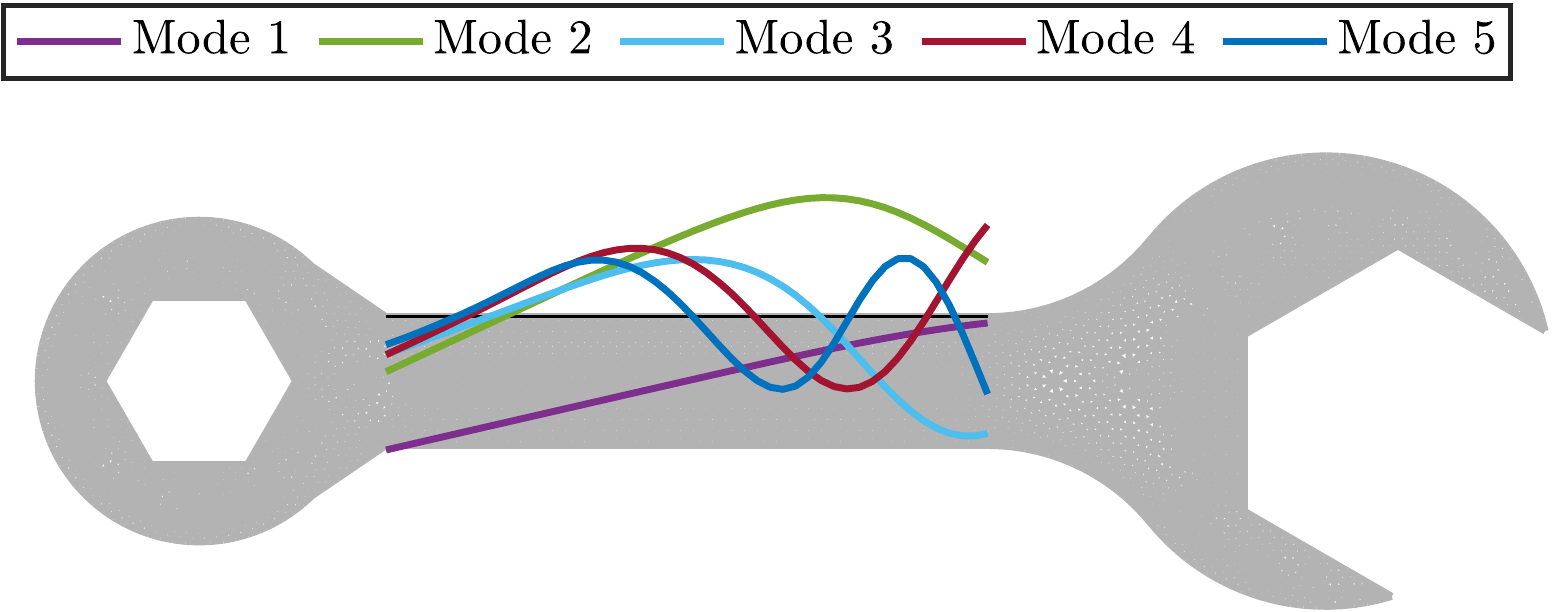}
        \caption{}
    \end{subfigure}
    \hspace{1cm}
    \begin{subfigure}[b]{0.45\textwidth}
        \includegraphics[width=\textwidth]{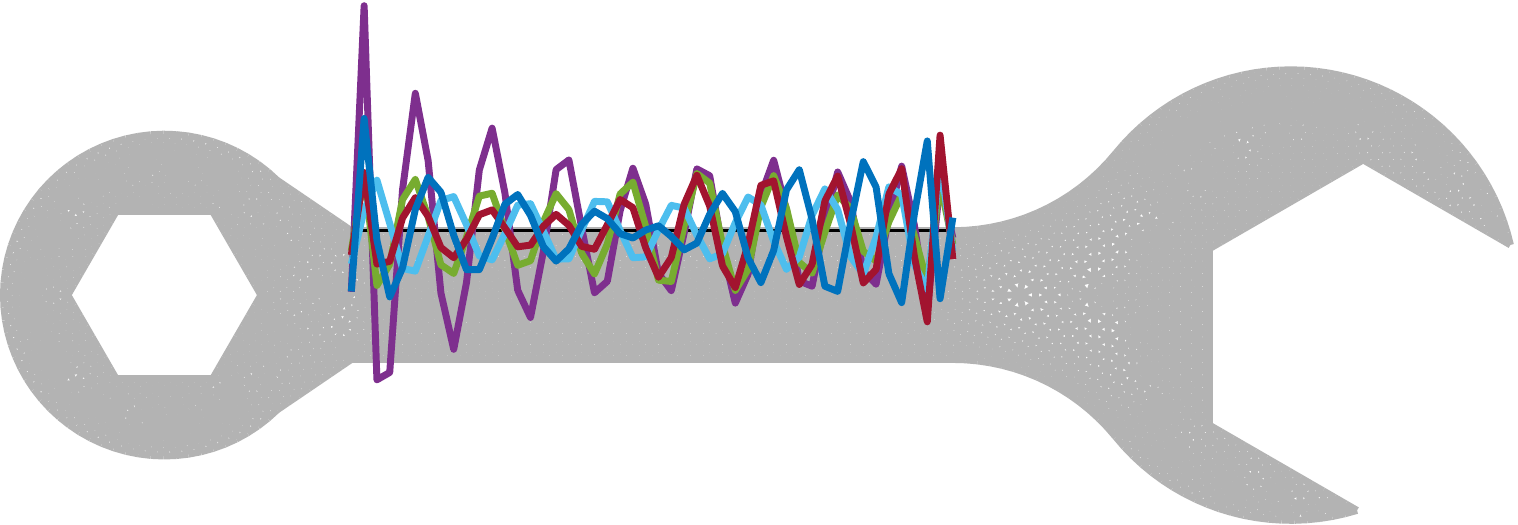}
        \caption{}
    \end{subfigure}
    \caption{Linear elasticity inverse problem: The first five data-space modes obtained from (a) the gradient-based diagnostic matrix $H_{\overline{Y}}$, and (b) CCA.} \label{fig:wrench_data_eigenvectors}
\end{figure}

Lastly, we show how to use the proposed dimension reduction technique to perform Bayesian inference using the measure transport approach from~\citet{baptista2020adaptive}. For this experiment we set the correlation length $\ell = 0.5$ and the marginal variance $\varsigma^2 = 9$. We follow the inference procedure described in Section~\ref{subsec:SampleBasedMethods}: after computing the matrices $H_{\overline{\X}}$ and $H_{\overline{\Y}}$ using $n = 500$ samples, we draw $n = 2000$ samples from the full joint distribution. Then we project those samples onto the reduced space in order to build the reduced transport map.
The transport maps are built using polynomials with adaptive degree; see the algorithm in~\citet{baptista2020adaptive}. 
In order to evaluate the quality of the resulting approximate posterior $\ApproxPost$, we decompose the expected KL divergence as
\begin{equation}
\mathbb{E}_{\Y}[\KLDiv(\Post||\ApproxPost)] = \mathbb{E}_{\X,\Y}[\log\Post(\X|\Y)] - \mathbb{E}_{\X,\Y}[\log\ApproxPost(\X|\Y)]
\end{equation}
and compute only the second term in the expression above, as the first term is both intractable and independent of $\ApproxPost$.
Table \ref{tab:nll_wrenchmark} presents sample estimates for $- \mathbb{E}_{\X,\Y}[\log\ApproxPost(\X|\Y)]$ using $5 \times 10^5$ independent samples from $\Joint$, which is often referred to as the average negative log-likelihood.
For each pair $(r,s)$, using the proposed dimension reduction method yields the best performance, i.e., the lowest value of the negative log-likelihood.

\begin{table}
    \caption{Average negative log-likelihoods $- \mathbb{E}_{\X,\Y}[\log\ApproxPost(\X|\Y)]$ for the approximate posteriors with reduced parameter and data of dimensions $(r,s)$. For readability, we add an arbitrary offset given by the average negative log-likelihood for CCA projections with $(r,s) = (1,1)$. \label{tab:nll_wrenchmark}}
    \begin{center}
    \begin{tabular}{|c|r r r r|}
     \hline
     $(r,s)$ & \multicolumn{1}{c}{$(1,1)$} & \multicolumn{1}{c}{$(2,1)$} & \multicolumn{1}{c}{$(3,1)$} & \multicolumn{1}{c|}{$(4,1)$} \\
     \hline \hline
     \text{CMI} & $-0.35 \pm 0.05$ & $-0.45 \pm 0.05$ & $-0.54 \pm 0.05$ & $-0.56 \pm 0.05$ \\
     \hline
     \text{PCA} & $-0.13 \pm 0.05$ & $-0.18 \pm 0.05$ & $-0.32 \pm 0.05$ & $-0.33 \pm 0.05$ \\
     \hline
     \text{CCA} & $0.00 \pm 0.05$ & $0.00 \pm 0.05$ & $0.00 \pm 0.05$ & $0.00 \pm 0.05$ \\
     \hline
    \end{tabular}
    \end{center}
    
    \begin{center}
    \begin{tabular}{|c|r r r r|}
     \hline
     $(r,s)$ & \multicolumn{1}{c}{$(1,2)$} & \multicolumn{1}{c}{$(2,2)$} & \multicolumn{1}{c}{$(3,2)$} & \multicolumn{1}{c|}{$(4,2)$} \\
     \hline \hline
     \text{CMI} & $-0.33 \pm 0.05$ & $-0.53 \pm 0.05$ & $-0.66 \pm 0.05$ & $-0.69 \pm 0.05$ \\
     \hline
     \text{PCA} & $-0.17 \pm 0.05$ & $-0.25 \pm 0.05$ & $-0.45 \pm 0.05$ & $-0.50 \pm 0.05$ \\
     \hline
     \text{CCA} & $0.00 \pm 0.05$ & $0.00 \pm 0.05$ & $0.00 \pm 0.05$ & $0.00 \pm 0.05$ \\ 
     \hline
    \end{tabular}
    \end{center}
\end{table}

\subsection{High-dimensional image data} \label{sec:Exp_2DImage}

We consider next an inference problem with a non-Gaussian likelihood. The goal is to infer the location of a feature in a high-dimensional image as well as an image hyperparameter; see~\citet{lueckmann2019likelihood}. 
The feature is described by its horizontal and vertical position in the image $-16\leq \x_1,\x_2\leq 16$. The hyperparameter $0.25\leq \gamma\leq 5$ defines the contrast. Thus, the parameter $\X=(\x_1,\x_2,\gamma)$ is a three-dimensional random vector endowed with uniform prior on $[-16,16]\times [-16,16]\times [0.25,5]$. Conditioned on $\X=\x$, the data $\Y|\X=\x$ is a $32\times32$ matrix drawn from the following continuous Bernoulli distribution,
\begin{equation} \label{eq:intensity_image}
    \pi_{\Y|\X}(\y|\x)\propto 
    \prod_{i,j=1}^{32} p_{ij}(\x)^{\y_{ij}} (1-p_{ij}(\x))^{1 - \y_{ij}} \ ,
\end{equation}
where
\begin{align*}
    p_{ij}(\x) &= 0.9 - 0.8\exp\left (-\frac12\left(\frac{(x^{(i)}_{1} - x_1)^2 + (x^{(j)}_2  - x_2)^2}{\sigma^2}\right)^\gamma\right),
\end{align*}
and $\{x^{(i)}_1\}_{i=1}^{32}$ and $\{x^{(j)}_2\}_{j=1}^{32}$ are the vertical and horizontal discretizations of $[-16,16]^2$.
In our experiment we always set $\sigma = 3$. 
In contrast to the setting in~\citet{lueckmann2019likelihood}, which uses a discrete (instead of a continuous) Bernoulli distribution for $Y|X$, we employ a continuous and differentiable likelihood model so that one can compute
$$
 \nabla_X\nabla_{Y_{ij}} \log\pi_{Y|X}(y|x) = \frac{\nabla_X p_{ij}(x)}{p_{ij}(x)(1 - p_{ij}(x))}.
$$

Our goal here is to reduce the dimension of the data $\Y \in \R^{1024}$ without projecting the already low-dimensional parameters $\X = (\x_1,\x_2,\gamma) \in \R^3$. 
Figure~\ref{fig:2d_sample_images} displays three realizations of $\Y$. 
Data dimension reduction can be interpreted as defining summary statistics for the data that are linear projections of $\Y$ such that $\pi_{\X|\Y} \approx \pi_{\X|\Y_s}$. Automatic methods for defining summary statistics are relevant for many likelihood-free inference procedures based on approximate Bayesian computation~\citep{fearnhead2012constructing}, whose performance is affected by the dimension of the data.

\begin{figure}[!ht]
    \centering
    \begin{subfigure}[c]{0.25\textwidth}
     \centering
     \includegraphics[width=\textwidth]{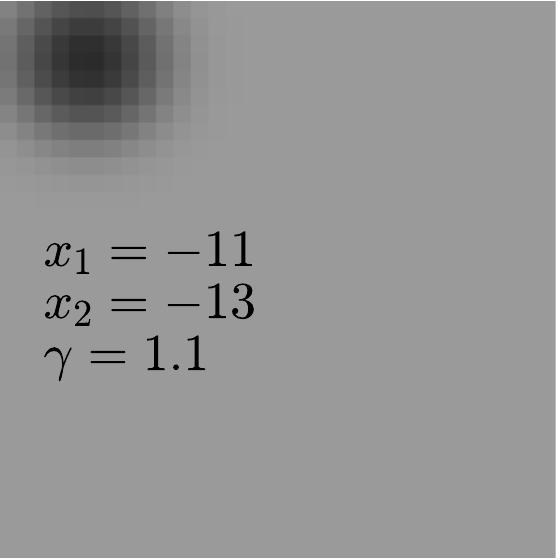}
    \end{subfigure}
    \begin{subfigure}[c]{0.25\textwidth}
     \centering
     \includegraphics[width=\textwidth]{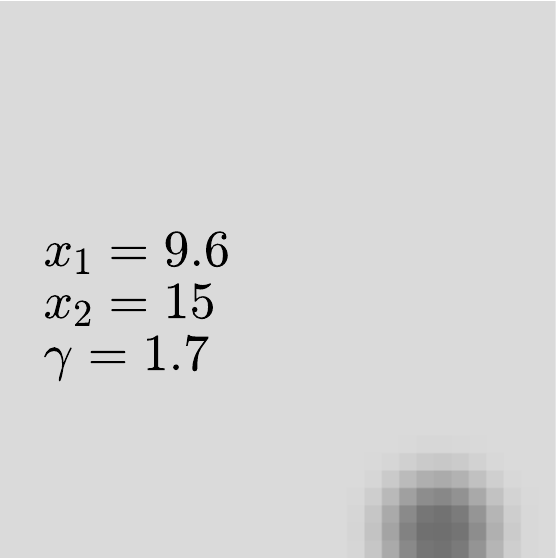}
    \end{subfigure}
    \begin{subfigure}[c]{0.25\textwidth}
     \centering
     \includegraphics[width=\textwidth]{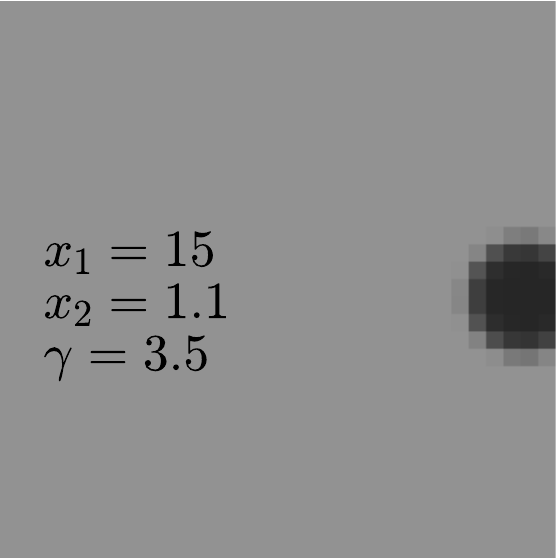}
    \end{subfigure}
    \caption{Imaging example: Three samples of the image intensities $\Y$. %
    \label{fig:2d_sample_images}}
\end{figure}

\begin{figure}[!ht]
    \centering
    \begin{subfigure}[c]{0.45\textwidth}
        \includegraphics[width=\textwidth]{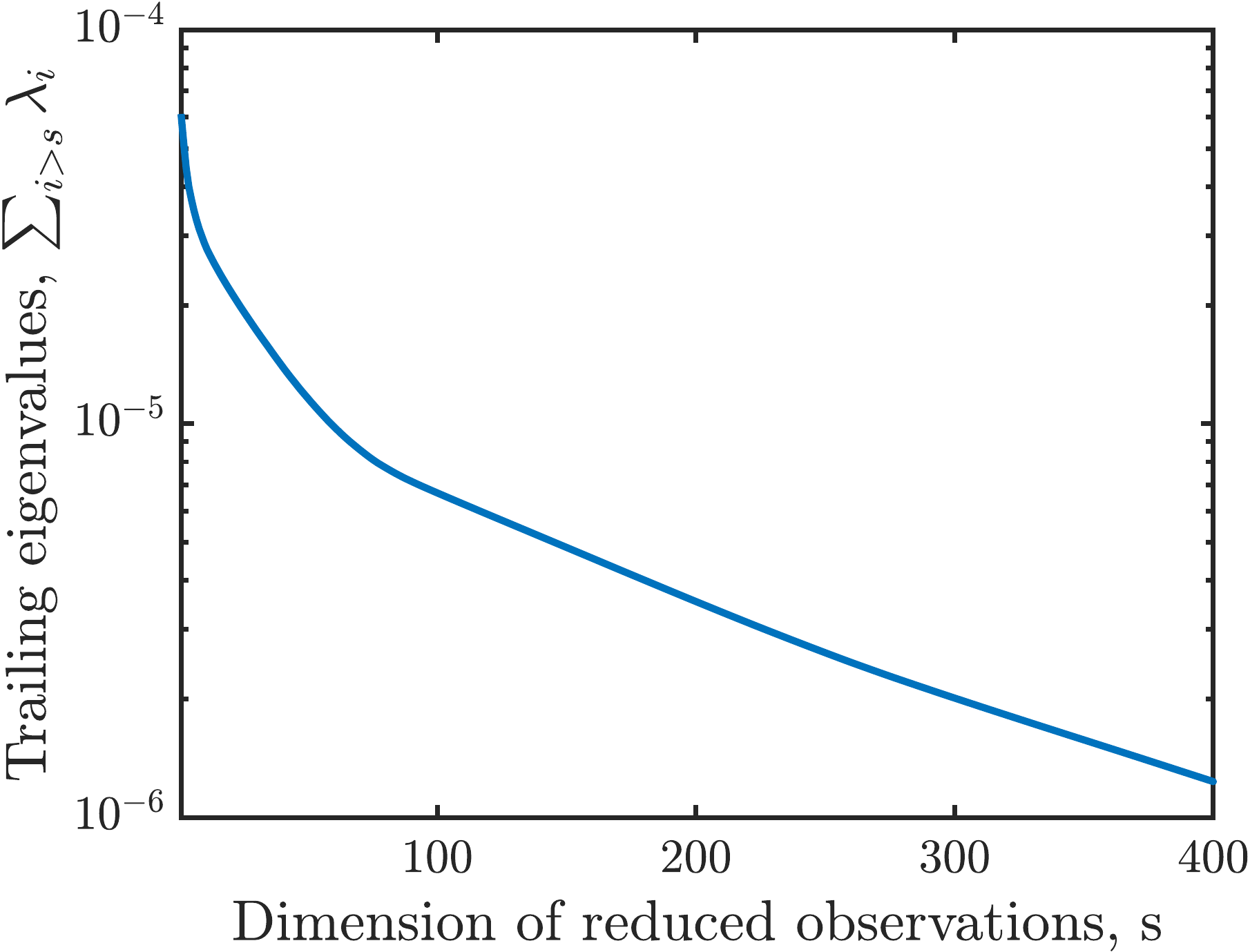}
        \caption{}
        \label{fig:2d_eigenvalues}
    \end{subfigure}
    \hspace{1cm}
    \begin{subfigure}[c]{0.45\textwidth}
        \includegraphics[width=\textwidth]{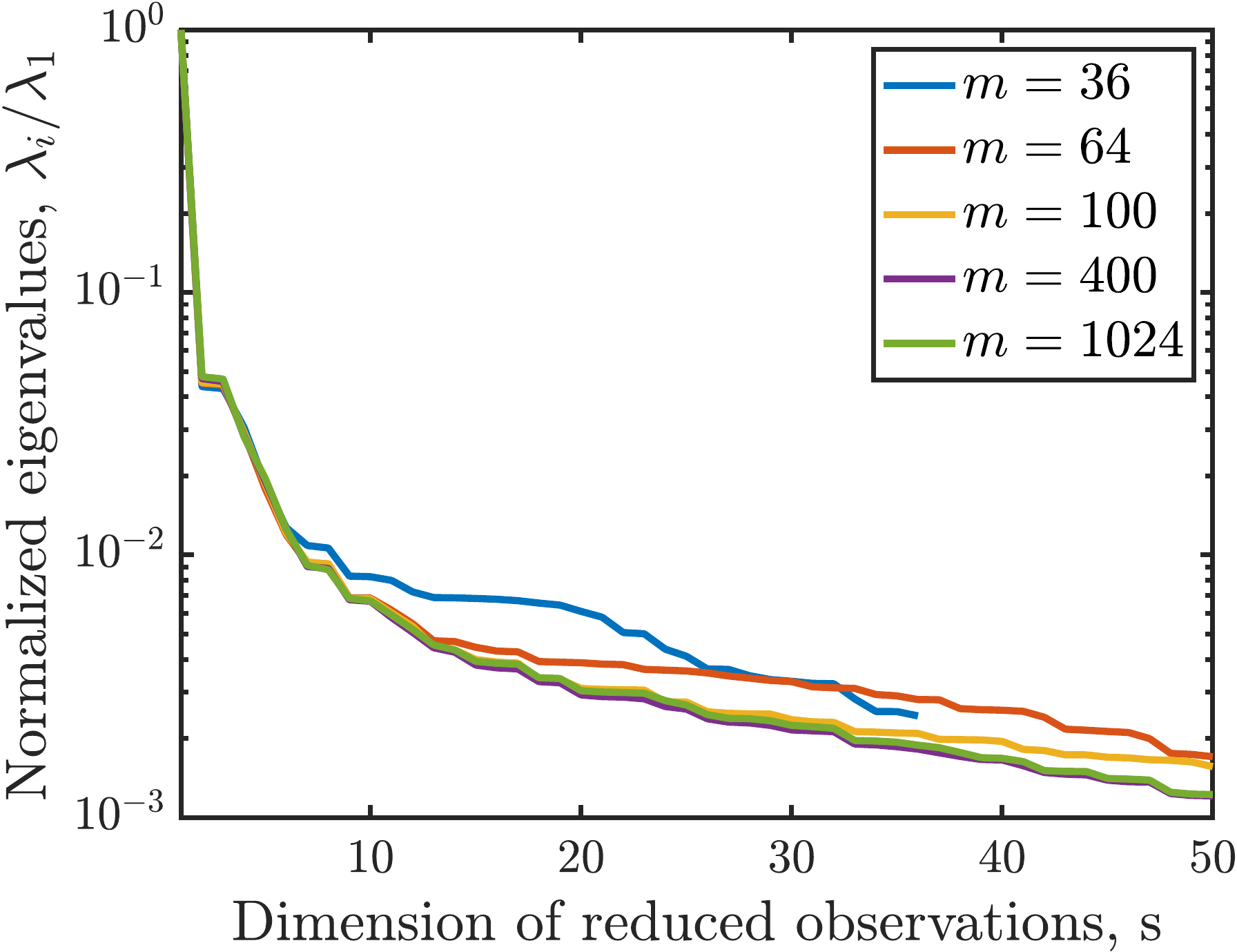}
        \caption{\label{fig:2d_eigenvalues_vs_gridresolution}}
    \end{subfigure}
    \caption{Imaging example: (a) Sum of the trailing eigenvalues of $H_\Y$ for projecting the data. (b) First 100 eigenvalues with increasing image resolution.} %
\end{figure} %

To reduce the dimension of $\Y$, we evaluate the mixed partial derivatives of the log-likelihood $\nabla_\X \nabla_\Y \log\pi_{\Y|\X} \in \R^{3 \times 1024}$ at $n = 10^5$ samples $(\X,\Y) \sim \pi_{\X,\Y}$ and assemble a Monte Carlo estimate for the matrix $H_\Y$ in~\eqref{eq:DiagnosticMatrixY}. Figure~\ref{fig:2d_eigenvalues} plots the error indicator $\sum_{i=s+1}^m \lambda_i(H_\Y)$ which, up to the unknown constant $\overline{C}(\Joint)$, corresponds to the right-hand side of~\eqref{eq:PostErr_Subspaces} with $r = d$. We observe fast decay in this sum, which indicates that low-dimensional projections of the image may be sufficient to update the parameters. In Figure~\ref{fig:2d_eigenvalues_vs_gridresolution} we demonstrate that this decay is unaffected by the grid resolution.

Figure~\ref{fig:2d_image_eigenvectors} displays the ten leading eigenvectors of $H_{\Y}$. We observe low-frequency oscillations in the first eigenvectors, which are sufficient to approximately determine parameters such as the location of the circular blob, while higher-order eigenvectors distinguish finer features of the image. 

\begin{figure}[!ht]
    \centering
    \includegraphics[width=0.19\textwidth]{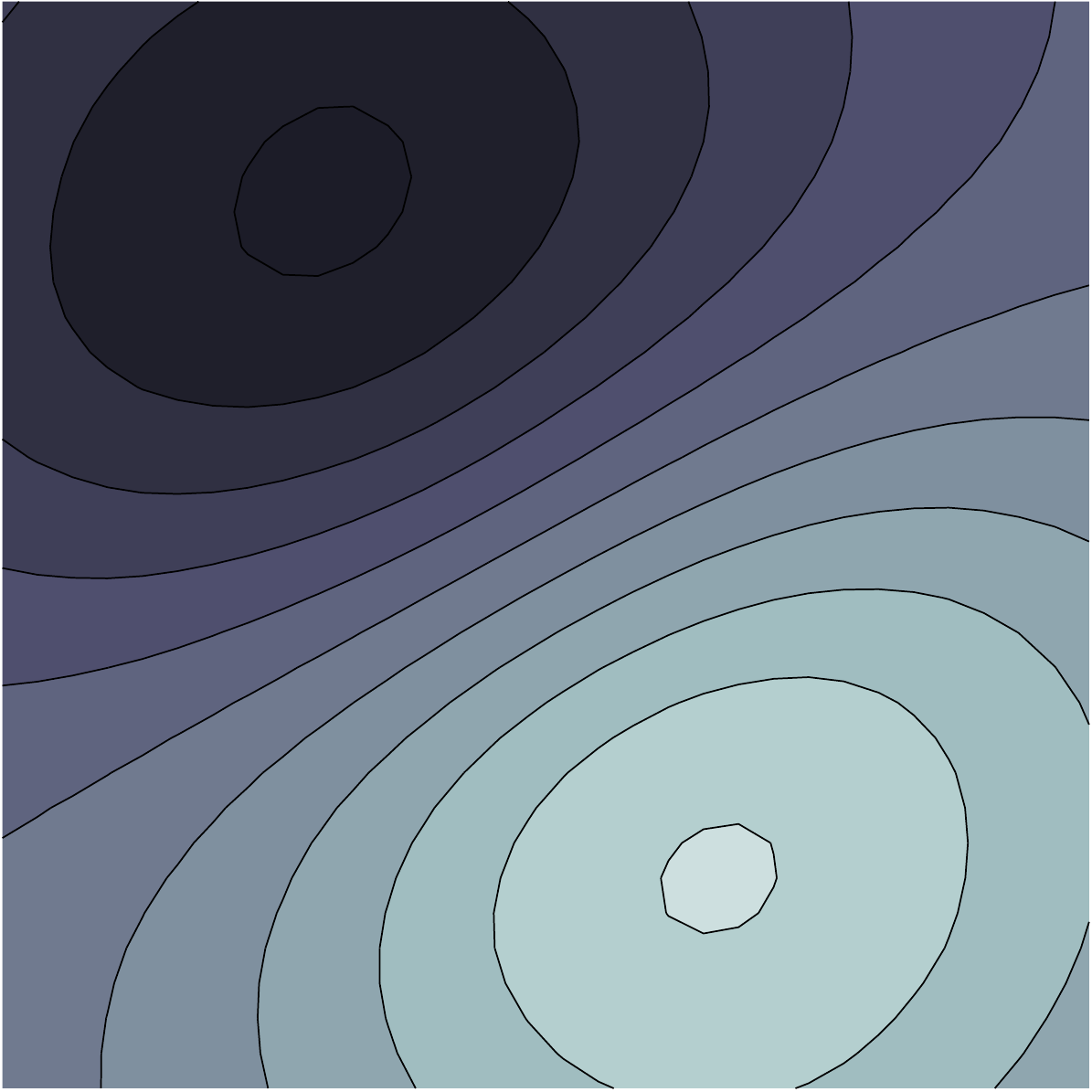}
    \includegraphics[width=0.19\textwidth]{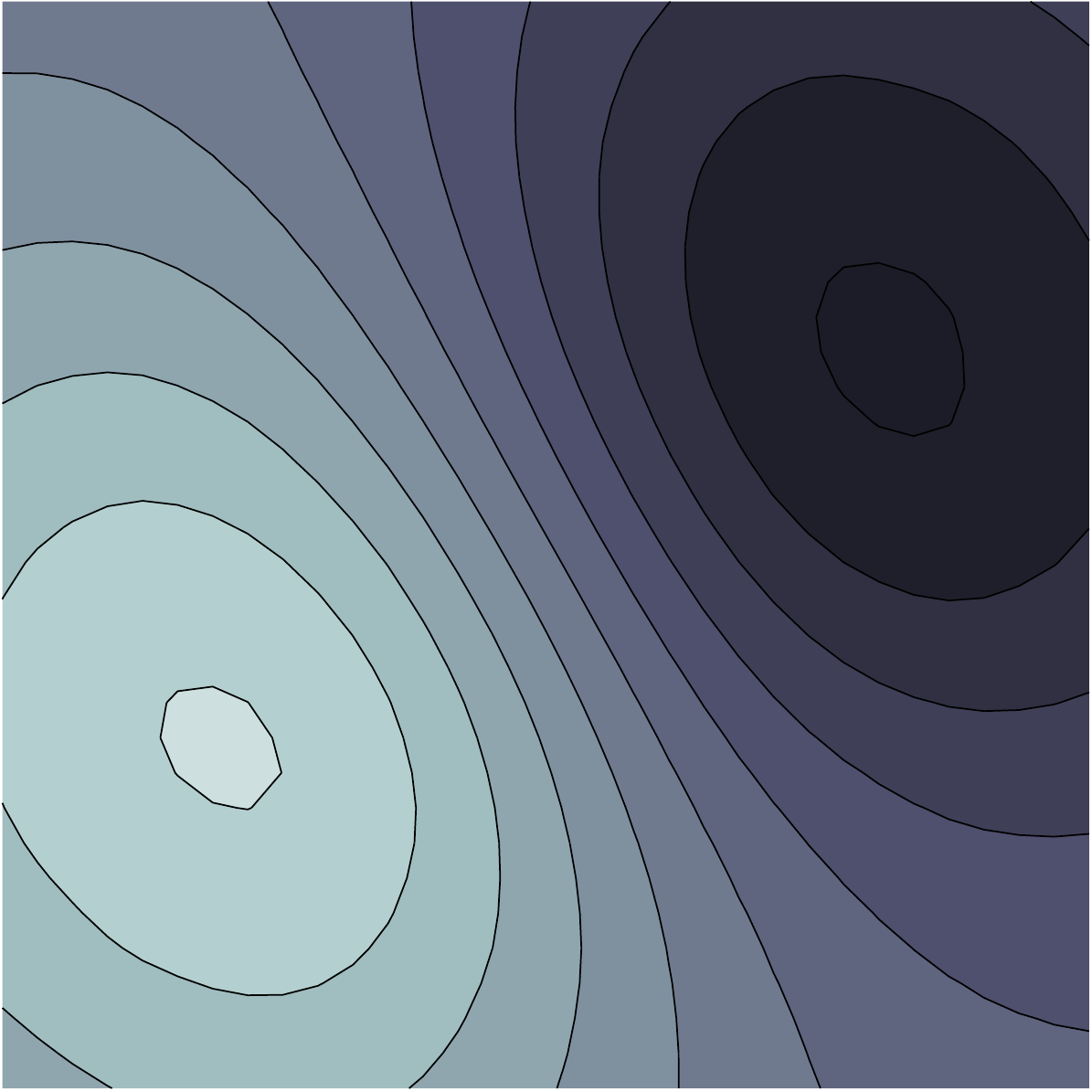}
    \includegraphics[width=0.19\textwidth]{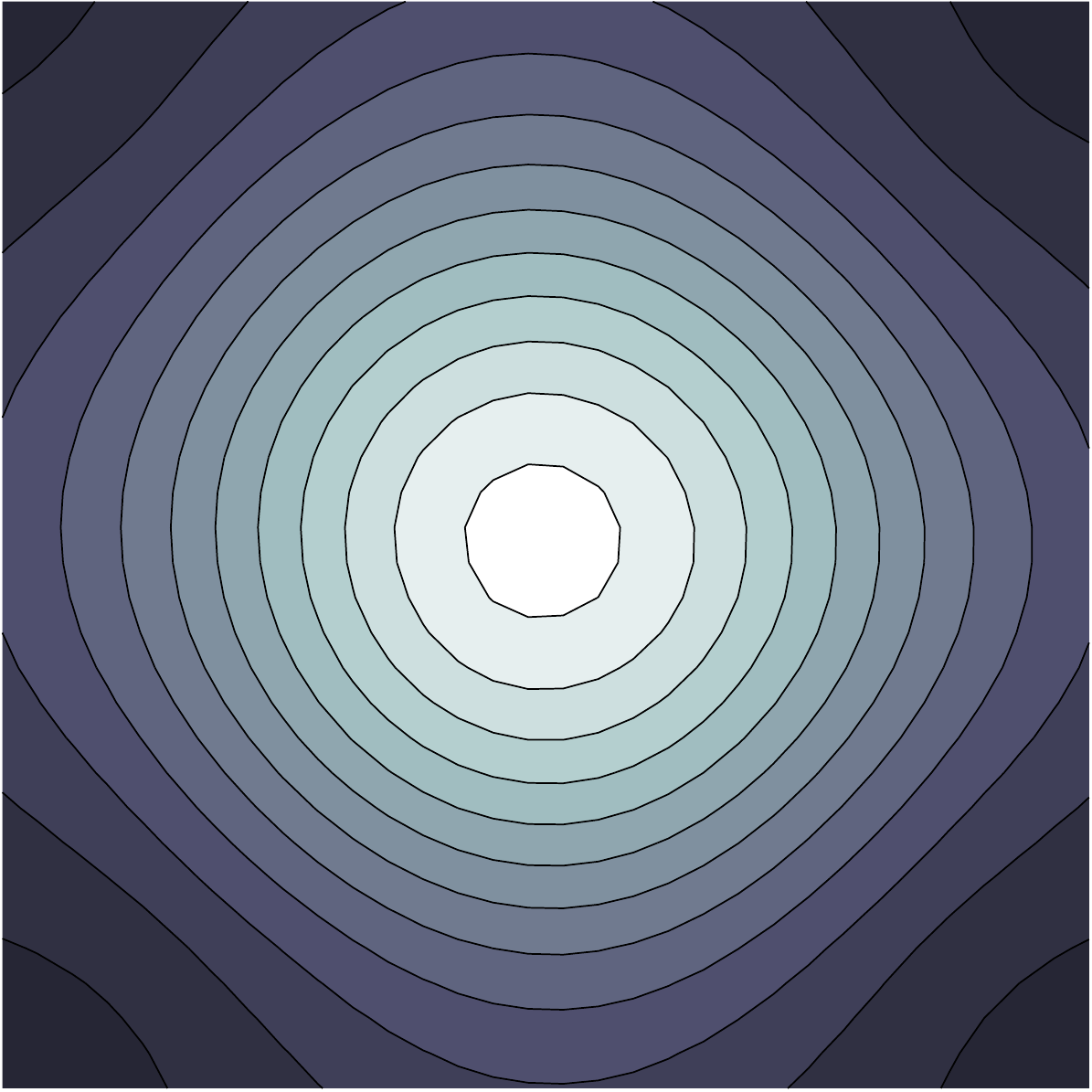}
    \includegraphics[width=0.19\textwidth]{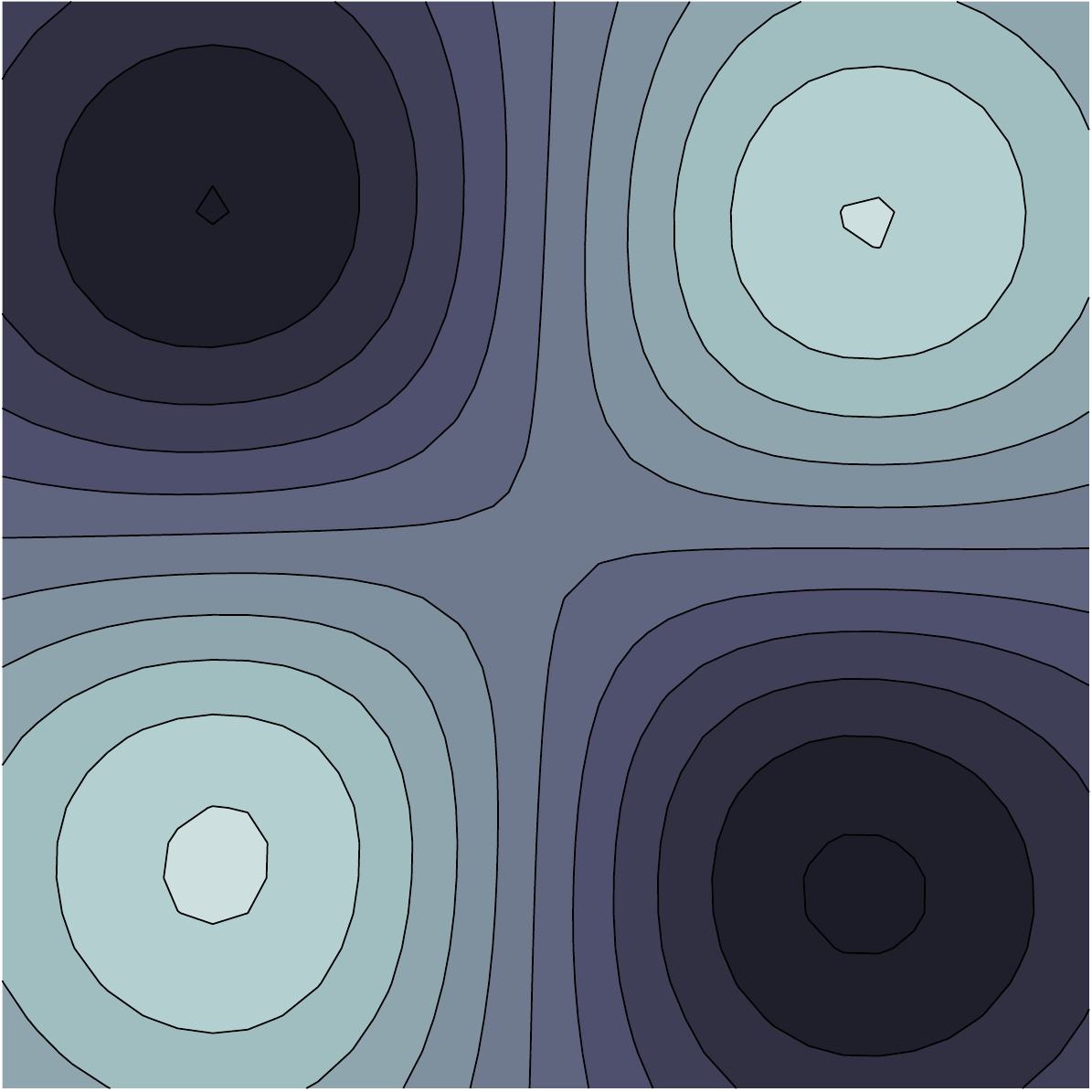}\\
    \includegraphics[width=0.19\textwidth]{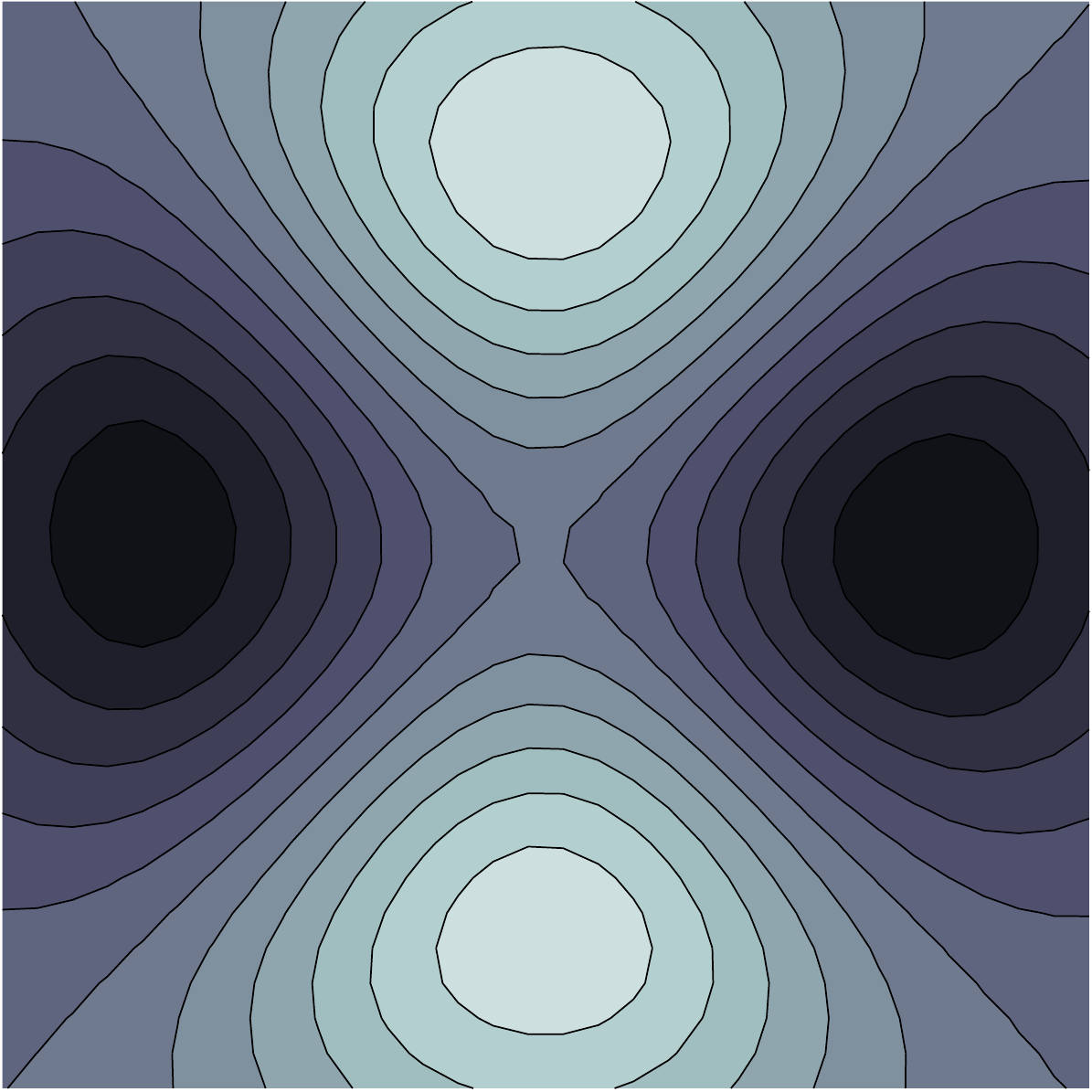}
    \includegraphics[width=0.19\textwidth]{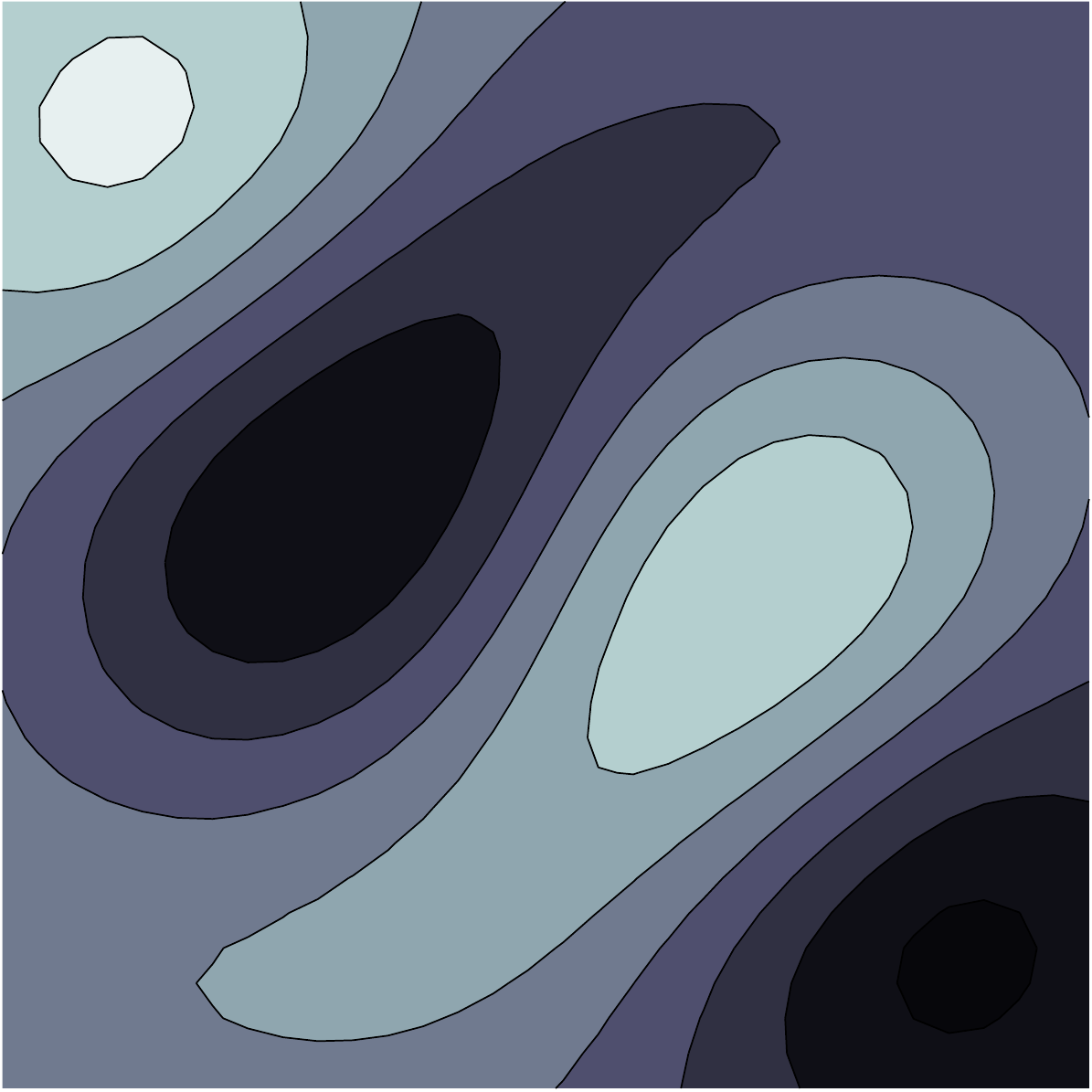}
    \includegraphics[width=0.19\textwidth]{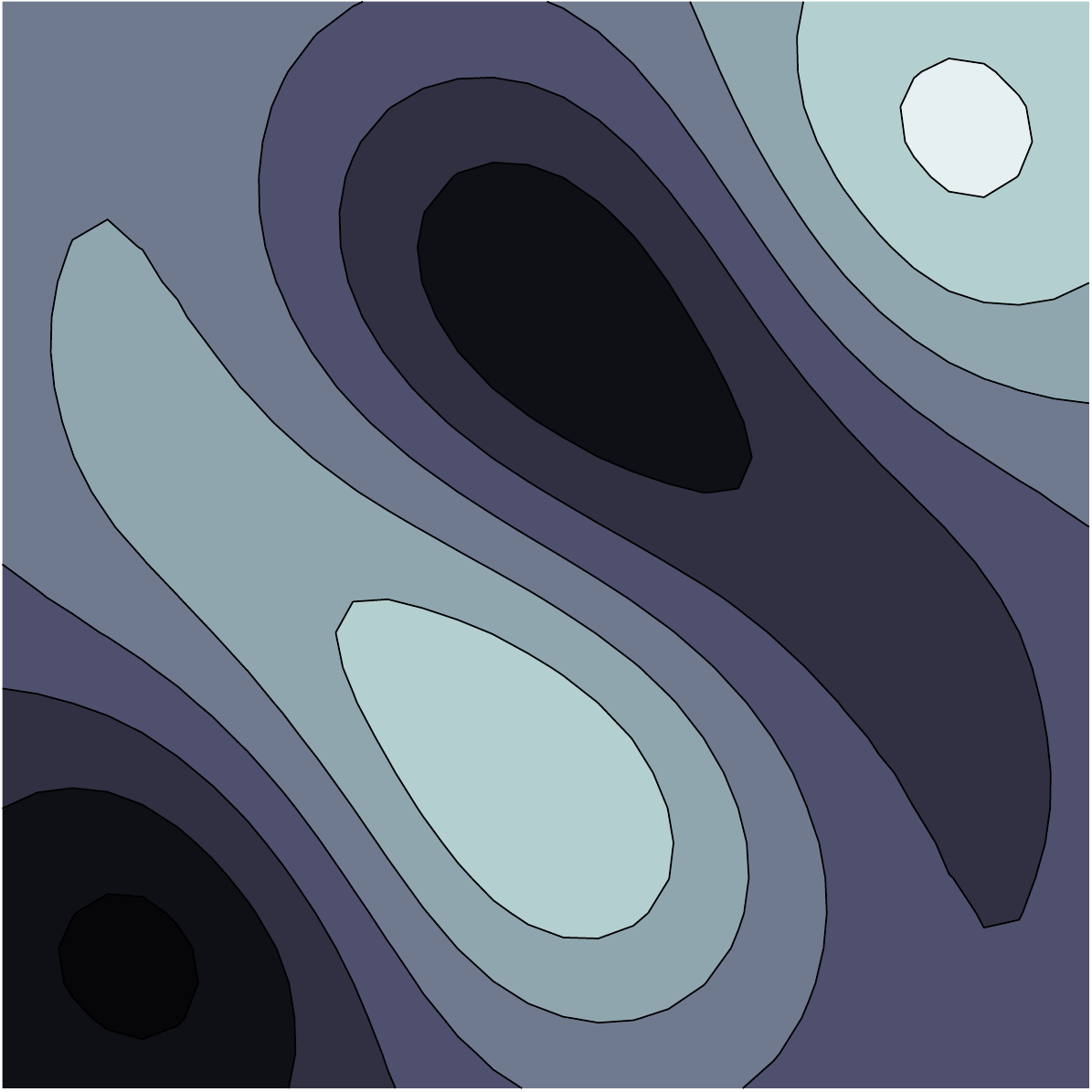}
    \includegraphics[width=0.19\textwidth]{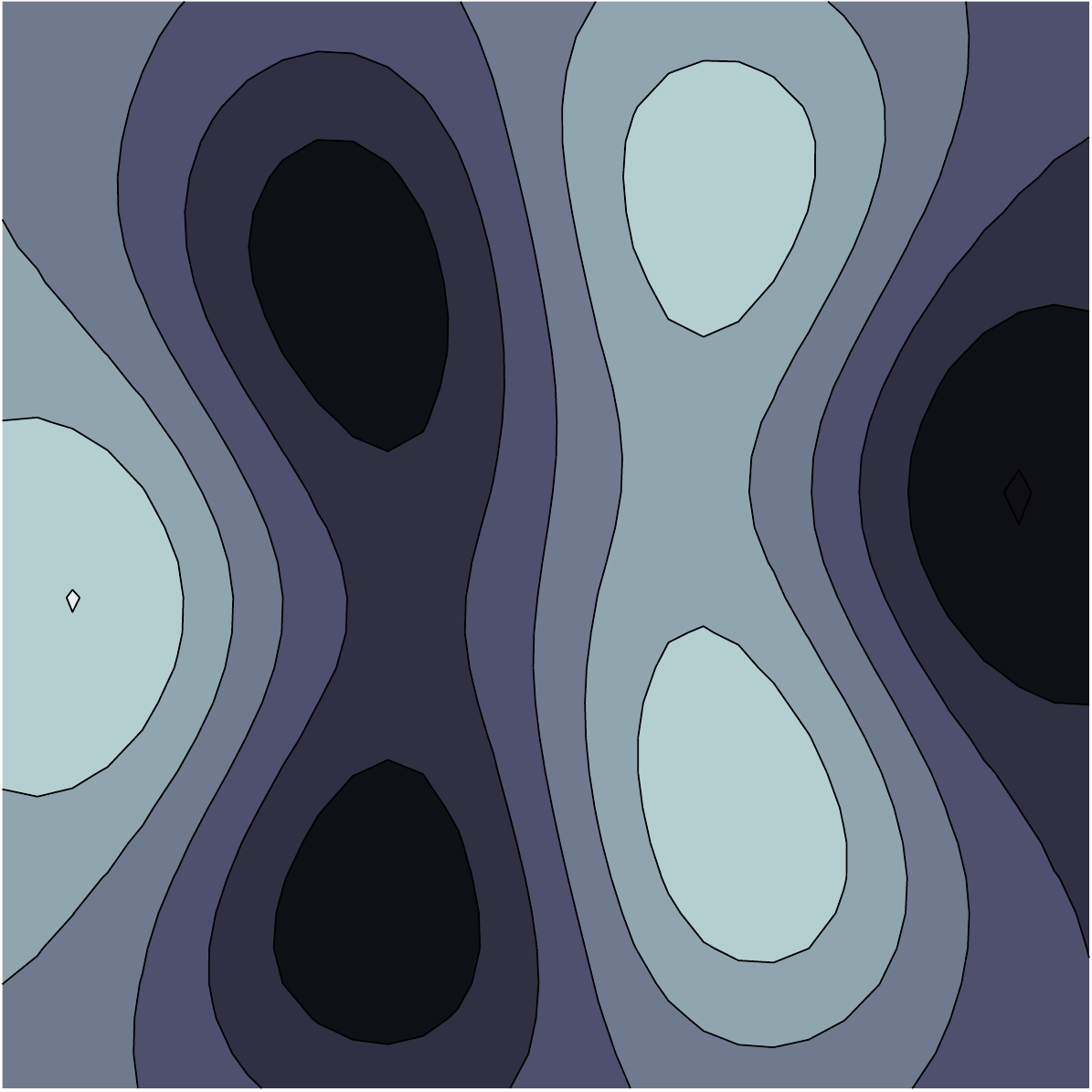}
    \caption{Imaging example: The first eight leading non-constant eigenvectors of the data diagnostic matrix $H_\Y$ \label{fig:2d_image_eigenvectors}}
\end{figure}

Gradients of the log-likelihood are also useful to identify \emph{goal-oriented} subspaces of the data, i.e., subspaces informative for a chosen subset of the parameters. Suppose that the vertical position $x_2$ and the contrast parameter are not of interest, so that $\X \equiv x_1$ is the only parameter we wish to infer. By doing this, we modify the diagnostic matrix $H_{\Y}$ by using only the single row $\nabla_{\X_1} \nabla_\Y \log\pi_{\Y|\X}$, i.e., we ignore the other parameters and only consider approximating the distribution of $\X_1 | \Y$.
Figure~\ref{fig:2d_image_eigenvectors_goal_oriented_xpos} plots the leading eigenvectors of the resulting diagnostic matrix. We see that the first four eigenvectors of $H_{\Y}$ capture horizontal variations in the image, while remaining nearly constant along the vertical axis. Analogous structure is observed if the $x_1$ position and $\gamma$ are not of interest, so that $\X \equiv x_2$. The first four eigenvectors of $H_{\Y}$ for this case are displayed in Figure~\ref{fig:2d_image_eigenvectors_goal_oriented_ypos}. 

\begin{figure}[!ht]
    \centering
    \begin{subfigure}{\textwidth}
        \centering
        \includegraphics[width=0.19\textwidth]{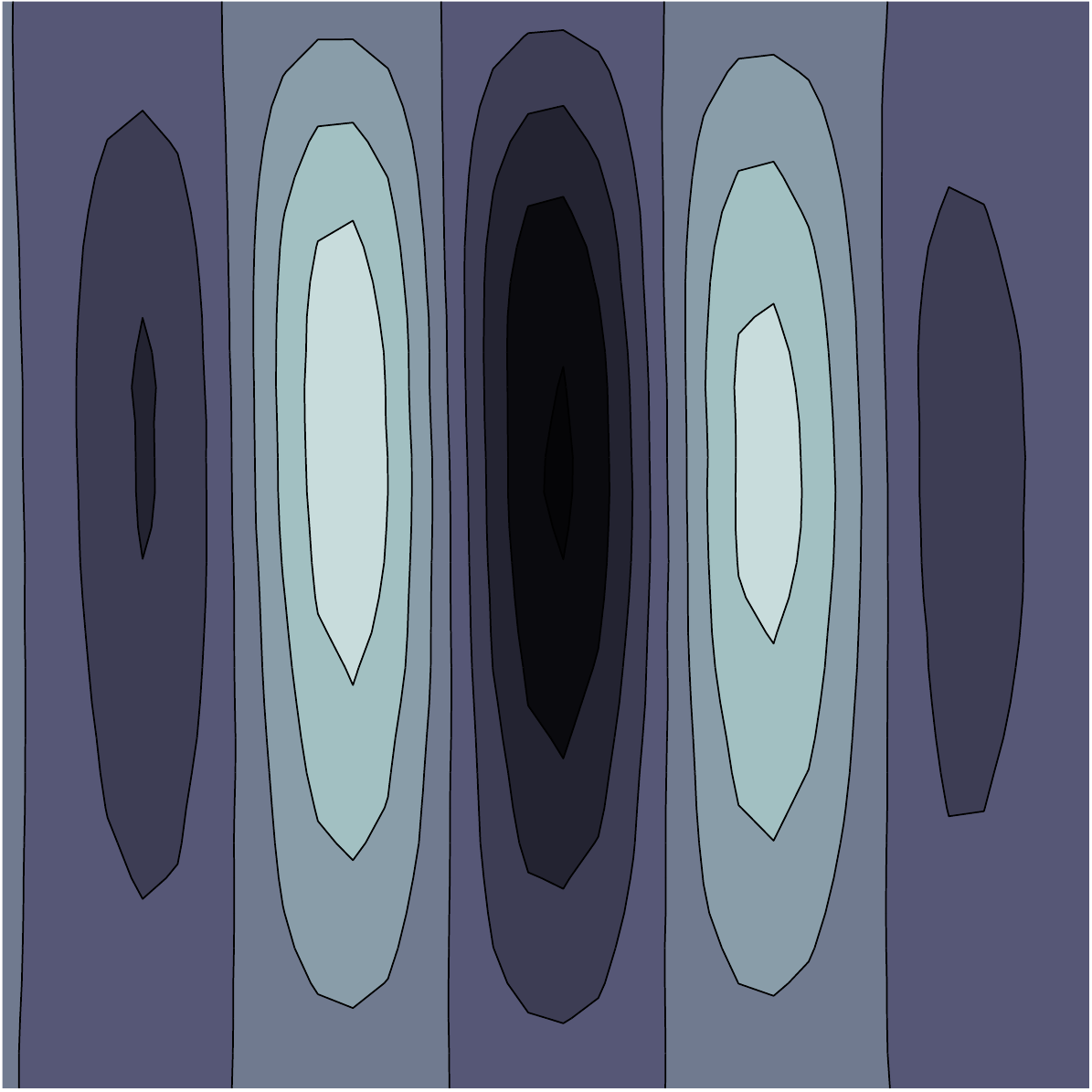}
        \includegraphics[width=0.19\textwidth]{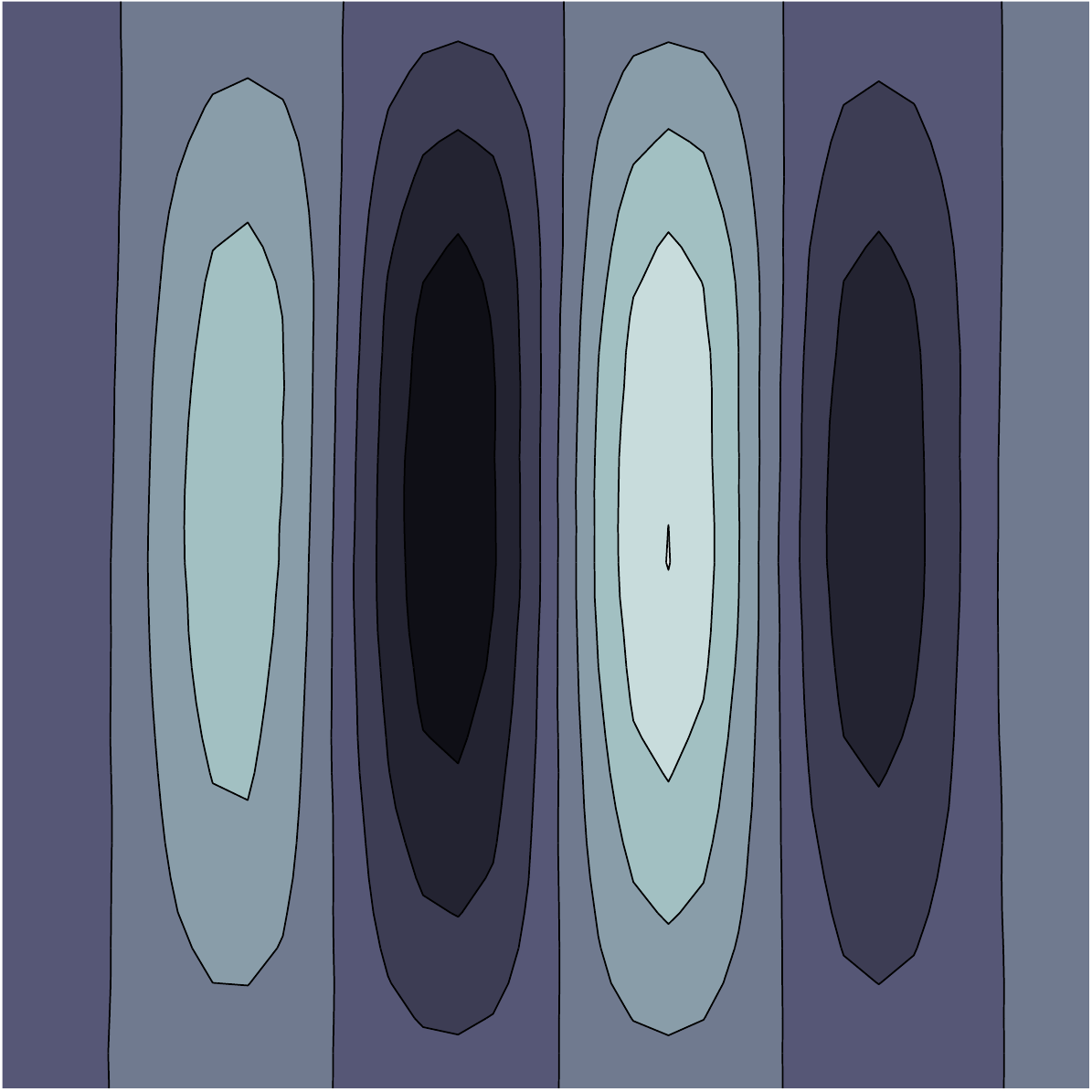}
        \includegraphics[width=0.19\textwidth]{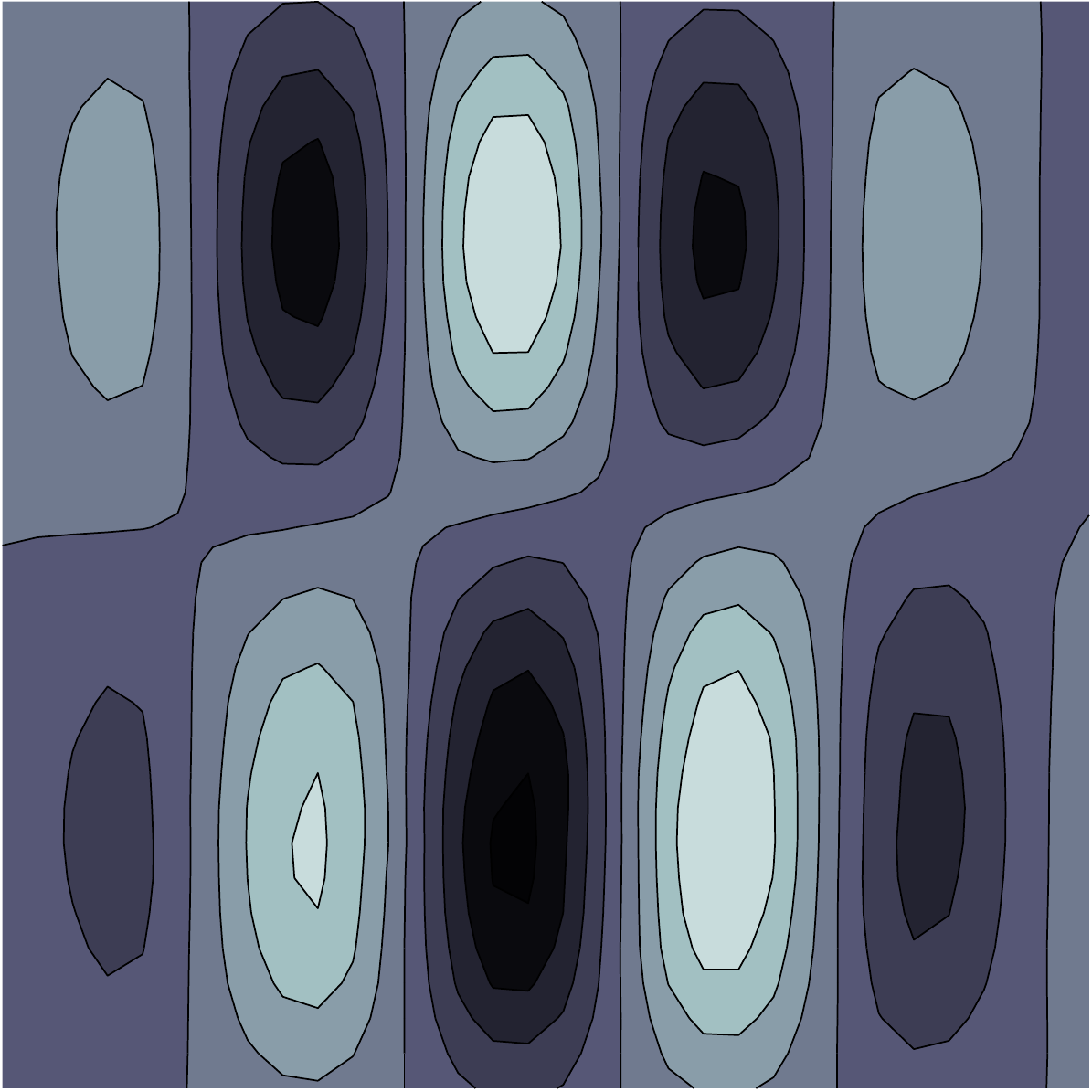}
        \includegraphics[width=0.19\textwidth]{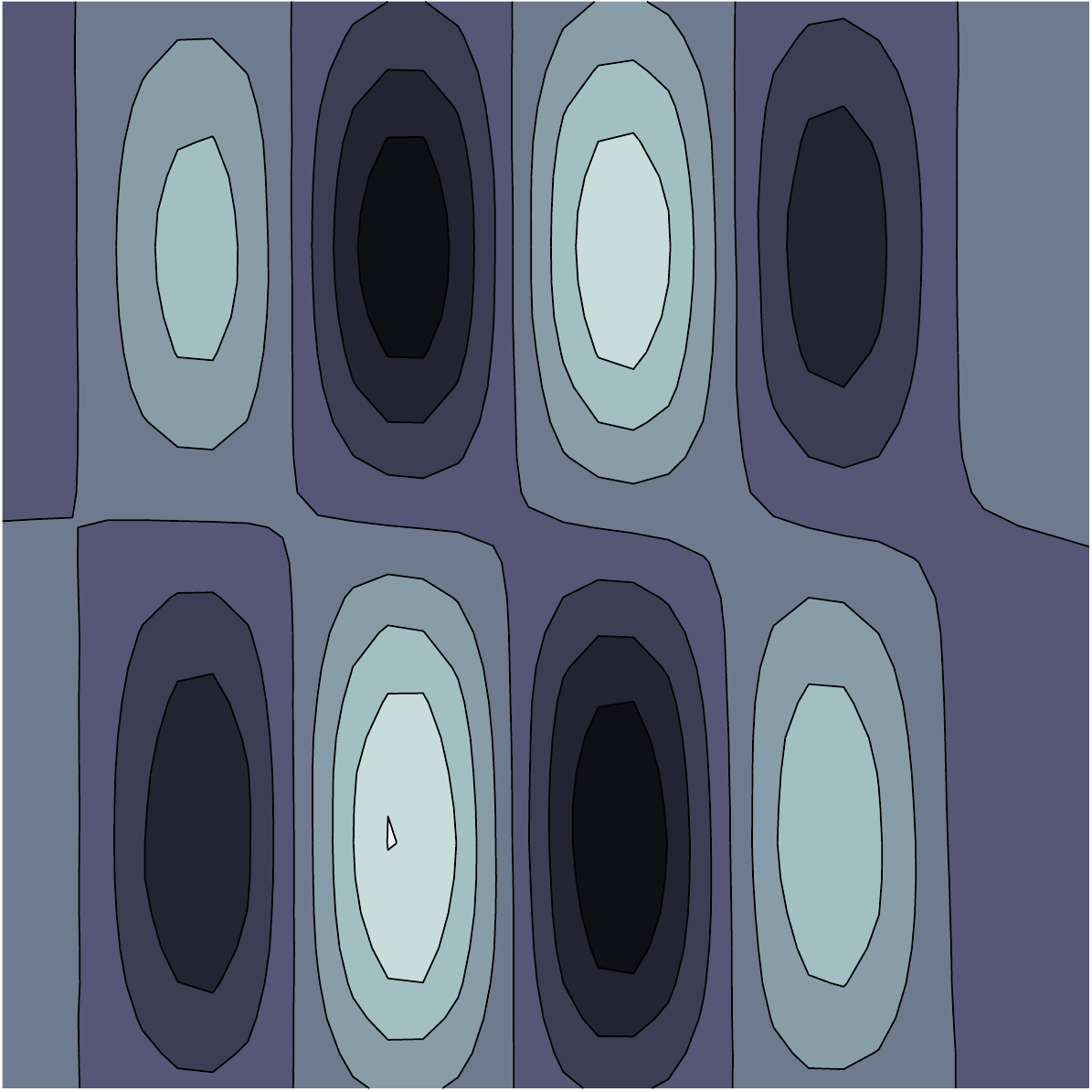}
        \caption{\label{fig:2d_image_eigenvectors_goal_oriented_xpos}}
    \end{subfigure}
    \begin{subfigure}{\textwidth}
        \centering
        \includegraphics[width=0.19\textwidth]{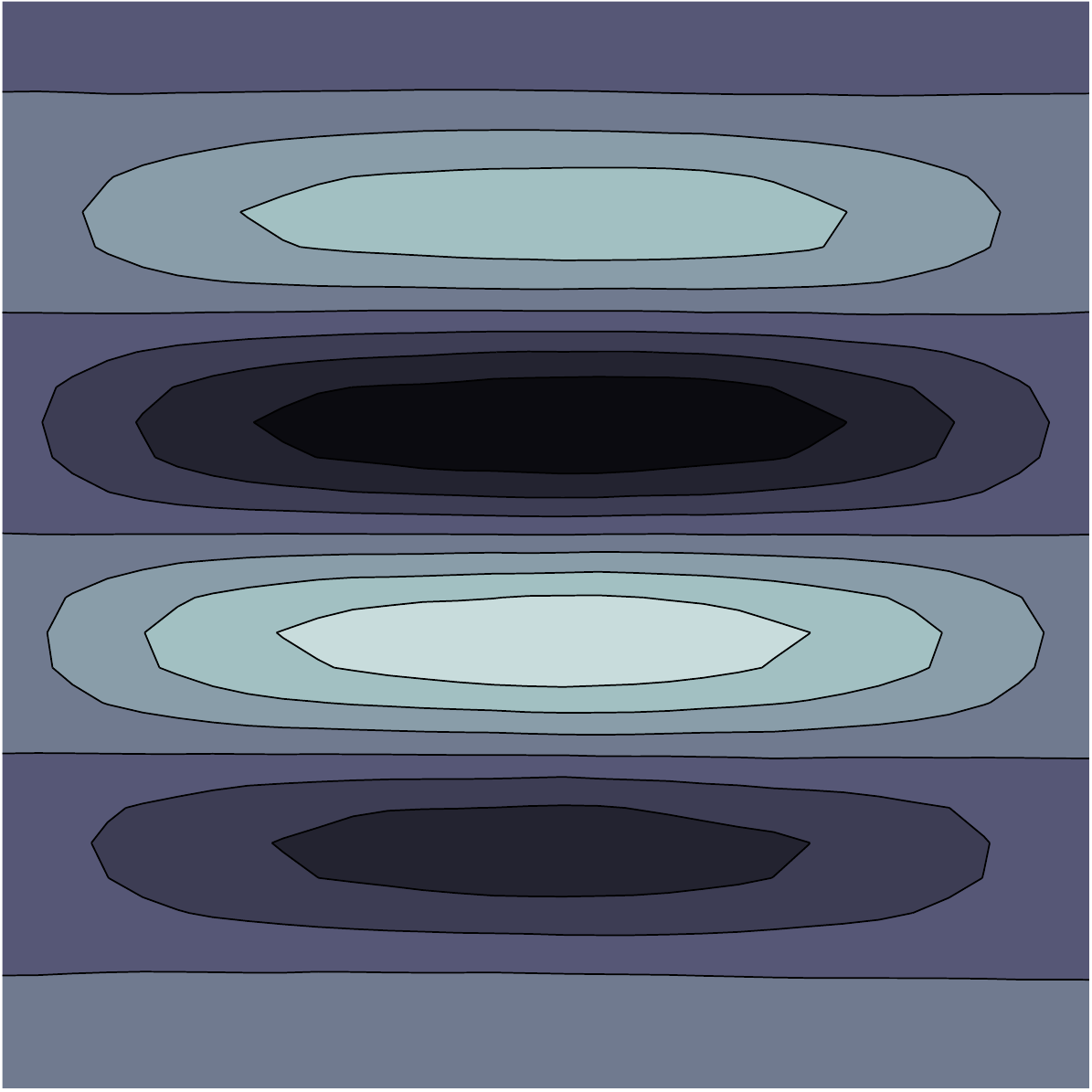}
        \includegraphics[width=0.19\textwidth]{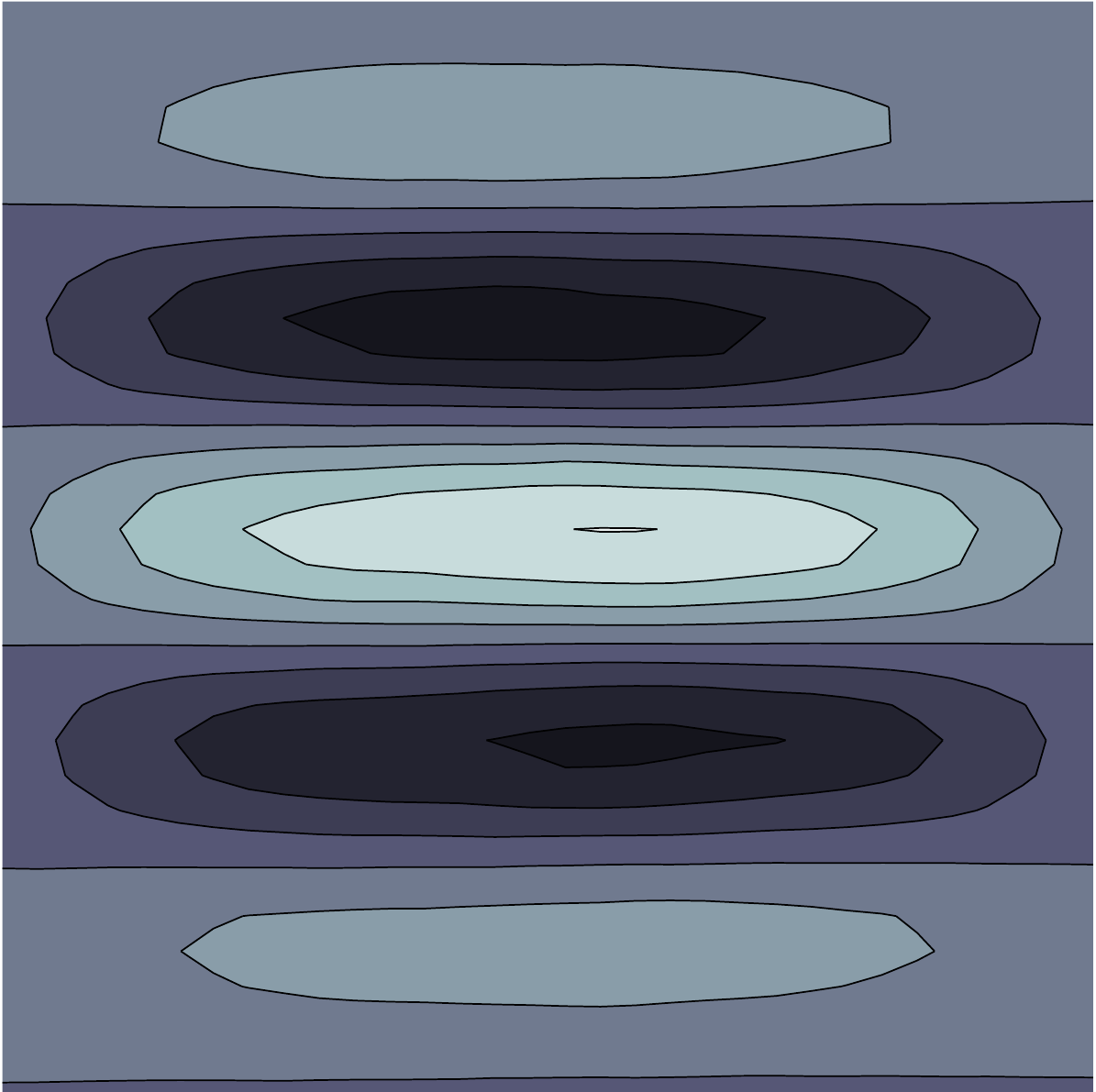}
        \includegraphics[width=0.19\textwidth]{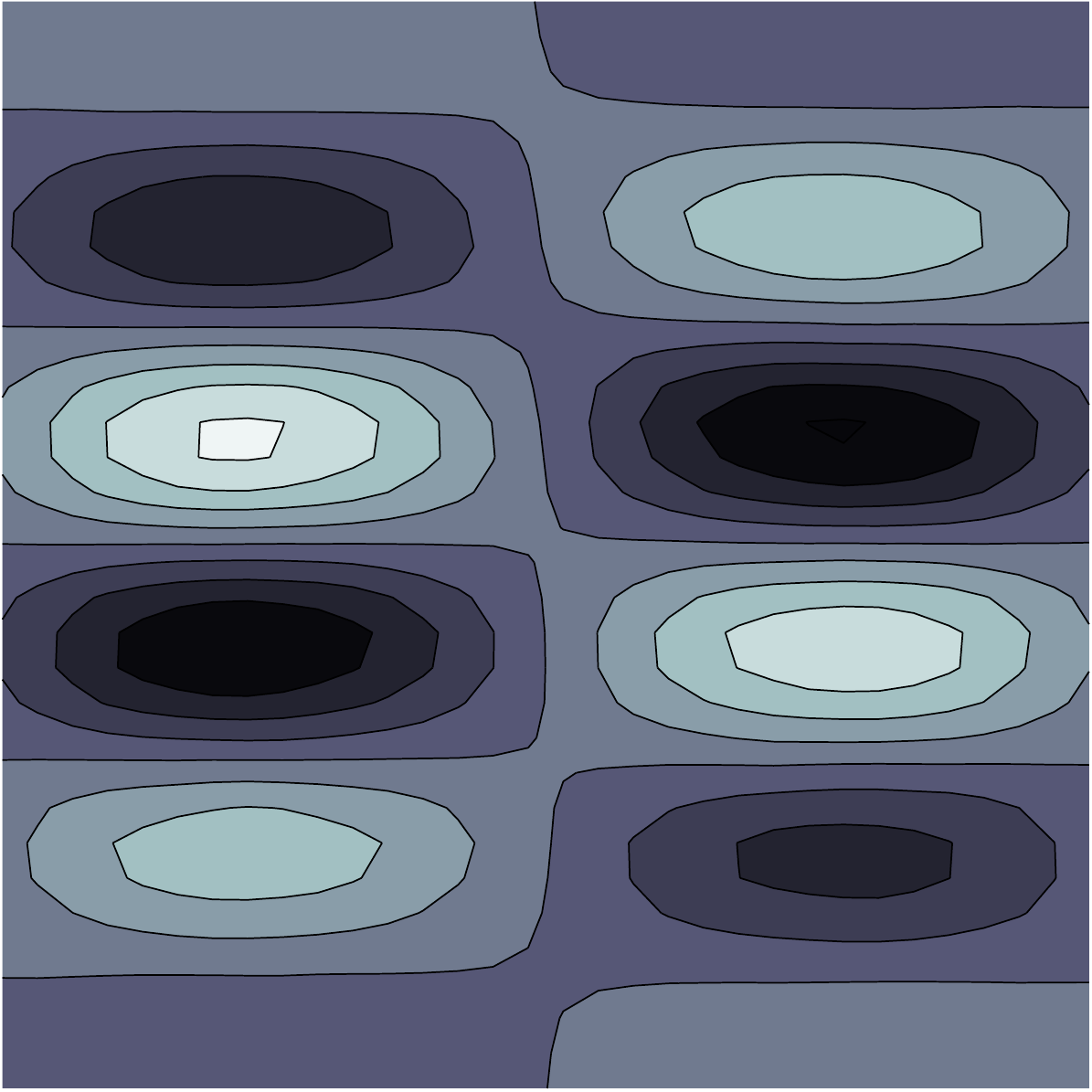}
        \includegraphics[width=0.19\textwidth]{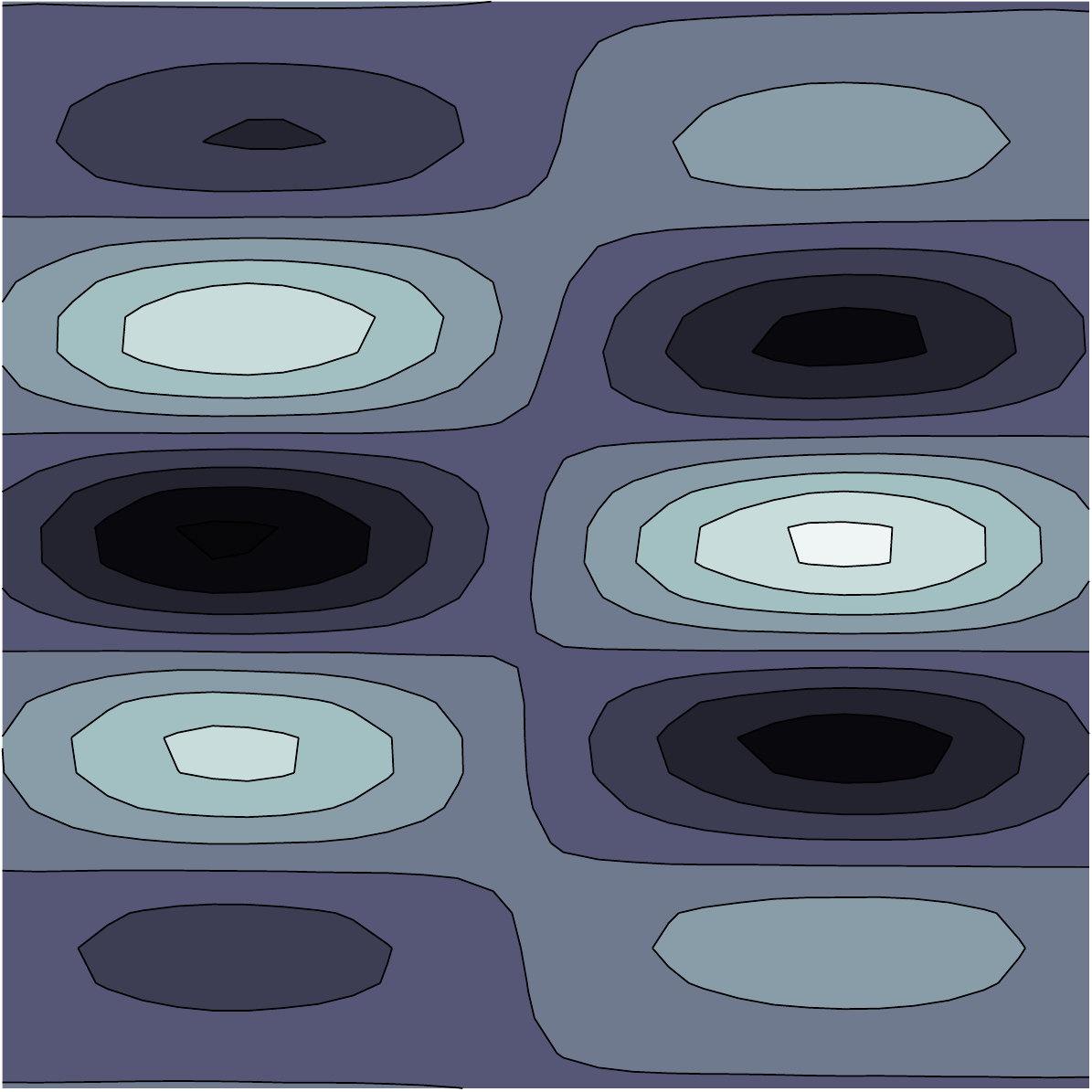}
        \caption{\label{fig:2d_image_eigenvectors_goal_oriented_ypos}}
    \end{subfigure}
    \caption{{Imaging example:} The four leading eigenvectors of the data diagnostic matrix $H_\Y$ for informing chosen parameters: (a) $X_1$, and (b) $X_2$. \label{fig:2d_image_eigenvectors_goal_oriented}}
\end{figure}

\subsection{Conditioned diffusion} \label{sec:Exp_CondDiffusion}

Now we consider a high-dimensional non-Gaussian inference problem motivated by applications in molecular dynamics. Our goal is to infer the driving force on a particle diffusing in a double-well potential, given a noisy observation of its path; see~\citet{cui2016dimension}. The particle's position is described by a function $u\colon [0,1] \rightarrow \mathbb{R}$ which solves the stochastic differential equation
\begin{equation}
\d u_t = f(u_t)\d t + \d X_t, \;\;\; u_0 = 0 .
\end{equation}
Here, $f\colon \mathbb{R} \rightarrow \mathbb{R}$ is the nonlinear drift function $f(u) = \beta u (1 - u^2)/(1 + u^2)$ for $\beta > 0$ and $\d X_t$ is an increment of the Brownian motion $X \sim \mathcal{N}(0,C)$ with covariance function $C(t,t') = \min(t,t')$. 
We set $\beta = 1$ and discretize the ODE using an Euler-Maruyama scheme with time step $\Delta t = 10^{-2}$, so that $d=100$. At $m$ equispaced times $t_1,\dots,t_{m}$ in the interval $[0,1]$, we observe the noisy position of the particle,
\begin{equation}
y_{t_i} = u_{t_i} + \varepsilon_i,
\end{equation}
where $\varepsilon \sim \mathcal{N}(0,\sigma^2 \Id_m)$ is independent of $u$ and $\sigma = 0.1$.
In other words, we have $\Y = G(\X) + \varepsilon$ where $G\colon\R^d\rightarrow\R^m$ is the nonlinear forward model that maps a realization of the noise $\x \in \mathbb{R}^{d}$ to the path $(u_{t_1},\hdots,u_{t_{m}})$. {In our experiments we set $m = d = 100$.} %
Figure~\ref{fig:cd_observations} shows 200 realizations of $\Y$.
We then compute the matrices $H_{\overline\X}$ and $H_{\overline\Y}$ using Algorithm \ref{alg:computeBasis} with $n = 10^6$ samples and plot their leading eigenvalues in Figure~\ref{fig:cd_eigenvalues}. Due to the nonlinear forward model, the eigenvalues of the two matrices are different (in contrast with the linear--Gaussian setting described in Section~\ref{subsec:LinearGaussian}).

\begin{figure}[!ht]
     \centering
     \begin{subfigure}[b]{0.45\textwidth}
         \centering
         \includegraphics[width=\textwidth]{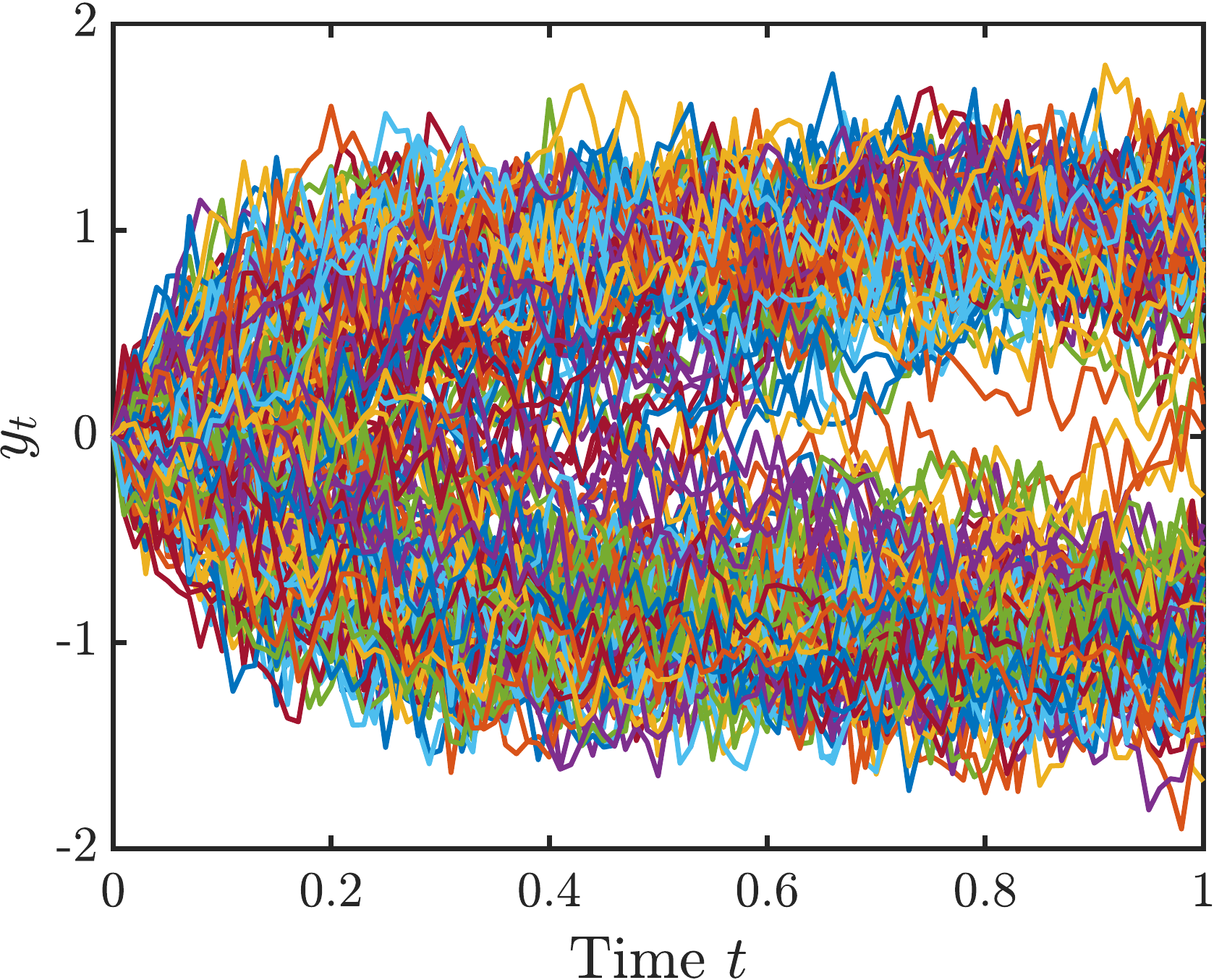}
         \caption{\label{fig:cd_observations}}
     \end{subfigure}
     \hspace{1cm}
     \begin{subfigure}[b]{0.45\textwidth}
         \centering
         \includegraphics[width=1.05\textwidth]{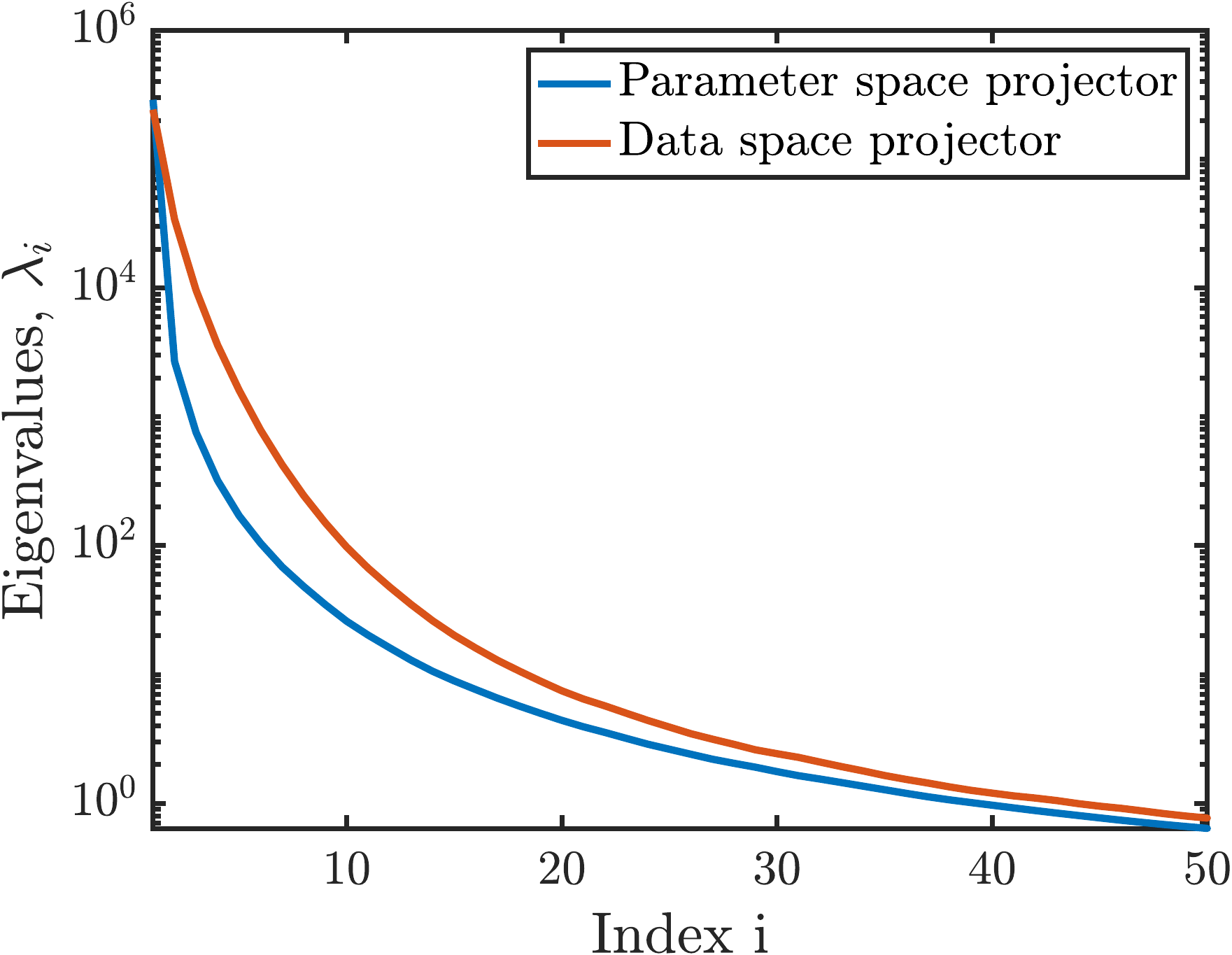}
         \caption{\label{fig:cd_eigenvalues}}
     \end{subfigure}
     \caption{{Conditioned diffusion model: (a) 200 samples of data $\ys$. (b) Leading $50$ eigenvalues of the diagnostic matrices $H_{\overline\X}$ and $H_{\overline\Y}$.}}
\end{figure} %

Figure~\ref{fig:cd_eigv_CMI} plots the parameter and data space eigenvectors corresponding to the five leading eigenvalues of $H_{\overline\X}$ and $H_{\overline\Y}$, respectively. The parameter-space eigenvectors capture more of the sample path behavior near $t=0$, while the data-space eigenvectors capture the data behavior near $t = 1$. {From the realizations in Figure~\ref{fig:cd_observations}, we also see that the particle tends to settle in one well or the other relatively early in time; hence the data seem to be most informative about earlier portions of the force trajectory, while the driving force is most informed by the particle's position near the final time $t = 1$.}
For contrast, we plot the parameter- and data-space eigenvectors obtained via PCA and CCA 
(again using $n=10^6$ samples)
in Figures~\ref{fig:cd_data_eigv_PCA} and~\ref{fig:cd_data_eigv_CCA}, respectively. 
We observe that PCA modes are more globally supported than those obtained from the CMI bound---i.e., less focused on early or late portions of the trajectory. The CCA modes are more irregular. %

\begin{figure}[!ht]
     \centering
     \begin{subfigure}[b]{0.45\textwidth}
         \centering
         \includegraphics[width=\textwidth]{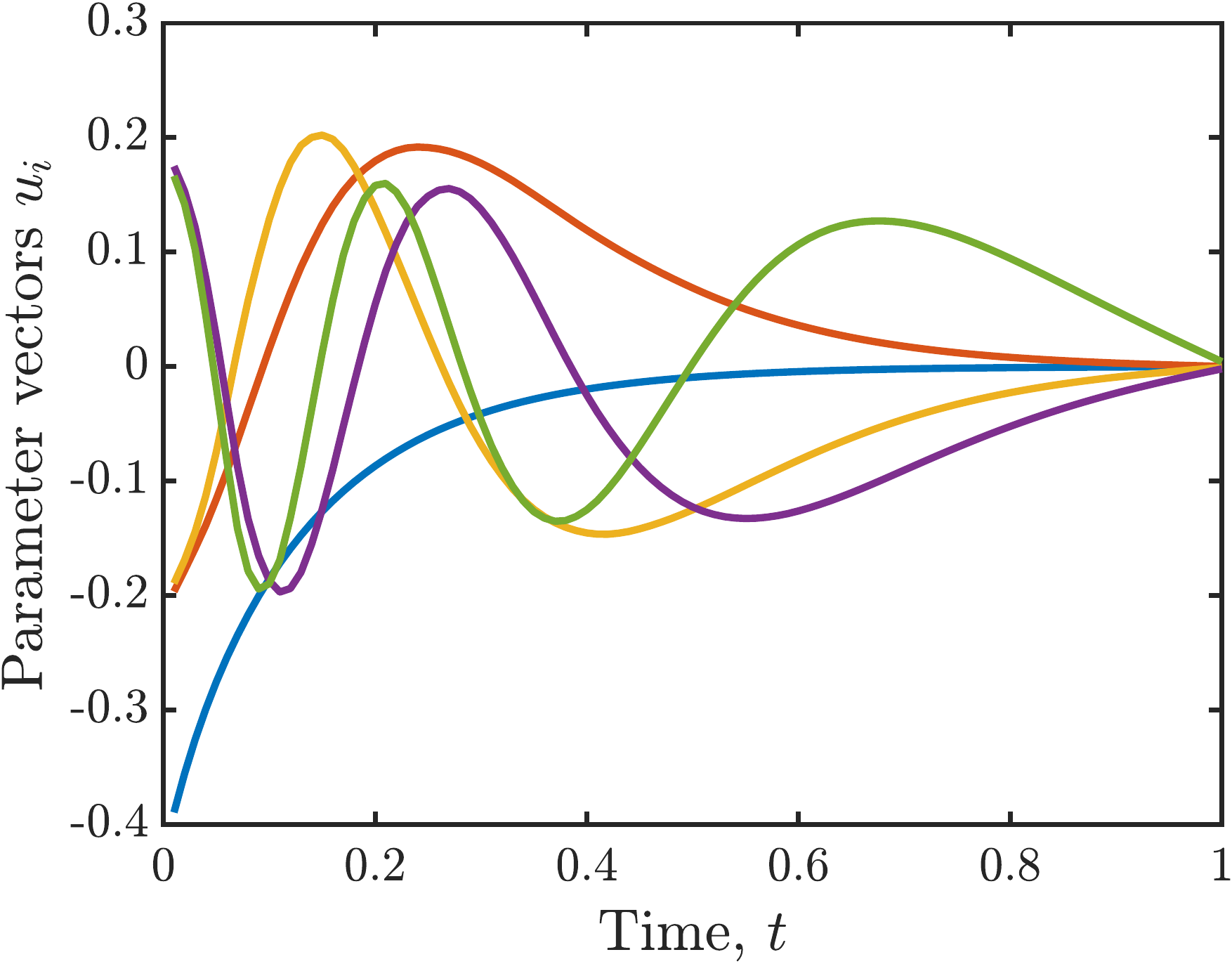}
         \caption{}
         \label{fig:cd_param_eigv_CMI}
     \end{subfigure}
     \hspace{1cm}
     \begin{subfigure}[b]{0.45\textwidth}
         \centering
         \includegraphics[width=0.98\textwidth]{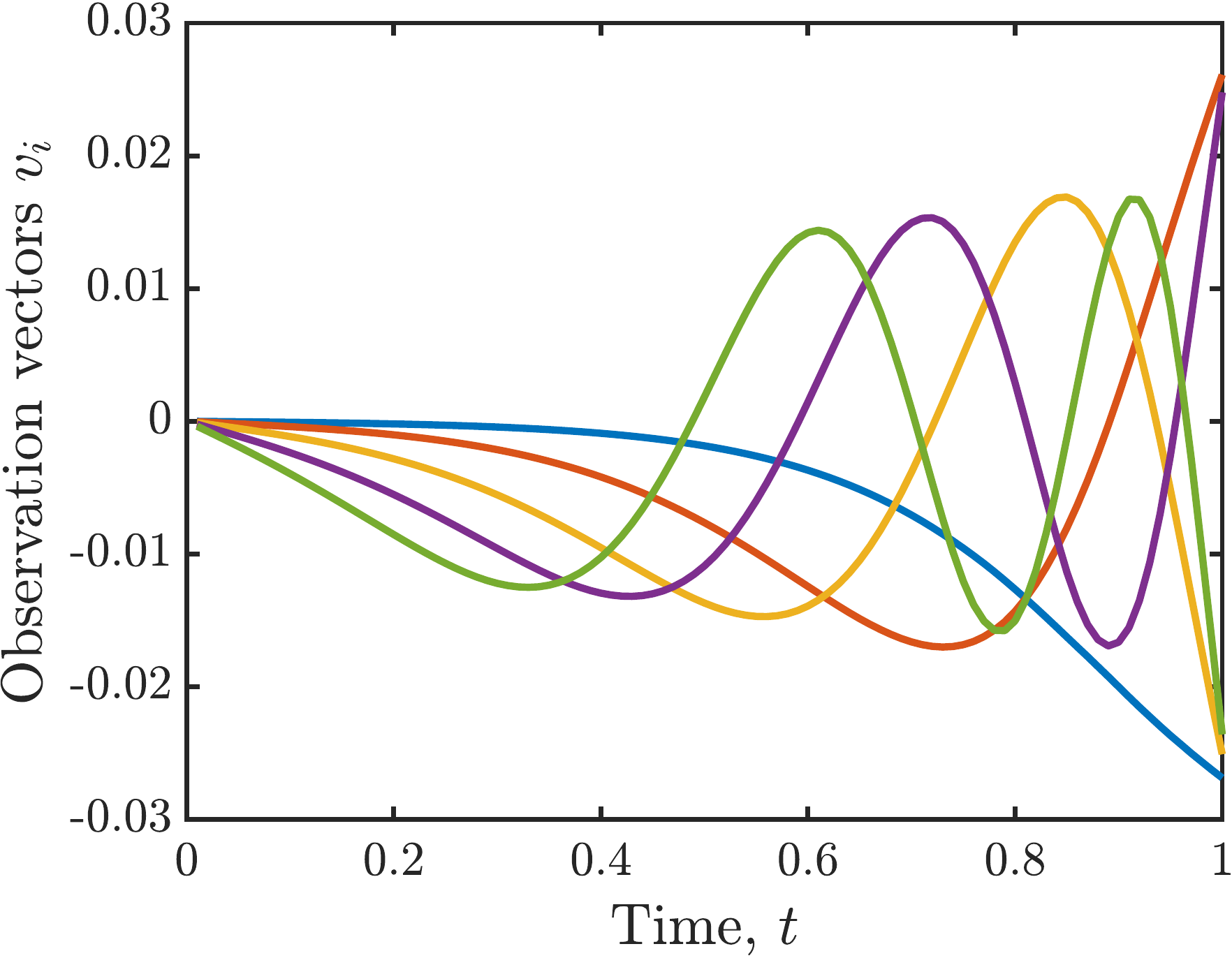}
         \caption{}
         \label{fig:cd_data_eigv_CMI}
     \end{subfigure}
     \caption{{Conditioned diffusion model}: (a) Parameter space and (b) data space eigenvectors of the diagnostic matrices $H_{\overline\X}$ and $H_{\overline\Y}$ \label{fig:cd_eigv_CMI}}
\end{figure}

\begin{figure}[!ht]
     \centering
     \begin{subfigure}[t]{0.45\textwidth}
         \centering
         \includegraphics[width=\textwidth]{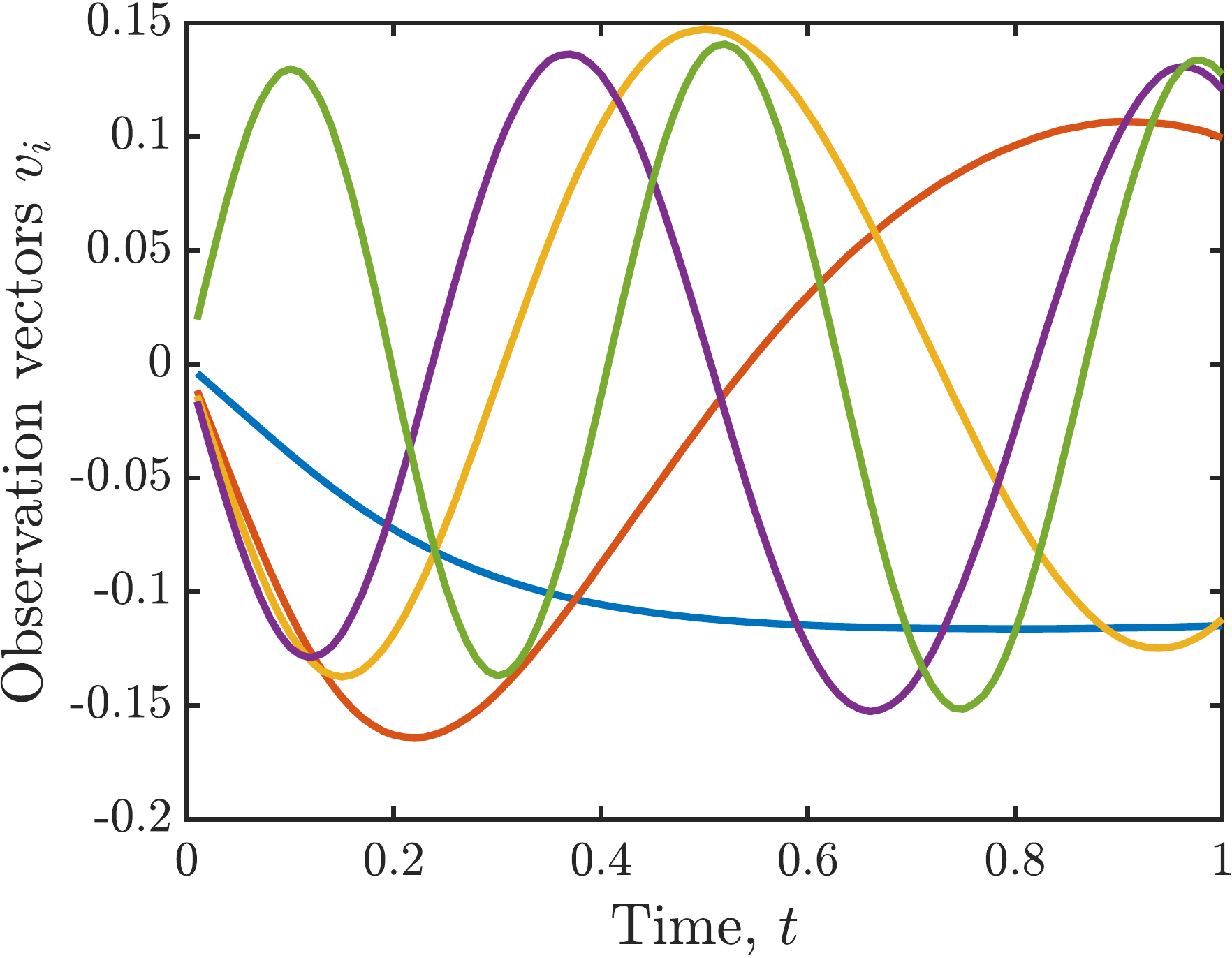}
         \caption{}
         \label{fig:cd_data_eigv_PCA}
     \end{subfigure}
     \hspace{1cm}
     \begin{subfigure}[t]{0.45\textwidth}
         \centering
         \includegraphics[width=\textwidth]{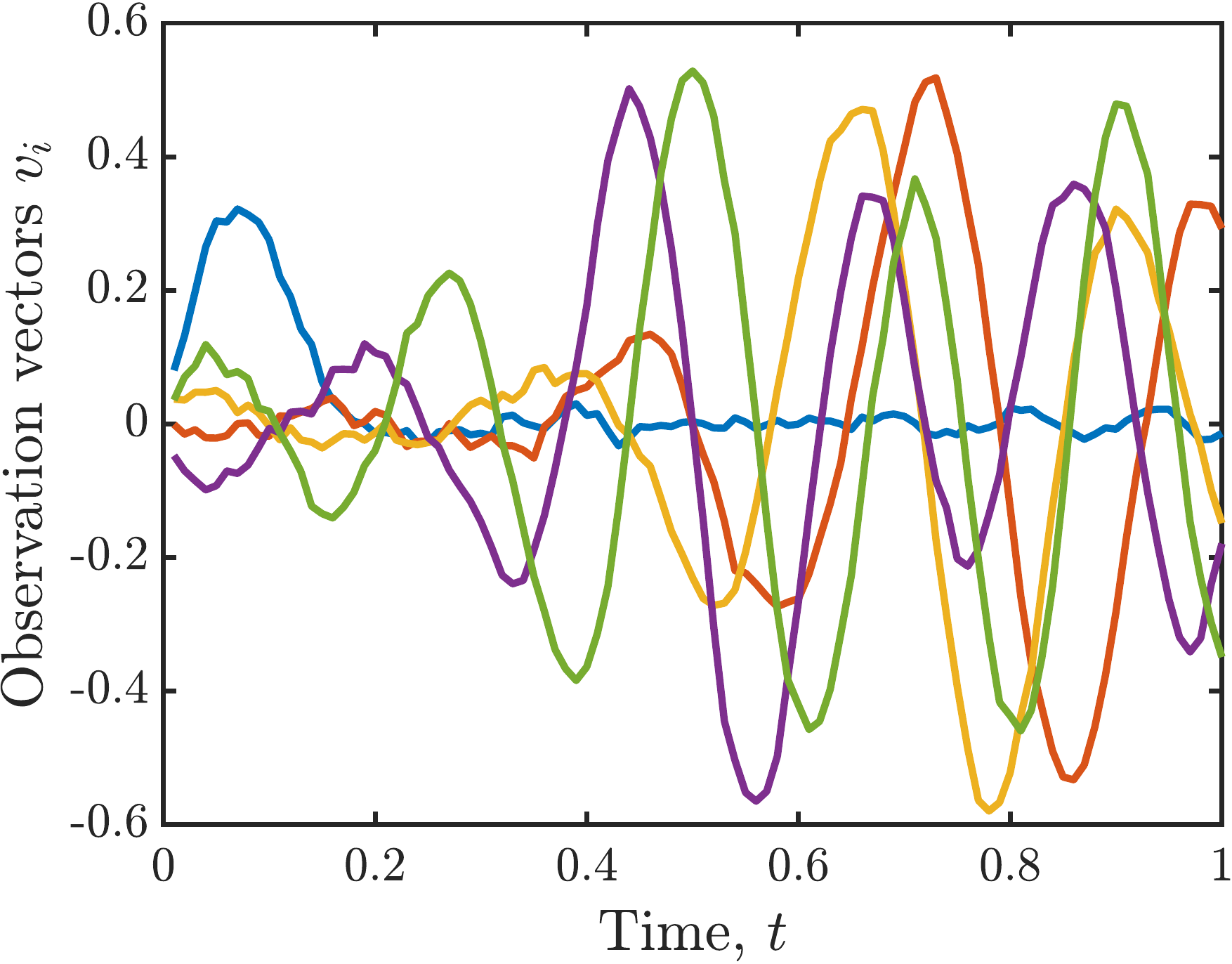}
         \caption{}
         \label{fig:cd_data_eigv_CCA}
     \end{subfigure}
     \caption{{Conditioned diffusion model}: Data eigenmodes from (a) PCA and (b) CCA.}
\end{figure}

To evaluate the approximation errors resulting from these parameter and data-space projectors, we estimate the conditional mutual information for projected parameters and data. To estimate the conditional mutual information $I(\X_\perp;\Y|\X_r)$, we generate $n = 10^4$ samples $(\xs^i,\ys^i) \sim \Joint$ and construct the Monte Carlo estimator
\begin{equation} \label{eq:MCest_MIparam}
\widehat{I}(\X_\perp;\Y|\X_r) = \frac{1}{n} \sum_{i=1}^{n} \log \frac{\pi_{\Y|\X}(\ys^i|\xs^i)}{\pi_{\Y|\X_r}(\ys^i|\xs_r^i)},
\end{equation}
where $\pi_{\Y|\X_r}(\y|\x_r) = \int \pi_{\Y|\X}(\y|\BasisX_r\x_r + \BasisX_\perp\x_\perp)\pi_{\X_\perp|\X_r}(\x_\perp|\x_r)\d \x_\perp$. Analogously to the estimator in~\eqref{eq:MCestimator_RedLikelihood} for the reduced likelihood $\pi_{\Y_s|\X_r}$, we estimate $\pi_{\Y|\X_r}$ using the following Monte Carlo estimator, given $\ell$ %
samples from the conditional prior
\begin{equation} \label{eq:cond_exp_likelihood}
\widehat\pi_{\Y|\X_r}(\y|\x_r) = \frac{1}{\ell} \sum_{j=1}^{\ell} \pi_{\Y|\X}(\y|\BasisX_r \x_r + \BasisX_\perp \xs_\perp^j), \quad \xs_\perp^j \sim \pi_{\X_\perp|\X_r}(\cdot|\x_r).
\end{equation}
To check the impact of $\ell$ on estimating the conditional mutual information, Figure~\ref{fig:cd_parameter_errors} plots the estimator with $\ell \in \{10,100,1000\}$ as well as a single sample, i.e., $l = 1$, at the prior mean $\int \x_\perp \d\pi_{\X_\perp|\X_r}$. %
We observe a convergence of the estimators in~\eqref{eq:cond_exp_likelihood} with increasing $m$. 
Furthermore, the CMI closely matches the trend for the upper bound (up to the subspace log-Sobolev constant), which indicates that for this example the bound can be used as a good error indicator even without knowing $\overline{C}(\pi_{\overline\X,\overline\Y})$.

Next, to estimate the conditional mutual information $I(\Y_\perp;\X|\Y_s) = I(\X;\Y) - I(\X;\Y_s)$, we generate $n$ samples $(\xs^i,\ys^i) \sim \Joint$ and construct the Monte Carlo estimator
\begin{equation} \label{eq:MCest_MIdata}
\widehat{I}(\Y_\perp;\X|\Y_s) = \frac{1}{n} \sum_{i=1}^{n} \log \frac{\pi_{\Y|\X}(\ys^i|\xs^i)}{\pi_{\Y}(\ys^i)} - \log \frac{\pi_{\Y_s|\X}(\ys_s^i|\xs^i)}{\pi_{\Y_s}(\ys_s^i)},
\end{equation}
where the marginal likelihoods $\pi_{\Y}$ and $\pi_{\Y_s}$ are estimated using $\ell$  %
prior samples for each sample $\ys^i$, e.g., $\widehat\pi_{\Y}(\ys^i) = \frac{1}{\ell} \sum_{j=1}^{\ell} \pi_{\Y|\X}(\ys^i|\xs^j)$ for $\xs^j \sim \Prior$. To compute the likelihood for the reduced data $\Y_s$, we analytically marginalize the Gaussian likelihood by projecting the mean and covariance of the observational noise using the formula in Example~\ref{ex:AnalyticalMargLikelihood}. This avoids an additional numerical integration. Figure~\ref{fig:cd_data_errors} plots the estimates of the mutual information for the projected data with increasing reduced dimension, along with the upper bound in~\eqref{eq:PostErr_Subspaces} (up to the subspace log-Sobolev constant). We observe that the estimators converge with increasing sample size $\ell$. Furthermore, the upper bound closely matches the trend for the true approximation error, especially for larger $s$. %

\begin{figure}[!ht]
     \centering
     \begin{subfigure}[t]{0.45\textwidth}
         \centering
         \includegraphics[width=\textwidth]{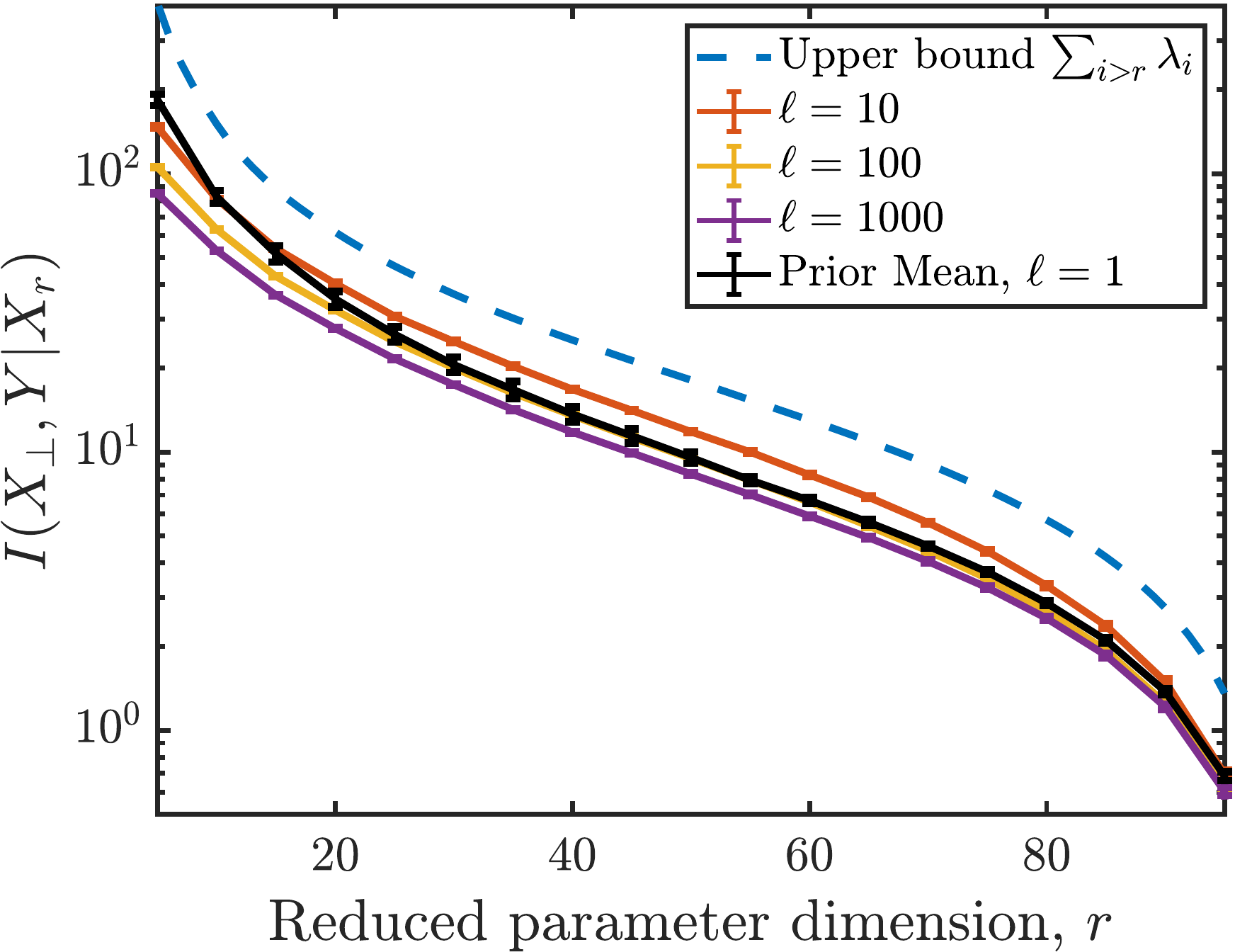}
         \caption{}
         \label{fig:cd_parameter_errors}
     \end{subfigure}
     \hspace{1cm}
     \begin{subfigure}[t]{0.45\textwidth}
         \centering
         \includegraphics[width=\textwidth]{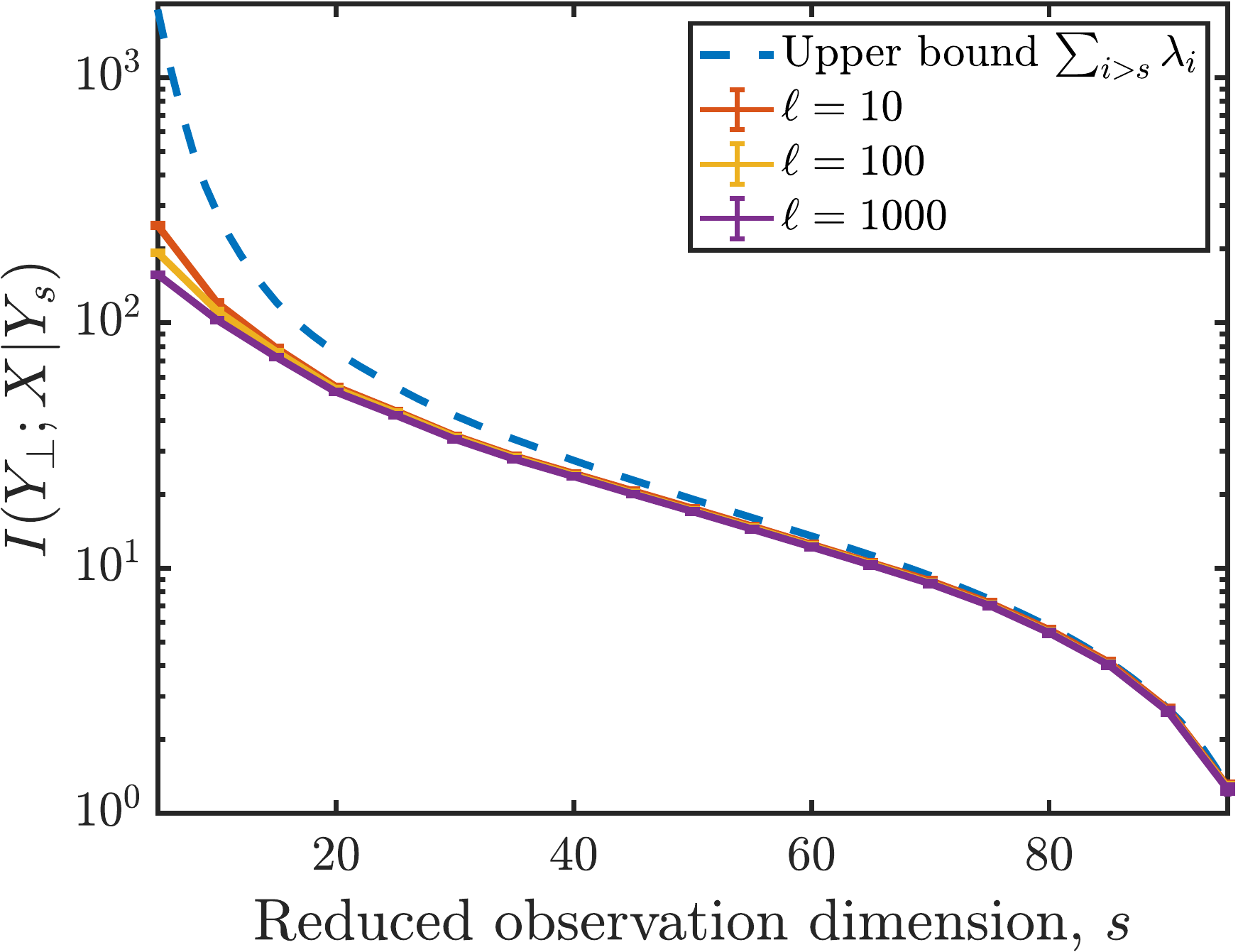}
         \caption{}
         \label{fig:cd_data_errors}
     \end{subfigure}
     \caption{{Conditioned diffusion model: Convergence of the estimators for the conditional mutual information with projections of (a) the parameter and (b) the data.}}
\end{figure}

We also compare the optimal {subspaces}
identified from the upper bound in~\eqref{eq:KL_UpperBound} to the subspaces resulting using PCA and CCA. Figures~\ref{fig:cd_parameter_compare_projector} and~\ref{fig:cd_data_compare_projectors} plot the mutual information representing the posterior approximation error in expected KL divergence for increasing reduced dimensions of the parameter and data, respectively. The mutual information is computed for each dimension $r,s$ using the Monte Carlo estimators in~\eqref{eq:MCest_MIparam} and~\eqref{eq:MCest_MIdata} with $n = 10^4$ and $\ell = 100$. The CMI bounds present the lowest error for the projection of the  parameters. We note that PCA performs similarly to the CMI bound for data reduction in this example, despite having very different modes; see Figures~\ref{fig:cd_data_eigv_CMI} and~\ref{fig:cd_data_eigv_PCA}. Together with the parameters, however, the {subspaces} identified from the CMI bound provide the lowest posterior approximation error.

\begin{figure}[!ht]
     \centering
     \begin{subfigure}[t]{0.45\textwidth}
         \centering
         \includegraphics[width=\textwidth]{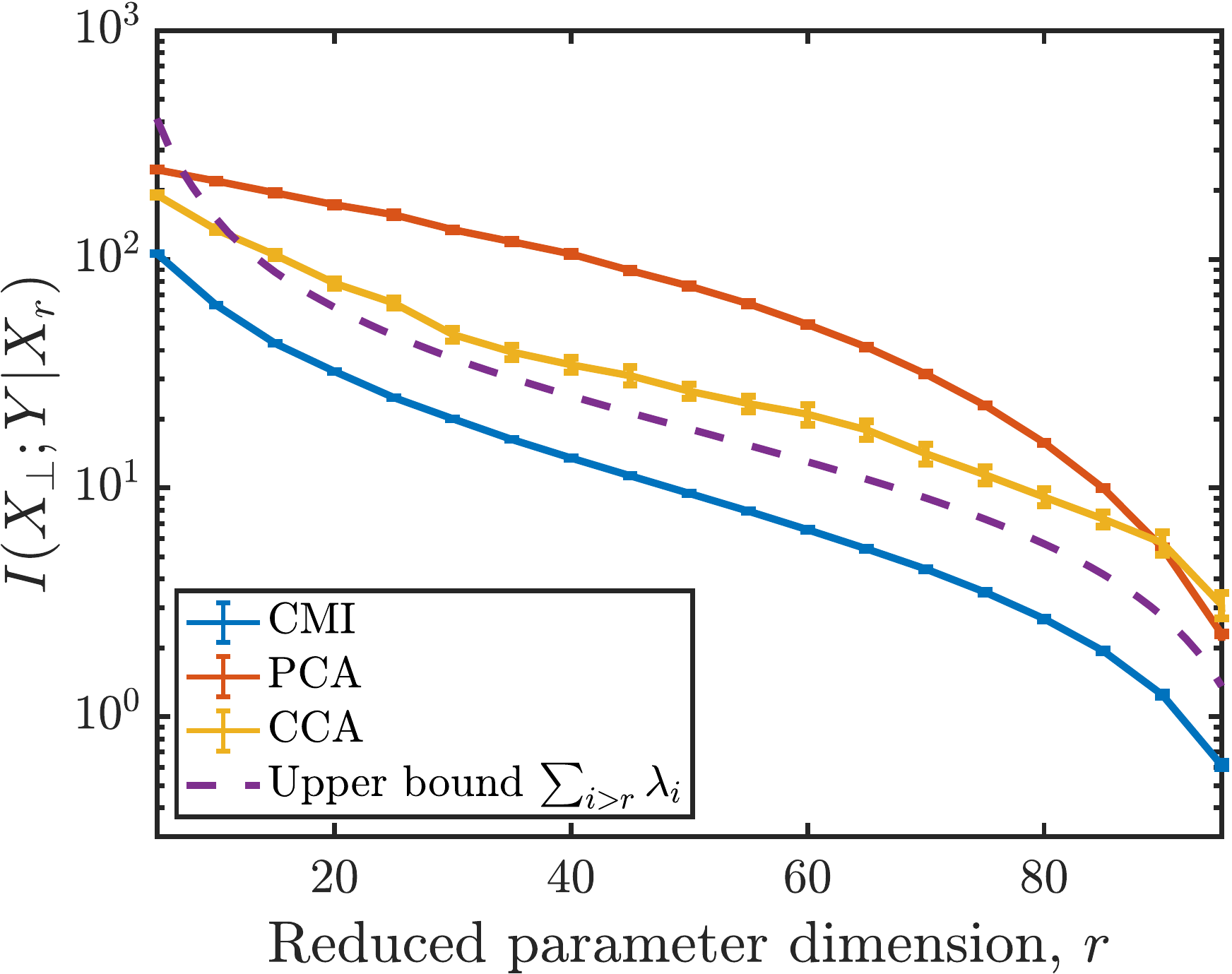}
         \caption{}
         \label{fig:cd_parameter_compare_projector}
     \end{subfigure}
     \hspace{1cm}
     \begin{subfigure}[t]{0.45\textwidth}
         \centering
         \includegraphics[width=\textwidth]{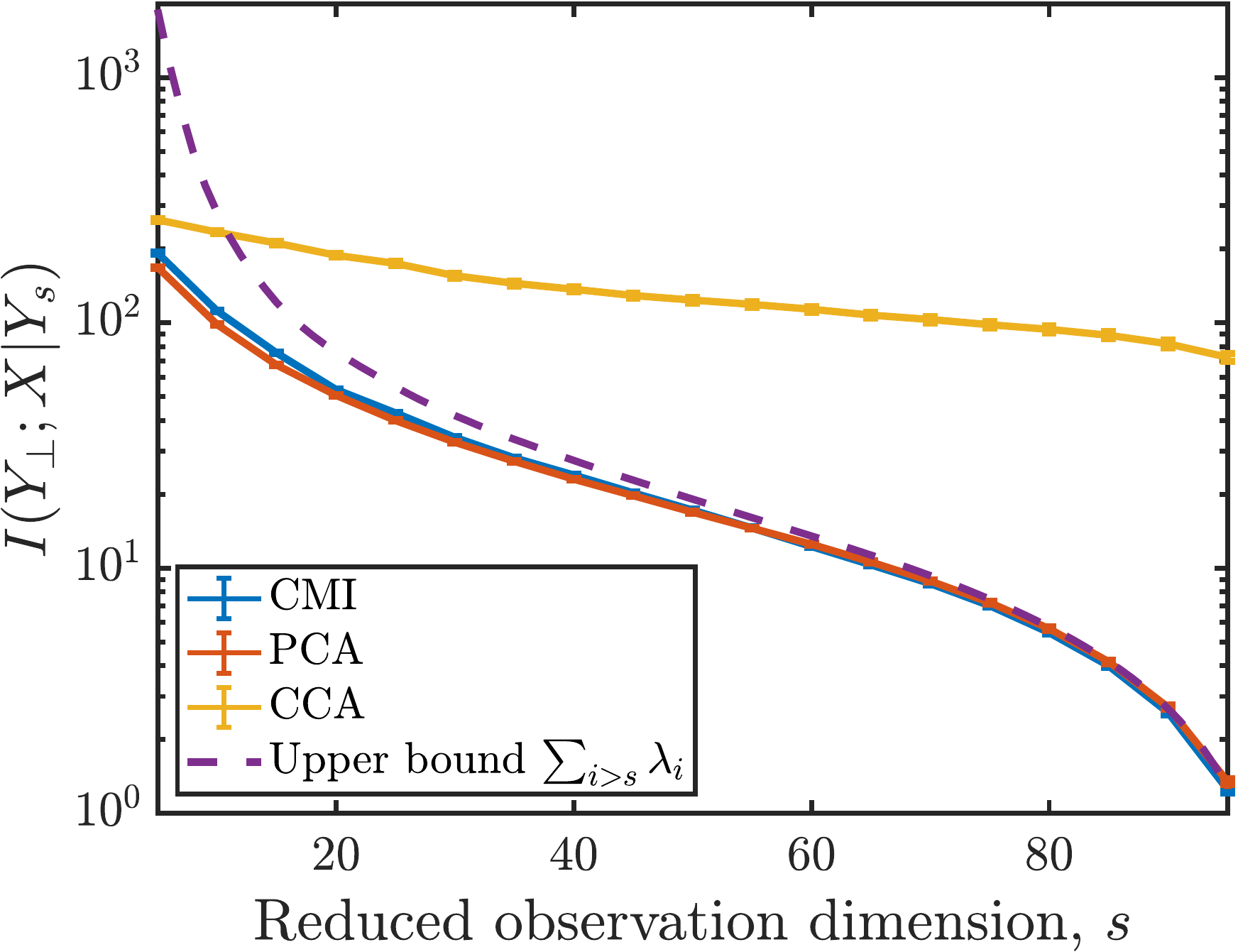}
         \caption{}
         \label{fig:cd_data_compare_projectors}
     \end{subfigure}
     \caption{{Conditioned diffusion model: Comparison of three dimension reduction strategies for reducing the dimension of the (a) parameter and (b) data using the expected KL divergence, or equivalently the conditional mutual information.}}
\end{figure}

We also consider the problem of parameter and data \emph{coordinate selection}. As presented in Section~\ref{subsec:OptPermutation}, this corresponds to sorting the diagonal entries of the diagnostic matrices $H_{\X}$ and $H_{\Y}$ in decreasing order. Figure~\ref{fig:cd_diagnostic_diagonal_entries} plots the values of these diagonal entries in their canonical ordering, i.e., with increasing time. The forcing at the initial time, i.e., $t \approx 0$, is most informed component, while the most informative data are observations of the particle position near the final time $t \approx 1$. To compare the accuracy of posterior approximations obtained via coordinate selection to posterior approximations built on the subspaces found above,  Figure~\ref{fig:cd_diagnostic_comp_upper_bounds} plots the upper bounds for the expected KL divergence with rotated or selected data (up to the subspace log-Sobolev constant). For this example, we observe that optimal rotations yield an improvement of at least two orders of magnitude and converge at a faster rate than optimal coordinate selections, particularly for lower dimensional data $\Y_s$. %

\begin{figure}[!ht]
     \centering
     \begin{subfigure}[t]{0.45\textwidth}
        \centering
        \includegraphics[width=\textwidth]{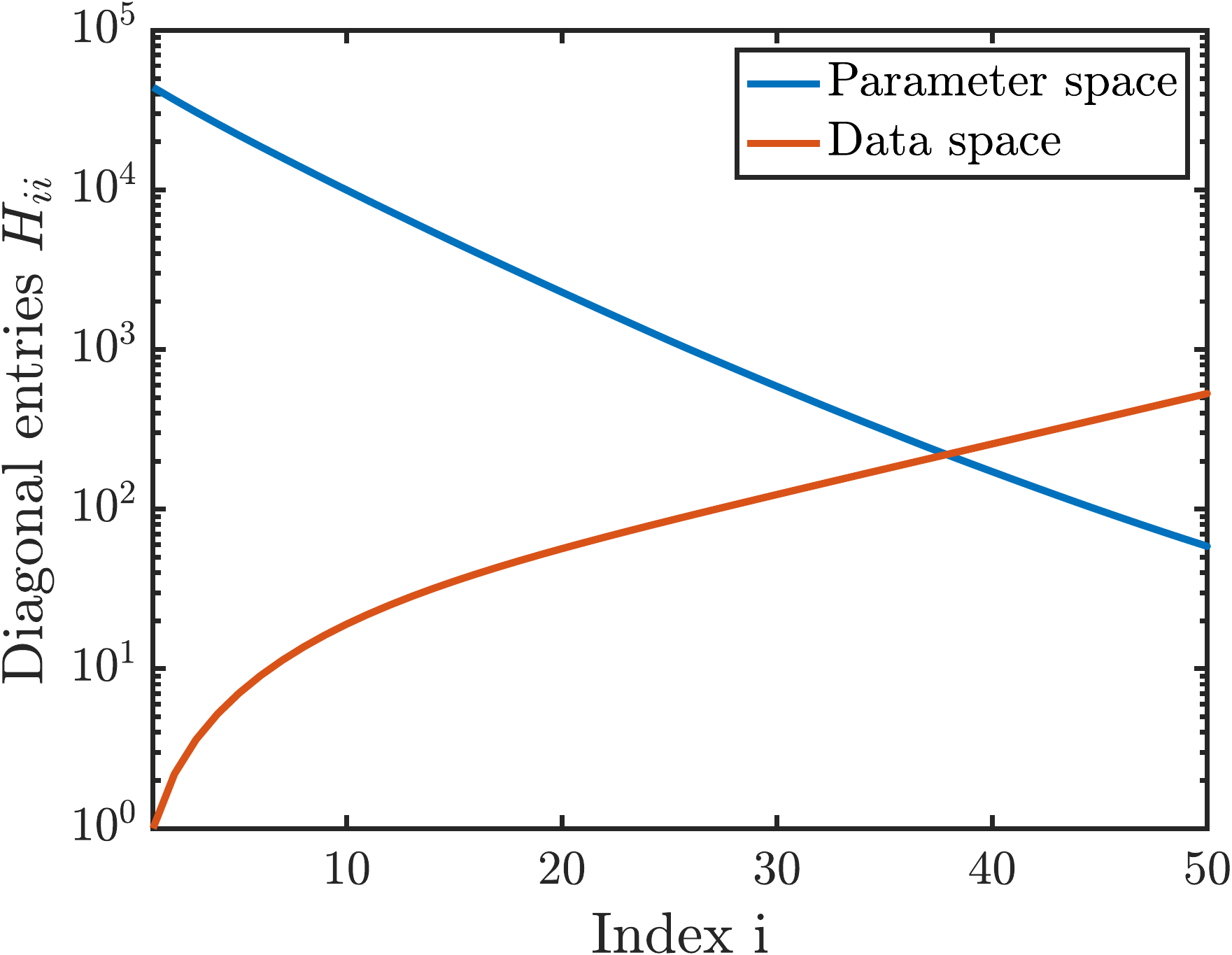}
        \caption{}
        \label{fig:cd_diagnostic_diagonal_entries}
    \end{subfigure}
    \hspace{1cm}
    \begin{subfigure}[t]{0.45\textwidth}
        \centering
        \includegraphics[width=\textwidth]{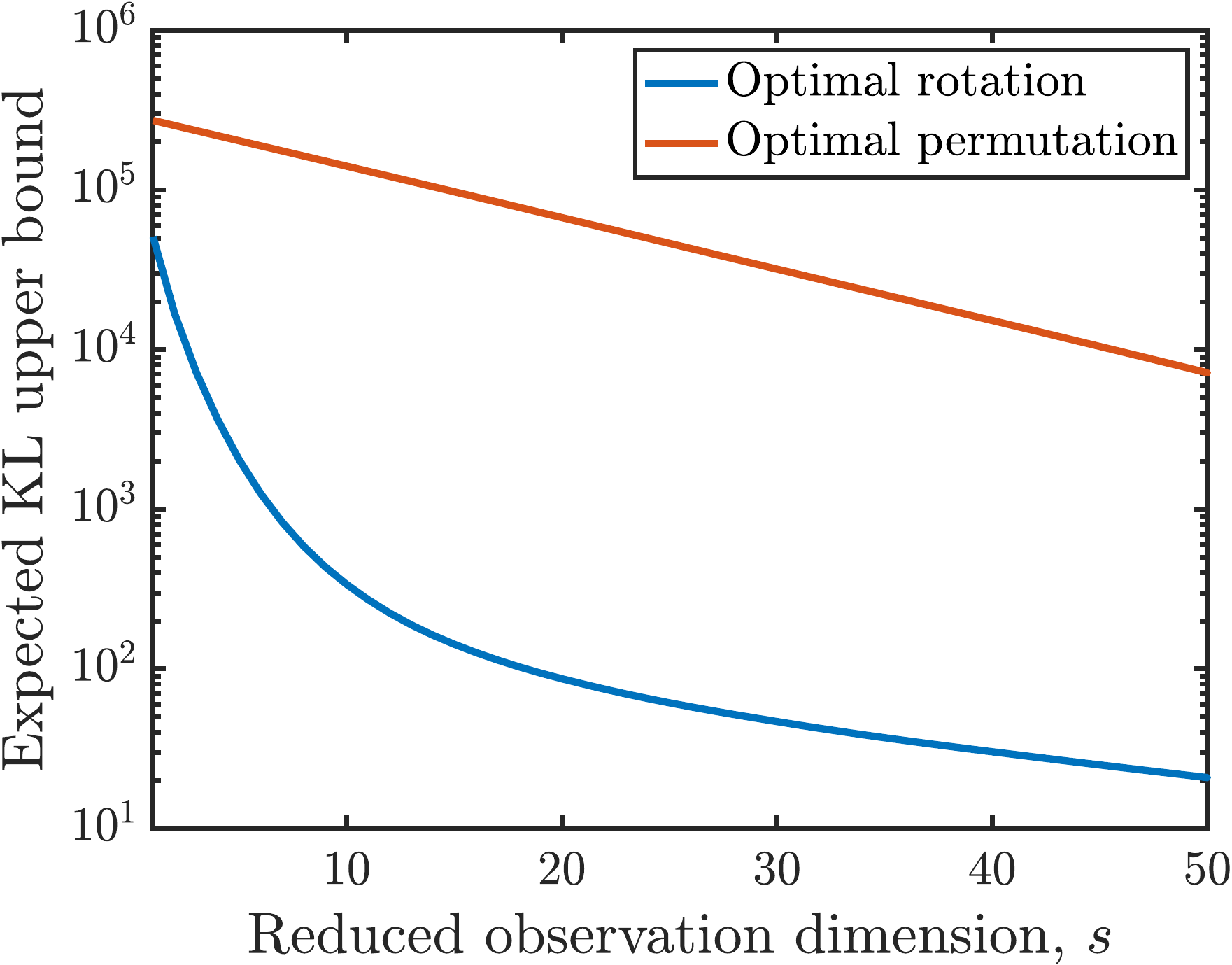}
        \caption{}
        \label{fig:cd_diagnostic_comp_upper_bounds}
    \end{subfigure}
    \caption{{Conditioned diffusion model: (a) Diagonal entries of the diagonal matrices $H_{\X}$ and $H_{\Y}$ without reordering. (b) Comparison of the upper bound in~\eqref{eq:KL_UpperBound}  for optimal rotations and coordinate selection of the data.}}%
\end{figure}

Lastly, we demonstrate the value of dimension reduction in accelerating Markov chain Monte Carlo (MCMC) methods for posterior sampling. Specifically, we assess the mixing of the dimension-independent likelihood-informed (DILI) sampler of~\citet{cui2016dimension} when projecting the data. For reduced data dimensions $s \in [10,100]$, we generate $5 \times 10^5$ approximate samples from the posterior density $\pi_{\X|\Y_s} \propto \pi_{\Y_s|\X}\pi_{\X}$ with the data-marginalized likelihood in~\eqref{eq:tmp385}. Figure~\ref{fig:cd_MCMCresults} plots the integrated autocorrelation time (IACT) of the samples and the relative $L^2$ errors in the posterior mean and standard deviation as a function of the dimension of the reduced observations, $s$. We observe that projecting the observations yields a significant reduction in IACT, while keeping errors in the posterior mean and posterior variance reasonably low. %

\begin{figure}[ht!]
     \centering
     
    \begin{subfigure}[t]{0.49\textwidth}
        \centering
        \includegraphics[width=\textwidth]{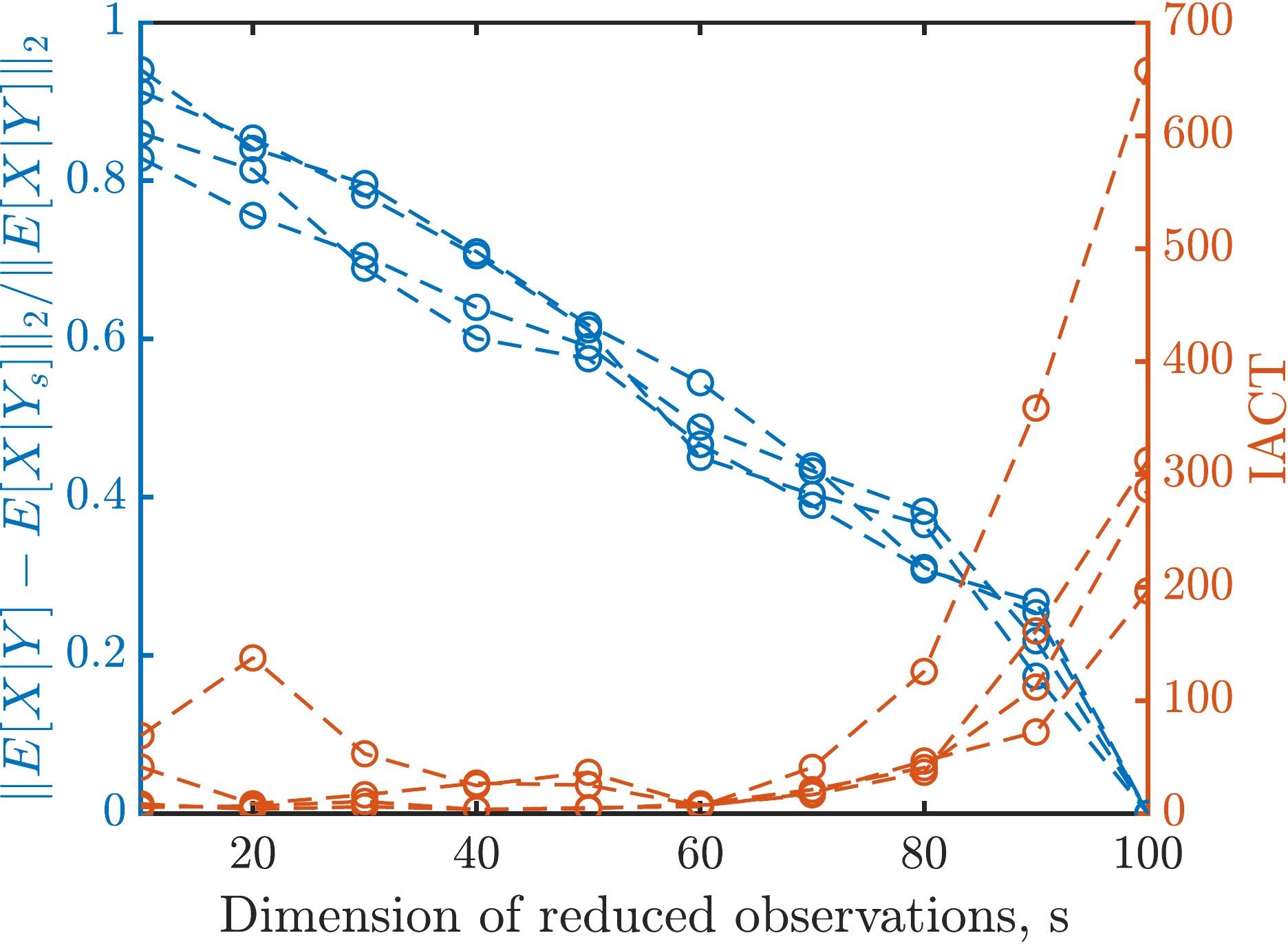}
        \caption{}
        \label{fig:cd_MCMCresults_a}
    \end{subfigure}
    \hspace{0.05cm}
    \begin{subfigure}[t]{0.49\textwidth}
        \centering
        \includegraphics[width=\textwidth]{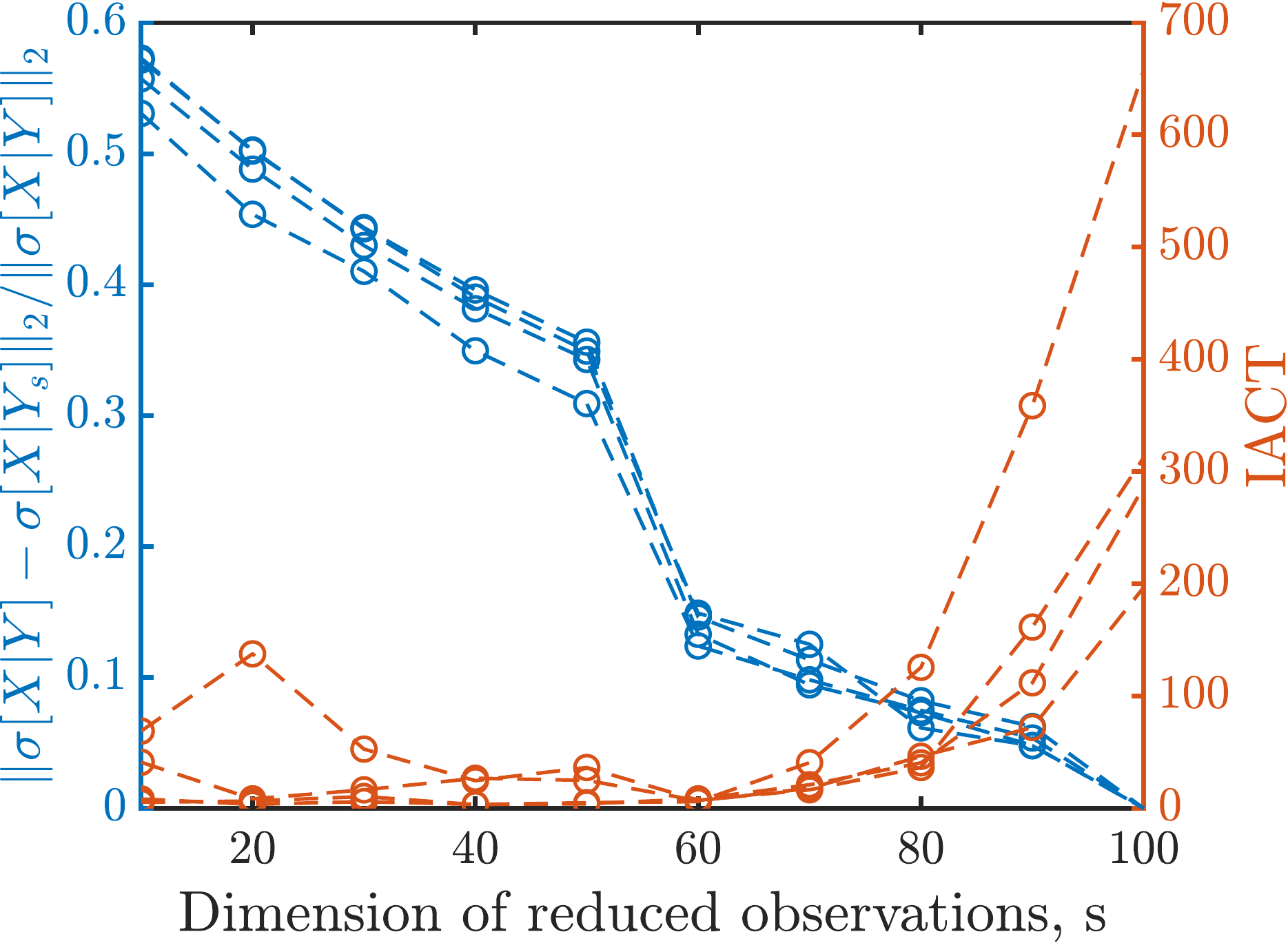}
        \caption{}
        \label{fig:cd_MCMCresults_b}
    \end{subfigure}
    \caption{Conditioned diffusion example: In blue, relative $L^2$ error between (a) the posterior mean $\mathbb{E}[\X|\Y]$ and its approximation $\mathbb{E}[\X|\Y_s]$; and (b) the posterior standard deviation $\sigma[\X|\Y]$ and its approximation $\sigma[\X|\Y_s]$. In red, integrated autocorrelation time of the MCMC sampler used to explore the posterior. Each of the 5 lines corresponds to a different realization of the data $Y$.
    \label{fig:cd_MCMCresults}}
\end{figure}

\section{Discussion}

This work proposes a gradient-based method for concurrently reducing the dimensionality of the data and parameters of Bayesian models, in a general non-Gaussian setting. We reduce dimensionality by identifying an \emph{informed} subspace of the parameter and an \emph{informative} subspace of the data; this identification is performed prior to realizing the observations. 
These subspaces yield posterior approximations that depart from the prior only along a low-dimensional subspace, and via conditioning only on a low-dimensional subspace of the data. We obtain these subspaces by constructing and minimizing a tractable quadratic upper bound for the expected KL divergence from the posterior approximation to the true posterior distribution. This bound is derived from logarithmic Sobolev inequalities, and can be used to evaluate and compare the quality of any data- and parameter-space projectors. We also show how the same ideas can be used for coordinate selection in both the data and parameters, by minimizing the upper bound under more restricted conditions.

In the specific case of linear--Gaussian models, with the reduced data and parameter subspaces fixed to the same dimension, our construction recovers canonical correlation analysis (CCA). In general, however, our approach yields more accurate posterior approximations than those obtained with both CCA and PCA, as well as interpretable projections of the parameters and data, for a range of non-Gaussian inference problems. Our formulation also generalizes and lends theoretical support to heuristics recently considered in specialized settings, for instance in sequential data assimilation \citep{provost2022low}. 
We outline some directions for future work below.

\textbf{Gradient-free methods}. The optimal {projections} %
of the variables are defined via the  eigendirections of two diagnostic matrices $H_\X$ and $H_\Y$, containing mixed partial derivatives of the log-likelihood function. For certain applications, these derivatives might be unavailable or computationally expensive to evaluate. It would be interesting to develop estimators for $H_\X$ and $H_\Y$ based only on differences between forward model evaluations. Furthermore, understanding the sample complexity of these estimators---specifically, the estimators of the leading eigenspaces of $H_{\X}$ and $H_{\Y}$---will be useful to determine the number of samples required to reliably achieve posterior approximations of a given error. (See~\citet{zahm2018certified, lam2020multifidelity} for analysis of subspace approximation errors in settings where gradients are available.) %

\textbf{Nonlinear dimension reduction}. For strongly nonlinear forward models, linear dimension reduction may require many modes to achieve a small posterior approximation error. In these cases, one might instead seek few \emph{nonlinear} functions of the parameters that are informed by nonlinear features of the data.  See~\citet{bigoni2021nonlinear} for an approach to identifying these parameter features in the context of surrogate modeling. It will be interesting to extend our current guarantees on posterior approximation error to nonlinear features, and to compare the resulting features to those identified by nonlinear supervised dimension reduction methods such as~\citet{andrew2013deep, michaeli2016nonparametric}. %

\begin{acks}[Acknowledgments]
RB and YM gratefully acknowledge support from the United States Department of Energy AEOLUS center under award DE-SC0019303.
OZ gratefully acknowledges support from the ANR JCJC project MODENA (ANR-21-CE46-0006-01).
RB, YM, and OZ acknowledge support from the INRIA associate team Unquestionable.
The authors also thank Qiao Chen for finding a mistake in the initial version of this manuscript.

\end{acks}

\appendix

\section{Proof of Propositions} \label{app:proofs} 
\begin{proof}[\textbf{Proof of Proposition~\ref{prop:optimalPosteriorApprox}}]
Let $\widetilde\pi_{\X|\Y}$ be any approximate posterior density of the form $\widetilde\pi_{\X|\Y}(\x|\y)=f_1(\y_s,\x_r)f_2(\x_\perp,\x_r)$. 
Let $f_0(\x_r)=\int f_2(\x_\perp,\x_r) \d \x_\perp$ and
\begin{align*}
    \overline f_1(\y_s,\x_r) &= f_1(\y_s,\x_r)  f_0(\x_r) \\
    \overline f_2(\x_\perp,\x_r) &= f_2(\x_\perp,\x_r) / f_0(\x_r),
\end{align*}
so that $\int \overline f_1(\y_s,\x_r) \d\x_r= \int \widetilde\pi_{\X|\Y}(\x)\d \x = 1$ for all $\y_s$ and $\int \overline f_2(\x_\perp,\x_r) \d\x_\perp=1$ for all $\x_r$.
Then $\x_r\mapsto \overline f_1(\y_s,\x_r)$ and $\x_\perp\mapsto \overline f_2(\x_\perp,\x_r)$ can be interpreted as conditional densities for all $\y_s$ and $\x_r$, respectively. 
From the definition of the KL divergence, we have
\begin{align}
&\quad\; \mathbb{E}_{\Y}\left[\KLDiv(\Post(\cdot|\Y)||\ApproxPost(\cdot|\Y))\right] - \mathbb{E}_{\Y}\left[\KLDiv(\Post(\cdot|\Y)||\OptPost(\cdot|\Y)) \right] \nonumber \\
&= \mathbb{E}_{\X,\Y} [\log \OptPost(\X|\Y) - \log \ApproxPost(\X|\Y)] \nonumber \\
&= \mathbb{E}_{\X,\Y} [\log \pi_{\X_r|\Y_s}(\X_r|\Y_s)\pi_{\X_\perp|\X_r}(\X_\perp|\X_r) - \log \overline f_1(\Y_s,\X_r)\overline f_2(\X_\perp,\X_r)] \nonumber \\
&= \mathbb{E}_{\X,\Y} \left[\log \frac{\pi_{\X_r|\Y_s}(\X_r|\Y_s)}{\overline f_1(\Y_s,\X_r)} + \log \frac{\pi_{\X_\perp|\X_r}(\X_\perp|\X_r)}{\overline f_2(\X_\perp,\X_r)} \right] \nonumber \\
&= \mathbb{E}_{\X_r,\Y_s} \left[\log \frac{\pi_{\X_r|\Y_s}(\X_r|\Y_s)}{\overline f_1(\Y_s,\X_r)} \right] + \mathbb{E}_{\X_\perp,\X_r} \left[\log \frac{\pi_{\X_\perp|\X_r}(\X_\perp|\X_r)}{\overline f_2(\X_\perp,\X_r)} \right] \nonumber \\
&= \mathbb{E}_{\Y_s} \left[\KLDiv(\pi_{\X_r|\Y_s}(\cdot|\Y_s)||\overline f_1(\Y_s,\cdot))\right] + \mathbb{E}_{\X_r}\left[ \KLDiv(\pi_{\X_\perp|\X_r}(\cdot|\X_r)||\overline f_2(\cdot,\X_r))\right] \label{eq:KLdiff_optpost}
\end{align}
By the positivity of the two KL divergence terms in~\eqref{eq:KLdiff_optpost}, we have the result in~\eqref{eq:KLDiv_OptPost}. %
\end{proof}

\begin{proof}[\textbf{Proof of Proposition~\ref{prop:PosteriorErr_MI}}] %
Let $\Joint$ be a joint density of $(\X,\Y)$ with posterior density $$\Post(\x|\y) = \frac{\Like(\y|\x)\Prior(\x)}{\DataMarg(\y)}.$$ %
Let the density for the optimal posterior approximation be 
$$\OptPost(\x|\y) = \frac{\pi_{\Y_s|\X_r}(\y_s|\x_r)\pi_{\X}(\x)}{\ApproxDataMarg(\y_s)},$$ %
where $\pi_{\Y_s|\X_r}(\y_s|\x_r) = \int \pi_{\Y|\X}(\y_s,y_\perp|\x_r,\x_\perp) \pi_{\X_\perp|\X_r}(\x_\perp|\x_r) \d\x_{\perp} \d\y_{\perp}$ is the approximate likelihood function.
For this likelihood, the approximate data marginal satisfies 
\begin{align*}
    \ApproxDataMarg(\y_s) &= \int \pi_{\Y_s|\X_r}(\y_s|\x_r)\pi_{\X_r}(\x_r)\textrm{d}\x_r \\
    &= \int \int \pi_{\Y|\X}(\y|\x) \pi_{\X}(\x) \textrm{d}\y_\perp \textrm{d}\x
    = \int \pi_{\Y_s|\X}(\y_s|\x) \pi_{\X}(\x) \textrm{d}\x = \pi_{\Y_s}(\y_s).
\end{align*}
The KL divergence from the optimal posterior approximation to the true posterior in expectation over the data is then given by
\begin{align}
&\quad\,\, \mathbb{E}_{\Y} \left[\KLDiv(\Post(\cdot|\Y)||\OptPost(\cdot|\Y)) \right] \nonumber \\
&= \int \pi_{\X,\Y}(\x,\y) \log \frac{\Post(\x|\y)}{\OptPost(\x|\y)} \d\x\d\y \nonumber \\
&= \int \pi_{\X,\Y}(\x,\y) \log \frac{\pi_{\Y|\X}(\y|\x)\pi_{\X}(\x)/\pi_{\Y}(\y)}{\pi_{\Y_s|\X_r}(\y_s|\x_r)\pi_{\X}(\x)/\pi_{\Y_s}(\y_s)} \d\x\d\y \nonumber \\
&= \int \pi_{\X,\Y}(\x,\y) \log \frac{\pi_{\Y|\X}(\y|\x)}{\pi_{\Y}(\y)} \d\x\d\y - \int \pi_{\X,\Y}(\x,\y) \log \frac{\pi_{\Y_s|\X_r}(\y_s|\x_r)}{\pi_{\Y_s}(\y_s)} \d\y \nonumber \\
&= \underbrace{\int \pi_{\X,\Y}(\x,\y) \log \frac{\pi_{\Y|\X}(\y|\x)}{\pi_{\Y}(\y)} \d\x\d\y}_{I(\X;\Y)} - \underbrace{\int \pi_{\X_r,\Y_s}(\x_r,\y_s) \log \frac{\pi_{\Y_s|\X_r}(\y_s|\x_r)}{\pi_{\Y_s}(\y_s)} \d\x_r\d\y_s}_{I(\X_r;\Y_s)}. \label{eq:KLequiv_MI_diff} %
\end{align}
Lastly, from the chain rule for mutual information we have
\begin{align*}
    I(\X;\Y) - I(\X_r,\Y_s) = I(\X_\perp;\Y|\X_r) + I(\X,\Y_\perp|\Y_s) - I(\X_\perp;\Y_\perp|\X_r,\Y_s). %
\end{align*}
\end{proof}

\begin{proof}[\textbf{Proof of Proposition~\ref{prop:CMI_bound}}]
The proof is done in two steps. First we assume that $\Z = \emptyset$ so that $I(\X;\Y|\Z)=I(\X;\Y)$ becomes the mutual information.
We can write
\begin{align*}
I(X;Y)
&= \int h(\x,\y)\log\left(h(\x,\y)\right) \, \pi_\X(\x)\pi_\Y(\y) \d \x \d \y.
\end{align*}
where $h(x,y) = \pi_{X,Y}(x,y) / (\pi_X(x)\pi_Y(y) )$.
Because $\pi_{\X,\Y}$ satisfies the logarithmic Sobolev inequality with constant bounded by $\overline{C}(\Joint)$, we have that the product density $\pi_\X\otimes\pi_\Y$ also satisfies the logarithmic Sobolev inequality with constant bounded by $\overline{C}(\Joint)$; see~\citet[Theorem 4.4]{guionnet2003lectures}. Then, since $\int h(x,y) \pi_Y(y)\pi_X(x)   \d y\d x = \int \pi_{X,Y}(x,y) \d y\d x = 1$, the logarithmic Sobolev inequality gives
\begin{align}
I(\X;\Y)
&\leq \frac{\overline{C}(\Joint)}{2} \int  \|\nabla\log h(\x,\y)\|^2 \, h(\x,\y) \, \pi_\X(\x)\pi_\Y(\y)   \d \x \d \y \nonumber\\ 
&= \frac{\overline{C}(\Joint)}{2} \int \left\| \begin{pmatrix} \nabla_\X \log h(\x,\y) \\ \nabla_\Y \log h(\x,\y) \end{pmatrix} \right\|_2^2 \pi_{\X,\Y}(\x,\y)   \d \x \d \y \nonumber\\
&= \frac{\overline{C}(\Joint)}{2} \int \left(\int \left\| F_1(\x,\y) - \nabla_\X \log\pi_\X(\x) \right\|_2^2 \pi_{\Y|\X}(\y|\x) \d \y \right) \pi_X(x) \d \x  \nonumber \\
&\,+ \frac{\overline{C}(\Joint)}{2} \int\left(\int \left\| F_2(\x,\y) - \nabla_\Y \log\pi_\Y(\y) \right\|_2^2 \pi_{\X|\Y}(\x|\y) \d \x \right)\pi_Y(y)\d \y ,\label{eq:tmp23904793}
\end{align}
where $F_1(x,y) = \nabla_X \log \pi_{X,Y}(x,y)$ and $F_2(x,y) = \nabla_Y \log \pi_{X,Y}(x,y)$.
Next we are going to apply the Poincar\'{e} inequality to bound the two terms in the last expression.
Recall that the logarithmic Sobolev inequality~\eqref{eq:LSI} implies the Poincar\'{e} inequality (see~\citet[Proposition 5.1.3]{bakry2014analysis}) so we have 
\begin{equation}\label{eq:subspace_poincare}
\int \| f - \int f \d\rho \|_2^2 \,\d\rho \leq \overline{C}(\Joint) \int \|\nabla f\|_F^2 \, \d \rho ,
\end{equation}
for any smooth vector-valued function $f$, where $\rho$ can be either $\pi_{X|Y}(\cdot|y)$ or $\pi_{Y|X}(\cdot|x)$. Here, $\|\cdot\|_F$ denotes the Frobenius norm.
Notice for any $y\in\R^k$ we can write
\begin{align*}
\int F_1(x,y) \pi_{Y|X}(y|x) \d y
&= \int \nabla_X \log \pi_{X,Y}(x,y) \frac{\pi_{X,Y}(x,y)}{\pi_{X}(x)} \d y\\
&= \frac{\int \nabla_X \pi_{X,Y}(x,y) \d y }{\pi_{X}(x)} = \frac{\nabla_X \pi_{X}(x)}{\pi_{X}(x)} = \nabla_X \log \pi_{X}(x),
\end{align*}
which means that $ \nabla_X \log \pi_{X}(x)$ is the conditional expectation of $F_1(X,Y)$ conditioned on $X=x$.
In the same way, $\nabla_Y \log \pi_{Y}(y)$ is the conditional expectation of $F_2(X,Y)$ conditioned on $Y=y$.
This observations permits us to apply the Poincaré inequality \eqref{eq:subspace_poincare} to the two terms in \eqref{eq:tmp23904793}. This gives
\begin{align*}
I(X;Y)
&\leq \frac{\overline{C}(\Joint)}{2} \int \left(\overline{C}(\Joint)\int \left\| \nabla_Y F_1(\x,\y) \right\|_F^2 \pi_{\Y|\X}(\y|\x) \d \y \right) \pi_X(x) \d \x  \nonumber \\
&\,+ \frac{\overline{C}(\Joint)}{2} \int\left(\overline{C}(\Joint)\int \left\| \nabla_X F_2(\x,\y) \right\|_F^2 \pi_{\X|\Y}(\x|\y) \d \x \right)\pi_Y(y)\d \y \\
&= \frac{\overline{C}(\Joint)^2}{2} \int \left(\| \nabla_Y F_1(\x,\y) \|_F^2 + \| \nabla_X F_2(\x,\y) \|_F^2\right)\pi_{\X,\Y}(\x,\y) \d \y  \d \x . \nonumber 
\end{align*}
Because $(\nabla_Y F_1(\x,\y))^T = \nabla_X \nabla_Y \log \pi_{X,Y}(x,y) = \nabla_X F_2(\x,\y)$, we obtain
\begin{align}\label{eq:tmp2689}
I(X;Y) \leq \overline{C}(\Joint)^2 \int   \| \nabla_X \nabla_Y \log \pi_{X,Y}(x,y) \|_F^2 \pi_{\X,\Y}(\x,\y) \d \y  \d \x  ,
\end{align}
which is~\eqref{eq:CMI_UpperBound} when $Z=\emptyset$.

Now we assume that $Z\neq\emptyset$. Applying the previous result to the conditional density $\pi_{X,Y|Z}(x,y|z)$ yields
\begin{align*}
 \int &\pi_{\X,\Y|Z}(x,y|z)\log\left( \frac{\pi_{\X,\Y|Z}(x,y|z)}{\pi_{\X|Z}(\x|z)\pi_{\Y|Z}(\y|z)}\right) \d \x \d \y \\
 &\overset{\eqref{eq:tmp2689}}{\leq} \overline{C}(\pi_{X,Y|Z=z})^2 \int   \| \nabla_X \nabla_Y \log \pi_{X,Y|Z}(x,y|z) \|_F^2 \pi_{\X,\Y|Z}(\x,\y|z) \d \y  \d \x   \\
 &\;\leq\; \overline{C}(\pi_{X,Y,Z})^2 \int   \| \nabla_X \nabla_Y \log \pi_{X,Y,Z}(x,y,z) \|_F^2 \pi_{\X,\Y|Z}(\x,\y|z) \d \y  \d \x  ,
\end{align*}
where we used the fact that $\overline{C}(\pi_{X,Y|Z=z})\leq \overline{C}(\pi_{X,Y,Z})$ for all $z$; recall Equation~\eqref{eq:condLSI}.
Multiplying with the marginal $\pi_Z(z)$ and integrating over $z$ yields \eqref{eq:CMI_UpperBound} and concludes the proof.
\end{proof}

\begin{proof}[\textbf{Proof of Proposition~\ref{prop:CCA_connection}}]
For the linear model $\Y = G\X + \varepsilon$, the covariance of $\Y$ is $\Cov(\Y) = G\Cpr G^{T} + \Cobs$, and the cross-covariance is $\Cov(\X,\Y) = \Cpr G^T$. Hence, the eigenvalue problems in~\eqref{eq:CCAeigU} and~\eqref{eq:CCAeigV} can be written as
\begin{align}
    \Cpr G^T (G\Cpr G^T + \Cobs)^{-1} G \Cpr u_i^\CCA &= \rho_i \Cpr u_i^\CCA \label{eq:CCAeigU_Linear} \\
    G\Cpr G^T v_i^\CCA &= \rho_i (G\Cpr G^T + \Cobs) v_i^\CCA. \label{eq:CCAeigV_Linear} 
\end{align}
Using the Sherman-Morrison-Woodbury formula, we have
the matrix identity
$$G^T (G\Cpr G^T + \Cobs)^{-1} G = \Cpr^{-1}(\Cpr^{-1} + G^T\Cobs^{-1}G)^{-1} G^T\Cobs^{-1}G.$$
Applying this identity to the left hand side of~\eqref{eq:CCAeigU_Linear}, the eigenvalue problems %
in CCA are also given by
\begin{align*}
G^T\Cobs^{-1}G \Cpr u_i^\CCA &= \frac{\rho_i}{1-\rho_i} u_i^\CCA, \\
G\Cpr G^T v_i^\CCA &= \frac{\rho_i}{1-\rho_i} \Cobs v_i^\CCA.
\end{align*}
The vectors $u_i^\CCA,v_i^\CCA$  are also eigenvectors of~\eqref{eq:Gaussian_eig_param} and~\eqref{eq:Gaussian_eig_data}. Furthermore, given that $\rho_i/(1-\rho_i)$ is a monotonic function of $\rho_i \in [-1,1]$, the eigenvectors are ordered in the same way as the solutions to~\eqref{eq:CCAeigU_Linear} and~\eqref{eq:CCAeigV_Linear}. %
\end{proof}

\section{Additional calculations} \label{app:calculations}
\noindent \textbf{Gaussian subspace logarithmic Sobolev constant~\eqref{eq:MaxEig_Gaussian}}. 
We begin by computing the eigenvalues of the joint covariance matrix $\Cov(\X,\Y)$. The eigenvalues are given by the $d+m$ roots $s$ of the equation
$\det(\Cov(\X,\Y) - s\Id_{d+m}) = 0$. Without loss of generality, we let $d \geq m$.

From the matrix determinant lemma and the SVD of the forward model $G^T = U\Sigma V^T$ where $U \in \R^{d \times d}$, $V \in \R^{m \times m}$ are unitary matrices and $\Sigma \in \R^{d \times m}$ is a diagonal matrix containing zeros below row $m$, we have
\begin{align*}
\det(\Cov(\X,\Y) - s\Id_{d+m}) &= \det((1-s)\Id_d)\det(GG^T + \Id_m - s\Id_m - G(\Id_d - s\Id_d)^{-1}G^T). \\
&= \det((1-s)\Id_d)\det(V(\Sigma^2 +(1-s)\Id_m - \Sigma(1-s)^{-1}\Sigma)V^T) \\
&= \prod_{i=1}^{d} (1-s) \prod_{j=1}^{m} (\sigma_j^2(1 - (1-s)^{-1}) + (1-s)). \\
&= \prod_{i=m+1}^{d} (1-s) \prod_{j=1}^{m} ((1-s)^2 - \sigma_j^2 s).
\end{align*}
Then, there are $d-m$ roots $s = 1$ and $2m$ roots %
\begin{equation} \label{eq:roots_quadratic}
s = \frac{1}{2}\left(2 + \sigma_j^2 \pm \sigma_j \sqrt{\sigma_j^2 + 4} \right),
\end{equation}
from solving the quadratic equations $s^2 + (-2-\sigma_j)s + 1 = 0$ for $j = 1,\dots,m$. %
Given that $\sigma_j > 0$, the largest eigenvalues correspond to the roots with a positive sign. 
From~\eqref{eq:roots_quadratic} with a positive sign being a monotonic functions of $\sigma_j$, it follows that the maximum eigenvalue is given  by~\eqref{eq:MaxEig_Gaussian}.\\

\noindent \textbf{Expected KL divergence for a linear-Gaussian error model}~\eqref{eq:MIGaussian}. 
The difference of mutual information terms for Gaussian variables  %
is given by
\begin{equation} \label{eq:DiffMI_Gaussian}
I(\X,\Y) - I(\X_r,\Y_s) = \frac{1}{2} \log \frac{|\Gamma_{\X}|}{|\Gamma_{\X|\Y}|} - \frac{1}{2} \log \frac{|\Gamma_{\X_r}|}{|\Gamma_{\X_r|\Y_s}|}, %
\end{equation}
where $\Gamma$ represents a covariance. After the whitening transformations, we have the linear model $\Y = G\X + \varepsilon$ with $G \leftarrow \Cobs^{-1/2}G\Cpr^{1/2}$, $\Cov(\X) = \Id_d$ and $\Cov(\varepsilon) = \Id_m$. Then, the (conditional) covariances in~\eqref{eq:DiffMI_Gaussian} are given by
\begin{align*}
\Gamma_{\X} &= \Id_d \\
\Gamma_{\X_r} &= \BasisX_r^T \Id_d \BasisX_r \\ %
\Gamma_{\X|\Y} &= \Id_d - G^T(GG^T + \Id_m)^{-1}G \\
\Gamma_{\X_r|\Y_s} &= \BasisX_r^T \Id_d \BasisX_r - \BasisX_r^T G^T \BasisY_s(\BasisY_s^T G G^T \BasisY_s + \BasisY_s^T\BasisY_s)^{-1}\BasisY_s^T G \BasisX_r.
\end{align*}
Given that the eigenvectors of $H_\X$ and $H_\Y$ are the left and right singular vectors $U$ and $V$ for $G^T$, respectively, we have $\BasisX_r = [u_1,\dots,u_r]$ and $\BasisY_s = [v_1,\dots,v_s]$. %
Then, the conditional covariances can be simplified as
\begin{align*}
    \Gamma_{\X|\Y} &= \Id_d - U\Sigma^2(\Sigma^2 + \Id_q)^{-1}U^T \\ %
    \Gamma_{\X_r|\Y_s} &= \Id_r - U_{t}\Sigma_{t}^2(\Sigma_{t}^2 + \Id_t)^{-1}U_t^T,
\end{align*}
where $q = \min\{d,m\}$, $t = \min\{r,s\}$ and $\Sigma_t \in \R^{t \times t}$ is a diagonal matrix containing the first $t$ singular values of $G^T$. %
By computing the determinants of the (conditional) covariances, we have
\begin{align*}
    I(\X,\Y) - I(\X_r,\Y_s) &= -\frac{1}{2} \log\left|\Id_d - \Sigma^2(\Sigma^2 + \Id_q)^{-1}\right| + \frac{1}{2}\log\left|\Id_r - \Sigma_t^2(\Sigma_t^2 + \Id_t)^{-1}\right| \\
    &= \frac{1}{2}\sum_{i>t}^{q}\log(1 + \sigma_i^2).
\end{align*}

\bibliographystyle{ba}
\bibliography{references}

\begin{thebibliography}{57}
\newcommand{\enquote}[1]{``#1''}
\expandafter\ifx\csname natexlab\endcsname\relax\def\natexlab#1{#1}\fi
\expandafter\ifx\csname url\endcsname\relax
  \def\url#1{{\tt #1}}\fi
\expandafter\ifx\csname urlprefix\endcsname\relax\def\urlprefix{URL }\fi
\ifx\endbibitem\undefined \let\endbibitem\relax\fi

\bibitem[{Agapiou et~al.(2017)Agapiou, Papaspiliopoulos, Sanz-Alonso, and
  Stuart}]{agapiou2017importance}
Agapiou, S., Papaspiliopoulos, O., Sanz-Alonso, D., and Stuart, A.~M. (2017).
\newblock \enquote{Importance sampling: {I}ntrinsic dimension and computational
  cost.}
\newblock {\em Statistical Science\/}, 405--431.
\endbibitem

\bibitem[{Andrew et~al.(2013)Andrew, Arora, Bilmes, and
  Livescu}]{andrew2013deep}
Andrew, G., Arora, R., Bilmes, J., and Livescu, K. (2013).
\newblock \enquote{Deep canonical correlation analysis.}
\newblock In {\em International Conference on Machine Learning\/}, 1247--1255.
  PMLR.
\endbibitem

\bibitem[{Andrieu and Roberts(2009)}]{andrieu2009pseudo}
Andrieu, C. and Roberts, G.~O. (2009).
\newblock \enquote{The pseudo-marginal approach for efficient {M}onte {C}arlo
  computations.}
\newblock {\em The Annals of Statistics\/}, 37(2): 697--725.
\endbibitem

\bibitem[{Bakry and {\'E}mery(1985)}]{bakry1985diffusions}
Bakry, D. and {\'E}mery, M. (1985).
\newblock \enquote{Diffusions hypercontractives.}
\newblock In {\em Seminaire de probabilit{\'e}s XIX 1983/84\/}, 177--206.
  Springer.
\endbibitem

\bibitem[{Bakry et~al.(2014)Bakry, Gentil, Ledoux et~al.}]{bakry2014analysis}
Bakry, D., Gentil, I., Ledoux, M., et~al. (2014).
\newblock {\em Analysis and geometry of {M}arkov diffusion operators\/}, volume
  103.
\newblock Springer.
\endbibitem

\bibitem[{Baptista et~al.(2021)Baptista, Marzouk, Morrison, and
  Zahm}]{baptista2021learning}
Baptista, R., Marzouk, Y., Morrison, R., and Zahm, O. (2021).
\newblock \enquote{Learning non-{G}aussian graphical models via {H}essian
  scores and triangular transport.}
\newblock {\em arXiv:2101.03093\/}.
\endbibitem

\bibitem[{Baptista et~al.(2022)Baptista, Zahm, and
  Marzouk}]{baptista2020adaptive}
Baptista, R., Zahm, O., and Marzouk, Y. (2022).
\newblock \enquote{On the representation and learning of monotone triangular
  transport maps.}
\newblock {\em arXiv:2009.10303\/}.
\endbibitem

\bibitem[{Bigoni et~al.(2022)Bigoni, Marzouk, Prieur, and
  Zahm}]{bigoni2021nonlinear}
Bigoni, D., Marzouk, Y., Prieur, C., and Zahm, O. (2022).
\newblock \enquote{Nonlinear dimension reduction for surrogate modeling using
  gradient information.}
\newblock {\em Information and Inference: A Journal of the IMA\/}.
\endbibitem

\bibitem[{Bishop(1994)}]{bishop1994mixture}
Bishop, C.~M. (1994).
\newblock \enquote{Mixture density networks.}
\endbibitem

\bibitem[{Blei et~al.(2017)Blei, Kucukelbir, and
  McAuliffe}]{blei2017variational}
Blei, D.~M., Kucukelbir, A., and McAuliffe, J.~D. (2017).
\newblock \enquote{Variational inference: A review for statisticians.}
\newblock {\em Journal of the American statistical Association\/}, 112(518):
  859--877.
\endbibitem

\bibitem[{Brennan et~al.(2020)Brennan, Bigoni, Zahm, Spantini, and
  Marzouk}]{brennan2020greedy}
Brennan, M., Bigoni, D., Zahm, O., Spantini, A., and Marzouk, Y. (2020).
\newblock \enquote{Greedy inference with structure-exploiting lazy maps.}
\newblock {\em Advances in Neural Information Processing Systems\/}, 33:
  8330--8342.
\endbibitem

\bibitem[{Campbell and Broderick(2019)}]{campbell2019automated}
Campbell, T. and Broderick, T. (2019).
\newblock \enquote{Automated scalable {B}ayesian inference via {H}ilbert
  coresets.}
\newblock {\em The Journal of Machine Learning Research\/}, 20(1): 551--588.
\endbibitem

\bibitem[{Chen and Ghattas(2020)}]{chen2020projected}
Chen, P. and Ghattas, O. (2020).
\newblock \enquote{Projected {S}tein variational gradient descent.}
\newblock {\em Advances in Neural Information Processing Systems\/}, 33:
  1947--1958.
\endbibitem

\bibitem[{Chen et~al.(2020)Chen, Dwivedi, Wainwright, and Yu}]{chen2020fast}
Chen, Y., Dwivedi, R., Wainwright, M.~J., and Yu, B. (2020).
\newblock \enquote{Fast mixing of Metropolized Hamiltonian Monte Carlo:
  Benefits of multi-step gradients.}
\newblock {\em Journal of Machine Learning Research\/}, 21: 92--1.
\endbibitem

\bibitem[{Chewi et~al.(2021)Chewi, Lu, Ahn, Cheng, Le~Gouic, and
  Rigollet}]{chewi2021optimal}
Chewi, S., Lu, C., Ahn, K., Cheng, X., Le~Gouic, T., and Rigollet, P. (2021).
\newblock \enquote{Optimal dimension dependence of the {M}etropolis-adjusted
  {L}angevin algorithm.}
\newblock In {\em Conference on Learning Theory\/}, 1260--1300. PMLR.
\endbibitem

\bibitem[{Constantine et~al.(2016)Constantine, Kent, and
  Bui-Thanh}]{constantine2016accelerating}
Constantine, P.~G., Kent, C., and Bui-Thanh, T. (2016).
\newblock \enquote{Accelerating Markov chain {M}onte {C}arlo with active
  subspaces.}
\newblock {\em SIAM Journal on Scientific Computing\/}, 38(5): A2779--A2805.
\endbibitem

\bibitem[{Cui et~al.(2021)Cui, Dolgov, and Zahm}]{cui2021conditional}
Cui, T., Dolgov, S., and Zahm, O. (2021).
\newblock \enquote{Conditional Deep Inverse Rosenblatt Transports.}
\newblock {\em arXiv:2106.04170\/}.
\endbibitem

\bibitem[{Cui et~al.(2016)Cui, Law, and Marzouk}]{cui2016dimension}
Cui, T., Law, K.~J., and Marzouk, Y.~M. (2016).
\newblock \enquote{Dimension-independent likelihood-informed MCMC.}
\newblock {\em Journal of Computational Physics\/}, 304: 109--137.
\endbibitem

\bibitem[{Cui et~al.(2014)Cui, Martin, Marzouk, Solonen, and
  Spantini}]{cui2014likelihood}
Cui, T., Martin, J., Marzouk, Y.~M., Solonen, A., and Spantini, A. (2014).
\newblock \enquote{Likelihood-informed dimension reduction for nonlinear
  inverse problems.}
\newblock {\em Inverse Problems\/}, 30(11): 114015.
\endbibitem

\bibitem[{Cui and Tong(2021)}]{cui2021unified}
Cui, T. and Tong, X.~T. (2021).
\newblock \enquote{A unified performance analysis of likelihood-informed
  subspace methods.}
\newblock {\em Bernoulli, to appear\/}.
\endbibitem

\bibitem[{Cui and Zahm(2021)}]{cui2020data}
Cui, T. and Zahm, O. (2021).
\newblock \enquote{Data-free likelihood-informed dimension reduction of
  {B}ayesian inverse problems.}
\newblock {\em Inverse Problems\/}, 37(4): 045009.
\endbibitem

\bibitem[{Detommaso et~al.(2018)Detommaso, Cui, Marzouk, Spantini, and
  Scheichl}]{detommaso2018stein}
Detommaso, G., Cui, T., Marzouk, Y., Spantini, A., and Scheichl, R. (2018).
\newblock \enquote{A {S}tein variational {N}ewton method.}
\newblock {\em Advances in Neural Information Processing Systems\/}, 31.
\endbibitem

\bibitem[{Fearnhead and Prangle(2012)}]{fearnhead2012constructing}
Fearnhead, P. and Prangle, D. (2012).
\newblock \enquote{Constructing summary statistics for approximate {B}ayesian
  computation: semi-automatic approximate {B}ayesian computation.}
\newblock {\em Journal of the Royal Statistical Society: Series B (Statistical
  Methodology)\/}, 74(3): 419--474.
\endbibitem

\bibitem[{Giraldi et~al.(2018)Giraldi, Le~Ma{\^\i}tre, Hoteit, and
  Knio}]{giraldi2018optimal}
Giraldi, L., Le~Ma{\^\i}tre, O.~P., Hoteit, I., and Knio, O.~M. (2018).
\newblock \enquote{Optimal projection of observations in a {B}ayesian setting.}
\newblock {\em Computational Statistics \& Data Analysis\/}, 124: 252--276.
\endbibitem

\bibitem[{Guionnet and Zegarlinksi(2003)}]{guionnet2003lectures}
Guionnet, A. and Zegarlinksi, B. (2003).
\newblock \enquote{Lectures on logarithmic {S}obolev inequalities.}
\newblock In {\em S{\'e}minaire de probabilit{\'e}s XXXVI\/}, 1--134. Springer.
\endbibitem

\bibitem[{Hardoon et~al.(2004)Hardoon, Szedmak, and
  Shawe-Taylor}]{hardoon2004canonical}
Hardoon, D.~R., Szedmak, S., and Shawe-Taylor, J. (2004).
\newblock \enquote{Canonical correlation analysis: An overview with application
  to learning methods.}
\newblock {\em Neural computation\/}, 16(12): 2639--2664.
\endbibitem

\bibitem[{Holley and Stroock(1987)}]{holley1986logarithmic}
Holley, R. and Stroock, D. (1987).
\newblock \enquote{Logarithmic {S}obolev inequalities and stochastic {I}sing
  models.}
\newblock {\em Journal of Statistical Physics\/}, 46(5): 1159--1194.
\endbibitem

\bibitem[{Horn and Johnson(2012)}]{horn2012matrix}
Horn, R.~A. and Johnson, C.~R. (2012).
\newblock {\em Matrix analysis\/}.
\newblock Cambridge university press.
\endbibitem

\bibitem[{Hotelling(1933)}]{hotelling1933analysis}
Hotelling, H. (1933).
\newblock \enquote{Analysis of a complex of statistical variables into
  principal components.}
\newblock {\em Journal of educational psychology\/}, 24(6): 417.
\endbibitem

\bibitem[{Hotelling(1992)}]{hotelling1992relations}
--- (1992).
\newblock \enquote{Relations between two sets of variates.}
\newblock In {\em Breakthroughs in statistics\/}, 162--190. Springer.
\endbibitem

\bibitem[{Izmailov et~al.(2020)Izmailov, Maddox, Kirichenko, Garipov, Vetrov,
  and Wilson}]{izmailov2020subspace}
Izmailov, P., Maddox, W.~J., Kirichenko, P., Garipov, T., Vetrov, D., and
  Wilson, A.~G. (2020).
\newblock \enquote{Subspace inference for {B}ayesian deep learning.}
\newblock In {\em Uncertainty in Artificial Intelligence\/}, 1169--1179. PMLR.
\endbibitem

\bibitem[{Jagalur-Mohan and Marzouk(2021)}]{jagalur2021batch}
Jagalur-Mohan, J. and Marzouk, Y. (2021).
\newblock \enquote{Batch greedy maximization of non-submodular functions:
  Guarantees and applications to experimental design.}
\newblock {\em Journal of Machine Learning Research\/}, 22(252): 1--62.
\endbibitem

\bibitem[{Jolliffe(2002)}]{jolliffe2002principal}
Jolliffe, I.~T. (2002).
\newblock {\em Principal component analysis\/}.
\newblock Springer.
\endbibitem

\bibitem[{Kellerer et~al.(2004)Kellerer, Pferschy, and
  Pisinger}]{kellerer2004introduction}
Kellerer, H., Pferschy, U., and Pisinger, D. (2004).
\newblock \enquote{Introduction to {NP}-{C}ompleteness of knapsack problems.}
\newblock In {\em Knapsack problems\/}, 483--493. Springer.
\endbibitem

\bibitem[{Kovachki et~al.(2020)Kovachki, Baptista, Hosseini, and
  Marzouk}]{kovachki2020conditional}
Kovachki, N., Baptista, R., Hosseini, B., and Marzouk, Y. (2020).
\newblock \enquote{Conditional sampling with monotone {GAN}s.}
\newblock {\em arXiv:2006.06755\/}.
\endbibitem

\bibitem[{Krause et~al.(2008)Krause, Singh, and Guestrin}]{krause2008near}
Krause, A., Singh, A., and Guestrin, C. (2008).
\newblock \enquote{Near-optimal sensor placements in {G}aussian processes:
  Theory, efficient algorithms and empirical studies.}
\newblock {\em Journal of Machine Learning Research\/}, 9(2).
\endbibitem

\bibitem[{Lam et~al.(2020)Lam, Zahm, Marzouk, and
  Willcox}]{lam2020multifidelity}
Lam, R.~R., Zahm, O., Marzouk, Y.~M., and Willcox, K.~E. (2020).
\newblock \enquote{Multifidelity dimension reduction via active subspaces.}
\newblock {\em SIAM Journal on Scientific Computing\/}, 42(2): A929--A956.
\endbibitem

\bibitem[{Lueckmann et~al.(2019)Lueckmann, Bassetto, Karaletsos, and
  Macke}]{lueckmann2019likelihood}
Lueckmann, J.-M., Bassetto, G., Karaletsos, T., and Macke, J.~H. (2019).
\newblock \enquote{Likelihood-free inference with emulator networks.}
\newblock In {\em Symposium on Advances in Approximate Bayesian Inference\/},
  32--53. PMLR.
\endbibitem

\bibitem[{Ma et~al.(2013)Ma, Gulani, Seiberlich, Liu, Sunshine, Duerk, and
  Griswold}]{ma2013magnetic}
Ma, D., Gulani, V., Seiberlich, N., Liu, K., Sunshine, J.~L., Duerk, J.~L., and
  Griswold, M.~A. (2013).
\newblock \enquote{Magnetic resonance fingerprinting.}
\newblock {\em Nature\/}, 495(7440): 187--192.
\endbibitem

\bibitem[{Mangoubi and Vishnoi(2019)}]{mangoubi2019nonconvex}
Mangoubi, O. and Vishnoi, N.~K. (2019).
\newblock \enquote{Nonconvex sampling with the {M}etropolis-adjusted {L}angevin
  algorithm.}
\newblock In {\em Conference on Learning Theory\/}, 2259--2293. PMLR.
\endbibitem

\bibitem[{Marzouk et~al.(2016)Marzouk, Moselhy, Parno, and
  Spantini}]{Marzouk2016}
Marzouk, Y., Moselhy, T., Parno, M., and Spantini, A. (2016).
\newblock {\em Sampling via Measure Transport: An Introduction\/}, 1--41.
\newblock Cham: Springer International Publishing.
\endbibitem

\bibitem[{Michaeli et~al.(2016)Michaeli, Wang, and
  Livescu}]{michaeli2016nonparametric}
Michaeli, T., Wang, W., and Livescu, K. (2016).
\newblock \enquote{Nonparametric canonical correlation analysis.}
\newblock In {\em International Conference on Machine Learning\/}, 1967--1976.
  PMLR.
\endbibitem

\bibitem[{Papamakarios and Murray(2016)}]{papamakarios2016fast}
Papamakarios, G. and Murray, I. (2016).
\newblock \enquote{Fast $\varepsilon$-free inference of simulation models with
  {B}ayesian conditional density estimation.}
\newblock {\em Advances in Neural Information Processing Systems\/}, 29.
\endbibitem

\bibitem[{Provost et~al.(2022)Provost, Baptista, Marzouk, and
  Eldredge}]{provost2022low}
Provost, M.~L., Baptista, R., Marzouk, Y., and Eldredge, J.~D. (2022).
\newblock \enquote{A low-rank ensemble {K}alman filter for elliptic
  observations.}
\newblock {\em arXiv:2203.05120\/}.
\endbibitem

\bibitem[{Radev et~al.(2020)Radev, Mertens, Voss, Ardizzone, and
  K{\"o}the}]{radev2020bayesflow}
Radev, S.~T., Mertens, U.~K., Voss, A., Ardizzone, L., and K{\"o}the, U.
  (2020).
\newblock \enquote{BayesFlow: Learning complex stochastic models with
  invertible neural networks.}
\newblock {\em IEEE transactions on neural networks and learning systems\/}.
\endbibitem

\bibitem[{Rezende and Mohamed(2015)}]{rezende2015variational}
Rezende, D. and Mohamed, S. (2015).
\newblock \enquote{Variational inference with normalizing flows.}
\newblock In {\em International Conference on Machine Learning\/}, 1530--1538.
  PMLR.
\endbibitem

\bibitem[{Roberts and Rosenthal(2001)}]{roberts2001optimal}
Roberts, G.~O. and Rosenthal, J.~S. (2001).
\newblock \enquote{Optimal scaling for various {M}etropolis-{H}astings
  algorithms.}
\newblock {\em Statistical science\/}, 16(4): 351--367.
\endbibitem

\bibitem[{Sisson et~al.(2018)Sisson, Fan, and Beaumont}]{sisson2018handbook}
Sisson, S.~A., Fan, Y., and Beaumont, M. (2018).
\newblock {\em Handbook of approximate {B}ayesian computation\/}.
\newblock CRC Press.
\endbibitem

\bibitem[{Smetana and Zahm(2020)}]{smetana2020randomized}
Smetana, K. and Zahm, O. (2020).
\newblock \enquote{Randomized residual-based error estimators for the proper
  generalized decomposition approximation of parametrized problems.}
\newblock {\em International Journal for Numerical Methods in Engineering\/},
  121(23): 5153--5177.
\endbibitem

\bibitem[{Spantini et~al.(2022)Spantini, Baptista, and
  Marzouk}]{spantini2019coupling}
Spantini, A., Baptista, R., and Marzouk, Y. (2022).
\newblock \enquote{Coupling techniques for nonlinear ensemble filtering.}
\newblock {\em SIAM Review, to appear\/}.
\endbibitem

\bibitem[{Spantini et~al.(2015)Spantini, Solonen, Cui, Martin, Tenorio, and
  Marzouk}]{spantini2015optimal}
Spantini, A., Solonen, A., Cui, T., Martin, J., Tenorio, L., and Marzouk, Y.
  (2015).
\newblock \enquote{Optimal low-rank approximations of {B}ayesian linear inverse
  problems.}
\newblock {\em SIAM Journal on Scientific Computing\/}, 37(6): A2451--A2487.
\endbibitem

\bibitem[{Trippe et~al.(2019)Trippe, Huggins, Agrawal, and
  Broderick}]{trippe2019lr}
Trippe, B., Huggins, J., Agrawal, R., and Broderick, T. (2019).
\newblock \enquote{{LR-GLM}: High-dimensional {B}ayesian inference using
  low-rank data approximations.}
\newblock In {\em International Conference on Machine Learning\/}, 6315--6324.
  PMLR.
\endbibitem

\bibitem[{Wu et~al.(2020)Wu, Chen, and Ghattas}]{wu2020fast}
Wu, K., Chen, P., and Ghattas, O. (2020).
\newblock \enquote{A fast and scalable computational framework for large-scale
  and high-dimensional {B}ayesian optimal experimental design.}
\newblock {\em arXiv:2010.15196\/}.
\endbibitem

\bibitem[{Zahm(2015)}]{zahm2015model}
Zahm, O. (2015).
\newblock \enquote{Model order reduction methods for parameter-dependent
  equations--Applications in Uncertainty Quantification.}
\newblock Ph.D. thesis, \'{E}cole Centrale de Nantes (ECN).
\endbibitem

\bibitem[{Zahm et~al.(2020)Zahm, Constantine, Prieur, and
  Marzouk}]{zahm2020gradient}
Zahm, O., Constantine, P.~G., Prieur, C., and Marzouk, Y.~M. (2020).
\newblock \enquote{Gradient-based dimension reduction of multivariate
  vector-valued functions.}
\newblock {\em SIAM Journal on Scientific Computing\/}, 42(1): A534--A558.
\endbibitem

\bibitem[{Zahm et~al.(2022)Zahm, Cui, Law, Spantini, and
  Marzouk}]{zahm2018certified}
Zahm, O., Cui, T., Law, K., Spantini, A., and Marzouk, Y. (2022).
\newblock \enquote{Certified dimension reduction in nonlinear {B}ayesian
  inverse problems.}
\newblock {\em Mathematics of Computation\/}, 91: 1789--1835.
\endbibitem

\bibitem[{Zienkiewicz et~al.(2000)Zienkiewicz, Taylor, Taylor, and
  Taylor}]{zienkiewicz2000finite}
Zienkiewicz, O.~C., Taylor, R.~L., Taylor, R.~L., and Taylor, R.~L. (2000).
\newblock {\em The finite element method: solid mechanics\/}, volume~2.
\newblock Butterworth-heinemann.
\endbibitem

\end{thebibliography}

\end{document}